\newcommand{\ra}{\rightarrow}
\newcommand{\lra}{\leftrightarrow}
\newcommand{\bv}{\bigvee}
\newcommand{\bw}{\bigwedge}
\newcommand{\GXY}{G(X,Y,\R)}
\newcommand{\cF}{\mathcal F}
\newcommand{\cI}{\mathcal I}
\newcommand{\Pe}{(e_X,e_Y,\R)}
\newcommand{\PXY}{(X,Y,\R)}
\newcommand{\XUY}{X\cup_\preceq Y}
\newcommand{\XUpY}{X\uplus_\preceq Y}
\newcommand{\XUgY}{X\uplus_{\hRg} Y}
\newcommand{\MX}{\overline{X}}
\newcommand{\MY}{\overline{Y}}
\newcommand{\DM}{\mathcal{N}}
\newcommand{\cN}{\mathcal N}
\newcommand{\pol}{\mathrm{Pol}}
\newcommand{\del}{\mathrm{Del}}
\DeclareMathOperator{\Rz}{\preceq_0}
\DeclareMathOperator{\hRm}{\preceq_m}
\DeclareMathOperator{\hRg}{\preceq_g}
\DeclareMathOperator{\nothRg}{\not\preceq_g}
\DeclareMathOperator{\bR}{\overline{R}}
\DeclareMathOperator{\R}{R}
\DeclareMathOperator{\oS}{S}
\DeclareMathOperator{\uS}{\underline{S}}
\newcommand{\bbQ}{\mathbb Q}
\newcommand{\bbR}{\mathbb R}
\newcommand{\id}{\mathrm{id}}
\DeclareMathOperator{\amp}{\&}
\let\orgdescriptionlabel\descriptionlabel
\renewcommand*{\descriptionlabel}[1]{%
  \let\orglabel\label
  \let\label\@gobble
  \phantomsection
  \edef\@currentlabel{#1}%
  \let\label\orglabel
  \orgdescriptionlabel{#1}%
}
\theoremstyle{plain}
\newtheorem{thm}{Theorem}[section]
\newtheorem{prop}[thm]{Proposition}
\newtheorem{cor}[thm]{Corollary}
\newtheorem{lemma}[thm]{Lemma}
\newtheorem{ex}[thm]{Example}
\theoremstyle{definition}
\newtheorem{defn}[thm]{Definition}
\title{Order polarities}
\author{Rob Egrot}
\date{}
\begin{document}
\begin{abstract}
We define an \emph{order polarity} to be a polarity $(X,Y,\R)$ where $X$ and $Y$ are partially ordered, and we define an \emph{extension polarity} to be a triple $\Pe$ such that $e_X:P\to X$ and $e_Y:P\to Y$ are poset extensions and $(X,Y,\R)$ is an order polarity. We define a hierarchy of increasingly strong coherence conditions for extension polarities, each equivalent to the existence of a preorder structure on $X\cup Y$ such that the natural embeddings, $\iota_X$ and $\iota_Y$, of $X$ and $Y$, respectively, into $X\cup Y$ preserve the order structures of $X$ and $Y$ in increasingly strict ways. We define a Galois polarity to be an extension polarity satisfying the strongest of these coherence conditions, and where $e_X$ and $e_Y$ are meet- and join-extensions respectively. We show that for such polarities the corresponding preorder on $X\cup Y$ is unique.  We define morphisms for polarities, providing the class of Galois polarities with the structure of a category, and we define an adjunction between this category and the category of $\Delta_1$-completions and appropriate homomorphisms.     
\end{abstract}

\subjclass[2010]{Primary 03G10, 06B23 Secondary 18A40, 	06C25, 68T30}

\keywords{Polarity, canonical extension, completion, intermediate structure}

\maketitle

\section{Introduction} 
\subsection{Background}
The concept of a \emph{polarity}, i.e. a pair of sets $X$ and $Y$ and a relation $\R$ between them, was known to Birkhoff at least as far back as 1940 \cite{Bir40}. While, according to \cite[p122]{Bir95}, originally defined as a generalization of the dual isomorphism between polars in analytic geometry, the generality of the definition has lent itself to diverse applications in mathematics and computer science. For example, polarities under the name of \emph{formal concepts} are fundamental in formal concept analysis \cite{GanWil99}. As another example, polarities appear bearing the name \emph{classification} in the theory of information classification \cite[Lecture 4]{BarSel97}, where they are again a foundational concept.  

For a more purely mathematical application, a particular kind of polarity, referred to as a \emph{polarization}, was used in \cite{Tun74} to produce poset completions. The same paper also proves various results connecting properties of polarizations with properties of the resulting completion. More recently, this technique has been exploited to construct canonical extensions for bounded lattice expansions \cite{GehHar01}, and also for posets \cite{DGP05}, where they provide a tool for `completeness via canonicity' results for substructural logics. Something similar also appears implicitly in \cite{GhiMel97}, though neither polarizations nor polarities in general are mentioned explicitly.

The general idea behind these completeness results is, given a poset $P$ equipped with additional operations that are either order preserving or reversing in each coordinate, to show that there exists a completion of $P$ to which the additional operations can be extended. The roots of this technique appear in \cite{JonTar51}, though not in the context of `completeness via canonicity' results, as a generalization of Stone's representation theorem \cite{Stone36} to Boolean algebras with operators (BAOs). The approach there was to first (non-constructively) dualize to relational structures, then construct the canonical extension from these. 

Early generalizations to distributive lattices used Priestly duality \cite{Pr70,Pri72} in a similar way (see for example \cite{GehJon94,S-S00a,S-S00b}). More recent approaches using polarities bypass the dual construction, which is significantly more complicated outside of the distributive setting, and have the additional advantage of being constructive \cite{GehHar01,DGP05}. Indeed, an innovation of \cite{DGP05} is to use the canonical extension of a poset to \emph{construct} a dual, which can then play the same role in providing completeness results for substructural logics as the canonical frame does in the modal setting (see e.g. \cite[Chapters 4 and 5]{BRV01}). For more on the development of the theory of canonical extensions see, for example,  \cite{GehVos11} or the introduction to \cite{Gol18}.

We note that for operations that are not \emph{operators} in the sense of \cite{JonTar51}, the canonical extension construction is ambiguous, as there are often several non-equivalent choices for the lifts of each operation, each of which may be `correct' depending on the situation (see for example the epilogue of \cite{GehJon04} for a brief discussion of this). Moreover, for posets, what is meant by \emph{the} canonical extension is even less clear than it is in the lattice case. This is a consequence of ambiguities surrounding the notions of `filters' and `ideals' in the more general setting. See \cite{Mor14} for a thorough investigation of this issue. 

For canonical extensions in their various guises to play a role in `completeness via canonicity' arguments, general results concerning the preservation of equations and inequalities are extremely useful. Some results of this sort can be found in \cite{Suz11a,Suz11b}, where arguments from \cite{GhiMel97} are extended to more general settings. One component of these arguments is the exploitation of the so called \emph{intermediate structure}, an extension of the original poset intermediate between it and the canonical extension. The idea is that operations are, in a sense, lifted first to the intermediate structure, and then to the canonical extension.    

More generally, the class of $\Delta_1$-completions \cite{GJP13} includes canonical extensions (however we define them), and also others such as the MacNeille (aka \emph{Normal}) completion. Given a poset $P$, the $\Delta_1$-completions of $P$ are, modulo suitable concepts of isomorphism, in one-to-one correspondence with certain kinds of polarities constructed from $P$ \cite[Theorem 3.4]{GJP13}. Here also the intermediate structure appears. Indeed, a $\Delta_1$-completion is the MacNeille completion of its intermediate structure \cite[Section 3]{GJP13}. 

\subsection{What is done here}
In the existing literature, the intermediate structure emerges almost coincidentally from the construction of a completion. Given a polarity $(X, Y, \R)$, first a complete lattice $\GXY$ is constructed using the antitone Galois connection between $\wp(X)$ and $\wp(Y)$ induced by $\R$, as we explain in more detail in Section \ref{S:pol1}. The intermediate structure is then found sitting inside it as a subposet. There are natural maps from $X$ and $Y$ into the intermediate structure, and, if these are injective, partial orderings are thus induced on $X$ and $Y$.  When $\GXY$ is an extension of a poset $P$, it will also follow that $X$ and $Y$ are extensions of $P$. It turns out that the preorder on $X\cup Y$ induced by the intermediate structure agrees with $\R$ on $X\times Y$. 

The broad goal of this paper is to take the idea of a polarity involving order extensions $e_X:P\to X$ and $e_Y:P\to Y$ as primitive, and develop a theory from this. More explicitly, we are interested in the interaction between the relation $\R$ and the orders on $X$ and $Y$, and, in particular, under what circumstances something corresponding to the `intermediate structure' can be defined on $X\cup Y$. This issue raises several questions, depending on exactly what properties we think an `intermediate structure' should have.

Based on our answers to these questions, we define a sequence of so-called \emph{coherence conditions} for polarities. The choice of name here comes from the idea that we have three things, $\leq_X$, $\leq_Y$ and $\R$, all giving us information about a possible preorder on $X\cup Y$ with certain properties, and we want this information to not contradict itself. The bulk of this work is done in Section \ref{S:order}, where the main definitions are made, and in Section \ref{S:satisfaction}, where, among other things, we prove our defined conditions are strictly increasing in strength. 

In Section \ref{S:Galois} we define a \emph{Galois polarity} to be a triple $\Pe$ satisfying the strongest of our coherence conditions, and with the additional property that $e_X$ is a meet-extension, and $e_Y$ is a join-extension. The `aptness' of this definition is partly demonstrated by the fact that, if $\Pe$ is a Galois polarity, there is one and only one possible preorder structure on $X\cup Y$ that agrees with the orders on $X$ and $Y$, agrees with $\R$ on $X\times Y$, and also preserves meets and joins from the base poset $P$ (see Theorem \ref{T:unique} for a more precise statement).

Galois polarities are studied further in Section \ref{S:GaloiS1}. First we justify the choice of terminology by demonstrating that, for Galois polarities, the unique preorder structure described above can be defined in terms of a Galois connection between any join-preserving join-completion of $Y$ and any meet-preserving meet-completion of $X$. This requires some technical results on extending and restricting polarity relations, which we provide in Section \ref{S:ExtRes}. Here we investigate the `simplest' way we might hope to extend a relation between posets to a relation between meet- and join-extensions of these posets, and conversely the simplest way we might restrict a relation between extensions to a relation between the original posets. In particular we prove that it is rather common for coherence properties of a polarity to be preserved by extension and restriction as we define them.    

By defining suitable morphisms, we can equip the class of Galois polarities with the structure of a category. This can be seen as a generalization of the concept of a $\delta$-homomorphism from \cite[Section 4]{GehPri08}. We define an adjunction between this category and the category of $\Delta_1$-completions (see Theorem \ref{T:adj}). This produces the correspondence between $\Delta_1$-completions of a poset and certain kinds of polarities from \cite[Theorem 3.4]{GJP13} via the categorical equivalence between fixed subcategories.

In the long term we imagine handling lifting of operations, and the preservation of inequalities and so on, to `intermediate structures' induced by polarities, and Galois polarities in particular. This is, of course, not an entirely new idea. Indeed, we have mentioned previously that lifting operations to canonical extensions is often done by first lifting to the intermediate structure. The hope is that, by shifting the focus a little from intermediate structures as they emerge in the construction of completions, to intermediate structures as algebraic objects of interest in their own right, some new insight might be gained. However, to control the length of this document, we leave the pursuit of this rather vague goal to future work.

\section{Orders and completions}\label{S:comp}
\subsection{A note on notation}
We use the following not entirely standard notations:
\begin{itemize}
\item Given a poset $P$ and $p\in P$, we define 
\[p^\uparrow = \{q\in P: q\geq p\} \text{ and }p^\downarrow= \{q\in P: q\leq p\}.\]
\item Given a function $f:X\to Y$, and given $S\subseteq X$, we define 
\[f[S]=\{f(x):x\in S\}.\]
\item With $f$ as above and with $y\in Y$ and $T\subseteq Y$ we define 
\[f^{-1}(y) = \{x\in X: f(x)=y\}\] and \[f^{-1}(T)=\{x\in X:f(x)\in T\}.\]
\item If $P$ is a poset, then $P^\partial$ is the order dual of $P$.
\item If $X$ and $Y$ are sets, then we may refer to a relation $\R\subseteq X\times Y$ as being a \emph{relation on $X\times Y$}.
\end{itemize}

\subsection{Extensions and completions}
We assume familiarity with the basics of order theory. Textbook exposition can be found in \cite{DavPri02}. In this subsection we provide a brisk introduction to some more advanced order theory concepts. This serves primarily to establish the notation we will be using.  

\begin{defn}Let $P$ and $Q$ be posets. We say an order embedding $e:P\to Q$ is a \textbf{poset extension}, or just an \emph{extension}. If $Q$ is also a complete lattice we say $e$ is a \textbf{completion}. If for all $q\in Q$ we have $q = \bw e[e^{-1}(q^\uparrow)]$, then we say $e$ is a \textbf{meet-extension}, or a \textbf{meet-completion} if $Q$ is a complete lattice. Similarly, if $q = \bv e[e^{-1}(q^\downarrow)]$ for all $q\in Q$, then $e$ is a \textbf{join-extension}, or a \textbf{join-completion} when $Q$ is complete.
\end{defn}

Note that it is common in the literature to refer to completions using the codomain of the function. For example, we might say ``$Q$ is a completion of $P$" when talking about the completion $e:P\to Q$. This has the disadvantage of obfuscating the issue of what it means for two extensions to be isomorphic, as an isomorphism between codomains is not sufficient for extensions to be isomorphic in the sense used here (see below). This rarely causes significant problems in practice, as it is usually clear from context what kind of isomorphism is required. However, we find the identification of extensions with maps to be more elegant, and will generally use this approach.    

\begin{defn}[Morphisms between order preserving maps and extensions]\label{D:mapCat}
Given posets $P_1, P_2, Q_1, Q_2$, and  order preserving maps $f_1:P_1\to Q_1$ and $f_2:P_2\to Q_2$, a map, or morphism, from $f_1$ to $f_2$ is a pair of order preserving maps $g_P:P_1\to P_2$ and $g_Q:Q_1\to Q_2$ such that the diagram in Figure \ref{F:mapHom} commutes. If $g_p$ and $g_Q$ are both order isomorphisms, then we say $f_1$ and $f_2$ are isomorphic. 

If $f_1:P\to Q_1$ and $f_2:P\to Q_2$ are extensions of a poset $P$, then $f_1$ and $f_2$ are \textbf{isomorphic as extensions of $P$} if they are isomorphic in the sense described above and the map $g_P$ is the identity on $P$.
\end{defn}

Definition \ref{D:mapCat} equips the class of order preserving maps between posets, and in particular the subclass of poset extensions, with the structure of a category. We will make frequent use of the idea of extensions being isomorphic, and we will return to the idea of a category of extensions in Section \ref{S:cat}. 

\begin{figure}[htbp]
\[\xymatrix{ P_1\ar[r]^{f_1}\ar[d]_{g_P} & Q_1\ar[d]^{g_Q} \\
P_2\ar[r]_{f_2} & Q_2
}\] 
\caption{}
\label{F:mapHom}
\end{figure}

\begin{defn}\label{D:DM}
Given a poset $P$, the \textbf{MacNeille completion} of $P$ is a map $e:P\to \cN(P)$ that is both a meet- and a join-completion.
\end{defn}

The MacNeille completion was introduced in \cite{Mac37} as a generalization of Dedekind's construction of $\bbR$ from $\bbQ$. It is unique up to isomorphism. The characterization used here is due to \cite{BanBru67}. See e.g. \cite[Section 7.38]{DavPri02} for more information. Note that a MacNeille completion $e$ is completely join- and meet-preserving.

\begin{defn}\label{D:can}
The \textbf{canonical extension} of a lattice $L$ is a completion $e:L\to L^\delta$ such that:
\begin{enumerate} 
\item $e[L]$ is \emph{dense} in $L^\delta$. I.e. Every element of $L^\delta$ is expressible both as a join of meets, and as a meet of joins, of elements of $e[L]$.
\item $e$ is \emph{compact}. I.e. for all $S, T\subseteq L$, if $\bw e[S]\leq \bv e[T]$, then there are finite $S'\subseteq S$ and $T'\subseteq T$ with $\bw S'\leq \bv T'$. 
\end{enumerate}
\end{defn}

Canonical extensions are also unique up to isomorphism. This characterization, and the proof that such a completion exists for all $L$, is due to \cite{GehHar01}. It generalizes the definitions of the canonical extension for Boolean algebras \cite{JonTar51}, and distributive lattices \cite{GehJon94}. The construction used in \cite{GehHar01} can, as noted in Remark 2.8 of that paper, also be used for posets, and will again result in a dense completion. However, the kind of compactness obtained is weaker. This idea is expanded upon in \cite{DGP05}. The differences between the lattice and poset cases arise from the fact that definitions for filters and ideals which are equivalent for lattices are not so for posets. This issue is discussed in detail in \cite{Mor14}. One way to address this systematically is to talk about \emph{the canonical extension of $P$ with respect to $\cF$ and $\cI$}, where $\cF$ and $\cI$ are sets of `filters' and `ideals' of $P$ respectively. By making the definitions of `filter' and `ideal' weak enough, this allows all notions of the canonical extension of a poset to be treated in a uniform fashion. This is the approach taken in \cite{MorVanA18}, for example.   

\begin{defn}\label{D:Del}
Given a poset $P$, a $\Delta_1$\textbf{-completion} of $P$ is a completion $e:P\to D$ such that $e[P]$ is dense in $D$.
\end{defn}

$\Delta_1$-completions, introduced in \cite{GJP13}, include both MacNeille completions and canonical extensions. As such they are not usually unique up to isomorphism, so it doesn't make sense to talk about \emph{the} $\Delta_1$-completion.  

\begin{defn}
Let $P$ and $Q$ be posets. Then a \textbf{monotone Galois connection}, or just a \emph{Galois connection}, between $P$ and $Q$ is a pair of order preserving maps $\alpha:P\to Q$ and $\beta:Q\to P$ such that, for all $p\in P$ and $q\in Q$, we have
\[\alpha(p)\leq q \iff p\leq \beta(q).\]
The map $\alpha$ is the \textbf{left adjoint}, and $\beta$ is the \textbf{right adjoint}. 

An \textbf{antitone Galois connection} between $P$ and $Q$ is a Galois connection between $P$ and the order dual, $Q^\partial$, of $Q$.
\end{defn}

\begin{defn}
A \textbf{preorder} on a set is a binary relation that is reflexive and transitive. Every preorder induces a \textbf{canonical partial order} by identifying pairs of elements that break anti-symmetry.
\end{defn}

\subsection{Polarities for completions}\label{S:pol1}

Following \cite{Bir40}, we define a \textbf{polarity} to be a triple $\PXY$, where $X$ and $Y$ are sets, and $\R\subseteq X\times Y$ is a binary relation. For convenience we will assume also that $X$ and $Y$ are disjoint. See the section on polarities in \cite{EKMS93} for several examples. Polarities have also been called \emph{polarity frames} \cite{Suz14}. Given any polarity $\PXY$, there is an antitone Galois connection between $\wp(X)$ and $\wp(Y)$. This is given by the order reversing maps $(-)^R:\wp(X)\to\wp(Y)$ and ${}^R(-):\wp(Y)\to\wp(X)$ defined as follows:
\[(S)^R = \{y\in Y: x\R y \text{ for all } x\in S\}.\]
\[{}^R(T) = \{x\in X: x\R y \text{ for all } y\in T\}.\]

The set $\GXY$ of subsets of $X$ that are fixed by the composite map ${}^R(-)\circ (-)^R$ is a complete lattice. Indeed, this is a closure operator on $\wp(X)$. 

Polarities in the special case where $X$ and $Y$ are sets of subsets of some common set $Z$, where the relation $\R$ is that of non-empty intersection, and which also satisfy some additional conditions, have been referred to as \emph{polarizations} in the literature \cite{Tun74, MorVanA18}. Polarizations play an important role in the construction of canonical extensions. 

There are maps $\Xi:X\to\GXY$ and $\Upsilon:Y\to \GXY$ defined by:
\[\Xi(x) = {}^R(\{x\}^R)\text{ for $x\in X$, and}\]  
\[\Upsilon(y) = {}^R\{y\} \text{ for $y\in Y$}.\]
$\Xi[X]$ and $\Upsilon[Y]$ join- and meet-generate $\GXY$ respectively \cite[Proposition 2.10]{Geh06}. Moreover, the (not usually disjoint) union $\Xi[X]\cup\Upsilon[Y]$ inherits an ordering from $\GXY$. Thus the inclusion of the poset $\Xi[X]\cup\Upsilon[Y]$ into $\GXY$ can be characterized as the MacNeille completion of $\Xi[X]\cup\Upsilon[Y]$. The order on $\Xi[X]\cup\Upsilon[Y]$ can be defined without first constructing $\GXY$. We expand on this in Proposition \ref{P:order} below. 

\begin{prop}\label{P:order}
Define a preorder $\preceq$ on $\Xi[X]\cup\Upsilon[Y]$ as follows: 
\begin{enumerate}
\item $\forall x_1,x_2\in X\Big(\Xi(x_1)\preceq \Xi(x_2)\lra \forall y\in Y\big(x_2 \R  y\ra x_1 \R  y\big)\Big)$.
\item $\forall y_1,y_2\in Y\Big(\Upsilon(y_1)\preceq \Upsilon(y_2)\lra \forall x\in X\big(x\R  y_1\ra x\R y_2\big)\Big)$.
\item $(\forall x\in X)(\forall y\in Y)\Big(\Xi(x)\preceq\Upsilon(y)\lra x\R y\Big)$.
\item \[(\forall x\in X)(\forall y\in Y)\Big(\Upsilon(y)\preceq\Xi(x)\lra(\forall x'\in X)(\forall y'\in Y)\big((x'\R y\amp x\R y')\ra x'\R y'\big)\Big).\] 
\end{enumerate}
The partial ordering of $\Xi[X]\cup\Upsilon[Y]$ inherited from $\GXY$ is the canonical partial order induced by $\preceq$. 
\end{prop}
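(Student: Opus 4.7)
The plan is to show that each of the four defining clauses of $\preceq$ precisely encodes the inclusion relation on $\GXY$ when restricted to the relevant subset, and then to observe that the passage from preorder to canonical partial order is vacuous because inclusion on $\GXY$ is already anti-symmetric.

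The structural fact I would lean on is that $\GXY$ is the lattice of fixed points of the closure operator $c = {}^R((-)^R)$ on $\wp(X)$, ordered by inclusion. Thus $\Xi(x) = c(\{x\})$ is the smallest Galois-closed subset of $X$ containing $x$, while $\Upsilon(y) = {}^R\{y\}$ is automatically Galois-closed because it lies in the image of ${}^R(-)$. The single principle I would apply repeatedly is: for any Galois-closed $D$ and any $a\in X$, $\Xi(a)\subseteq D$ iff $a\in D$.

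Unpacking the four cases:
\begin{itemize}
\item Clause (1): $\Xi(x_1)\subseteq\Xi(x_2)$ iff $x_1\in\Xi(x_2) = {}^R(\{x_2\}^R)$, which by definition of ${}^R$ and $(-)^R$ is $\forall y\,(x_2\R y\ra x_1\R y)$.
\item Clause (2): Direct set inclusion $\{x:x\R y_1\}\subseteq\{x:x\R y_2\}$ is literally $\forall x\,(x\R y_1\ra x\R y_2)$.
\item Clause (3): $\Xi(x)\subseteq\Upsilon(y)$ iff $x\in\Upsilon(y)={}^R\{y\}$, i.e.\ $x\R y$.
\item Clause (4): The only one needing a double expansion. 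Here $\Upsilon(y)\subseteq\Xi(x)$ says every $x'$ with $x'\R y$ lies in $\Xi(x)={}^R(\{x\}^R)$, i.e.\ satisfies $\forall y'\,(x\R y'\ra x'\R y')$.
\end{itemize}
Each biconditional is a short unpacking of definitions; case (4) is marginally the most involved but still routine.

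Having matched $\preceq$ (defined via representatives) with the restriction of set-inclusion through $\Xi[X]\cup\Upsilon[Y]\hookrightarrow\GXY$, three conclusions drop out simultaneously. First, $\preceq$ is well-defined on the underlying set $\Xi[X]\cup\Upsilon[Y]$, even when an element admits multiple representations of different syntactic types (for instance when some $\Xi(x)$ equals some $\Upsilon(y)$), since each applicable clause computes the same inclusion in $\GXY$. Second, $\preceq$ is in fact already a partial order, since inclusion is anti-symmetric, so reflexivity and transitivity are immediate and it qualifies as a preorder. Third, the canonical partial order induced by $\preceq$ coincides with $\preceq$ itself, which by construction is the order inherited from $\GXY$. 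I do not expect a substantive obstacle; the whole proof is case-bookkeeping with the closure-operator perspective doing the real work.
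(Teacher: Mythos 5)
Your argument is correct. The paper itself gives no argument for this proposition --- its entire proof is the citation ``This is essentially \cite[Proposition 2.7]{Geh06}'' --- so what you have done is supply the details that the paper delegates to the reference. Your closure-operator framing is the right one and each reduction checks out: $\Xi(x)={}^R(\{x\}^R)$ is the closure of $\{x\}$ and $\Upsilon(y)={}^R\{y\}$ is Galois-closed as it lies in the image of ${}^R(-)$, so the principle ``$\Xi(a)\subseteq D$ iff $a\in D$ for closed $D$'' turns clauses (1) and (3) into membership statements, clause (2) is a literal set inclusion, and clause (4) unpacks to the assertion that every $x'\in{}^R\{y\}$ lies in ${}^R(\{x\}^R)$, which is exactly the displayed condition. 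Your handling of the last sentence of the statement is also the honest reading: because all four clauses compute inclusion in $\GXY$, the relation is independent of the chosen representatives (so well defined on the set $\Xi[X]\cup\Upsilon[Y]$ even when some $\Xi(x)$ coincides with some $\Upsilon(y)$), it is already anti-symmetric, and the ``canonical partial order induced by $\preceq$'' is $\preceq$ itself, namely the order inherited from $\GXY$; the quotient step in the statement is only non-vacuous if one regards $\preceq$ as defined on tokens indexed by $X\cup Y$ rather than on the image sets. No gap.
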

\begin{proof}
This is essentially \cite[Proposition 2.7]{Geh06}. 
\end{proof}
Proposition \ref{P:nat} below provides another perspective on the conditions from Proposition \ref{P:order}.

\begin{prop}\label{P:nat}
Let $\PXY$ be a polarity. Then the following are equivalent:
\begin{enumerate}[1.] 
\item $\preceq$ is the least preorder definable on $\Xi[X]\cup \Upsilon[Y]$ such that: 
\begin{enumerate}
\item $\Xi(x)\preceq\Upsilon(y)\iff x\R y$ for all $x\in X$ and $y\in Y$.
\item The restrictions of $\preceq$ to $\Xi[X]$ and $\Upsilon[Y]$ agree with the orders on these sets inherited from $\GXY$. 
\end{enumerate}
\item $\preceq$ satisfies the conditions from Proposition \ref{P:order}
\end{enumerate}
\end{prop}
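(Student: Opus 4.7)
My plan is to establish both directions by matching each of the four defining biconditionals in Proposition \ref{P:order} with what (a), (b), and the preorder axioms demand, handling the $\Xi\to\Xi$, $\Upsilon\to\Upsilon$, $\Xi\to\Upsilon$, and $\Upsilon\to\Xi$ parts of $\preceq$ in turn.

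For the first three parts the correspondence is essentially mechanical. The order on $\Xi[X]$ inherited from $\GXY$ is set inclusion, and unwinding $\Xi(x) = {}^R(\{x\}^R)$ shows that $\Xi(x_1) \subseteq \Xi(x_2)$ is equivalent to the condition in clause (1); the analogous calculation handles $\Upsilon[Y]$ and clause (2); clause (3) is (a) verbatim. Hence (b) pins down the $\Xi\to\Xi$ and $\Upsilon\to\Upsilon$ parts of any candidate $\preceq$ exactly as in clauses (1), (2), and (a) fixes the $\Xi\to\Upsilon$ part as in clause (3).

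The substantive work is clause (4). For the forward direction I would argue by transitivity: if $\preceq$ is a preorder satisfying (a) and $\Upsilon(y)\preceq\Xi(x)$, then for any $x', y'$ with $x'\R y$ and $x\R y'$ the chain
\[\Xi(x')\preceq\Upsilon(y)\preceq\Xi(x)\preceq\Upsilon(y')\]
together with (a) forces $\Xi(x') \preceq \Upsilon(y')$, i.e.\ $x'\R y'$. This yields the $\Rightarrow$ of clause (4) and shows that the $\Upsilon\to\Xi$ pairs of any preorder satisfying (a) lie inside those sanctioned by (4). For the reverse direction I would check that adjoining a pair $\Upsilon(y)\preceq\Xi(x)$ whenever the RHS of (4) holds is consistent with (a), (b), and the preorder axioms: every transitive chain that passes through such a pair decomposes into initial and terminal segments lying wholly in $\preceq$, which by (a), (b) and Proposition \ref{P:order} collapse either to an inherited order relation in $\Xi[X]$ or $\Upsilon[Y]$, or to a $\Xi\to\Upsilon$ relation $\Xi(x')\preceq\Upsilon(y')$ that is precisely what the universal hypothesis of (4) guarantees. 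The extremality requirement in (1) then identifies the resulting preorder uniquely with the one from Proposition \ref{P:order}.

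The main technical obstacle is the consistency check for the transitive closure after augmenting with the $\Upsilon\to\Xi$ pairs: one must verify that compositions of new pairs with each other and with the inherited orders on $\Xi[X]$ and $\Upsilon[Y]$ do not introduce relations contradicting (a) or (b). The universal quantification over $x'$ and $y'$ in clause (4) is exactly tailored to close off every such composition, so the case analysis (on whether the intermediate elements lie in $\Xi[X]$ or $\Upsilon[Y]$) should proceed cleanly and yield the equivalence of (1) and (2).
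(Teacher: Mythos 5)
Your proof follows essentially the same route as the paper's: conditions (1)--(3) of Proposition \ref{P:order} are matched mechanically with (a) and (b), the forward half of (4) comes from the transitivity chain $\Xi(x')\preceq\Upsilon(y)\preceq\Xi(x)\preceq\Upsilon(y')$, and the converse reduces to checking that the relation defined by (1)--(4) is itself a preorder satisfying (a) and (b), which the paper simply reads off from Proposition \ref{P:order} rather than re-verifying transitivity case by case as you sketch. One caveat, which applies equally to the paper's own proof: the transitivity argument shows that every preorder satisfying (a) and (b) is \emph{contained in} the one from Proposition \ref{P:order}, so what is actually established is that the latter is the \emph{greatest} such preorder; neither your argument nor the paper's shows it is contained in every admissible preorder, which is what ``least'' would literally require (and which can fail, e.g.\ for $X=\{x\}$, $Y=\{y\}$, $\R=\emptyset$, where the discrete preorder also satisfies (a) and (b) but omits the pair $\Upsilon(y)\preceq\Xi(x)$).
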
 
\begin{proof}
Suppose $\preceq$ is any preorder on $\Xi[X]\cup \Upsilon[Y]$ satisfying conditions 1(a) and 1(b). Then, by Proposition \ref{P:order} we have 
\[\Xi(x_1)\preceq \Xi(x_2)\iff \Xi(x_1)\subseteq \Xi(x_2)\iff (\forall y\in Y)\big(x_2 \R  y\ra x_1 \R  y\big),\] and thus \ref{P:order}(1) is satisfied. A similar argument works for \ref{P:order}(2), and \ref{P:order}(3) holds automatically. Finally, as $\preceq$ is transitive, we must have 
\begin{align*}&\Upsilon(y) \preceq \Xi(x)\\\implies& (\forall x'\in X)(\forall y'\in Y)\Big(\Xi(x')\preceq \Upsilon(y)\amp\Xi(x)\preceq \Upsilon(y')\ra \Xi(x')\preceq \Upsilon(y')\Big).\end{align*}
Thus, any such preorder $\preceq$ satisfies \ref{P:order}(1)-(3), and the `forward implication only' version of \ref{P:order}(4). 

To complete the proof it is sufficient to show that the `minimal' $\preceq$ defined from $\R $ using conditions \ref{P:order}(1)-(4) defines a preorder on $\Xi[X]\cup \Upsilon[Y]$ satisfying conditions 1(a) and 1(b). But this is what Proposition \ref{P:order} tells us. 
\end{proof}

\begin{lemma}\label{L:inj}
The following are equivalent:
\begin{enumerate}
\item[(1.a)] The map $\Xi:X\to \GXY$ is injective.
\item[(1.b)] Whenever $x_1\neq x_2\in X$ there is $y\in Y$ such that either $(x_2,y)\in \R $ and $(x_1, y)\notin \R$, or vice versa. 
\item[(1.c)] Whenever $x_1\neq x_2\in X$ we have either $x_1\notin \Xi(x_2)$ or $x_2\notin \Xi(x_1)$. 
\end{enumerate}
The following are also equivalent:
\begin{enumerate}
\item[(2.a)] The map $\Upsilon:Y\to\GXY$ is injective.
\item[(2.b)] Whenever $y_1\neq y_2\in Y$ there is $x\in X$ such that either $(x,y_2)\in \R$ and $(x,y_1)\notin \R$, or vice versa. 
\end{enumerate}
\end{lemma}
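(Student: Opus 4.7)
The plan is to unpack the definitions of $\Xi$ and $\Upsilon$ and observe that everything reduces to injectivity of the maps $x\mapsto \{x\}^R$ and $y\mapsto {}^R\{y\}$ respectively.

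First I would handle Part 2, which is simpler and illustrates the pattern. Since $\Upsilon(y) = {}^R\{y\} = \{x\in X : x\R y\}$, the equality $\Upsilon(y_1)=\Upsilon(y_2)$ is literally the statement that $\{x : x\R y_1\} = \{x : x\R y_2\}$, which fails exactly when some $x$ witnesses the failure of the biconditional $x\R y_1\iff x\R y_2$. Thus (2.a) and (2.b) are trivially equivalent.

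For Part 1, the key observation is that $\Xi(x_1)=\Xi(x_2)$ iff $\{x_1\}^R=\{x_2\}^R$. The forward direction uses the standard Galois connection identity $(-)^R\circ{}^R(-)\circ(-)^R = (-)^R$: applying $(-)^R$ to both sides of $\Xi(x_1)=\Xi(x_2)$ gives $({}^R(\{x_1\}^R))^R=({}^R(\{x_2\}^R))^R$, which collapses to $\{x_1\}^R=\{x_2\}^R$. The reverse direction is immediate from the definition of $\Xi$. Combining this with the trivial unpacking $\{x_1\}^R=\{x_2\}^R \iff \forall y\,(x_1\R y\iff x_2\R y)$ gives (1.a)$\iff$(1.b).

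For (1.a)$\iff$(1.c), I would use two facts: (i) $x\in \Xi(x)$ always, since $\Xi$ is the restriction of a closure operator applied to $\{x\}$; and (ii) the pair of containments $x_1\in\Xi(x_2)$ and $x_2\in\Xi(x_1)$ is equivalent to $\{x_2\}^R\subseteq\{x_1\}^R$ together with $\{x_1\}^R\subseteq\{x_2\}^R$, which by the previous step is equivalent to $\Xi(x_1)=\Xi(x_2)$. So if (1.a) holds and $x_1\neq x_2$, then $\Xi(x_1)\neq\Xi(x_2)$, hence at least one of the containments fails, giving (1.c). Conversely, assuming (1.c), if $\Xi(x_1)=\Xi(x_2)$ then applying (i) to both indices forces both $x_1\in\Xi(x_2)$ and $x_2\in\Xi(x_1)$, so (1.c) gives $x_1=x_2$.

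There is no real obstacle here; the whole argument is a matter of carefully translating set-theoretic equalities through the Galois connection identities, and I would not expect any step to be harder than bookkeeping.
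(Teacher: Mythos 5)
Your proof is correct and is essentially the same argument as the paper's: both hinge on unpacking $\Xi(x)={}^R(\{x\}^R)$ as $\{z\in X:\forall y\,(x\R y\to z\R y)\}$, so that injectivity of $\Xi$ reduces to injectivity of $x\mapsto\{x\}^R$, with the $\Upsilon$ case being immediate. The only cosmetic difference is that you route through the closure-operator identity $(-)^R\circ{}^R(-)\circ(-)^R=(-)^R$ and prove each of (1.b), (1.c) directly equivalent to (1.a), whereas the paper chases elements to establish the cycle $(1.a)\Rightarrow(1.b)\Rightarrow(1.c)\Rightarrow(1.a)$.
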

\begin{proof}
Observe that $\Xi(x) = \{z\in X:  x\R y\ra z\R y\text{ for all }y\in Y\}$ for all $x\in X$. Let $x_1\neq x_2$ and suppose without loss of generality that there is $z\in \Xi(x_1)\setminus\Xi(x_2)$. Then $(z,y)\in\R$ for all $y\in Y$ with $(x_1,y)\in \R$, but there is $y'\in Y$ with $(x_2,y')\in\R$ and $(z,y')\notin \R$. For this $y'$ we must have $(x_2,y')\in \R$ and $(x_1,y')\notin \R$. Thus $(1.a)\implies(1.b)$. That $(1.b)\implies(1.c)$ and $(1.c)\implies(1.a)$ is automatic. The proof for $\Upsilon$ is similar, but even more straightforward.
\end{proof}

What if $X$ and $Y$ are not merely sets but also have a poset structure? We make the following definition.
\begin{defn}\label{D:OP}
A polarity $\PXY$ is an \textbf{order polarity} if $X$ and $Y$ are posets.
\end{defn}

In this situation we might, for example, want the maps $\Xi$ and $\Upsilon$ to be order embeddings, which places constraints on $\R $. Building on Lemma \ref{L:inj} we have the following result.

\begin{prop}\label{P:ordEmb}
Let $\PXY$ be an order polarity. Then the map $\Xi:X\to \GXY$ is an order embedding if and only if
\[\forall x_1,x_2\in X \Big(x_1\leq_X x_2\lra \forall y\in Y\big(x_2 \R  y\ra x_1 \R  y\big)\Big).\] 

The map $\Upsilon:Y\to \GXY$ is an order embedding if and only if
\[\forall y_1,y_2\in Y \Big(y_1\leq_Y y_2\lra \forall x\in X\big(x\R  y_1\ra x \R  y_2\big)\Big).\] 
\end{prop}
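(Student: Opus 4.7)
The plan is to reduce the statement almost directly to clauses (1) and (2) of Proposition \ref{P:order}, which already translate the relation $\preceq$ on $\Xi[X]$ and $\Upsilon[Y]$ into pure $\R$-conditions. Recall that Proposition \ref{P:order} asserts that the order on $\Xi[X]\cup\Upsilon[Y]$ inherited from $\GXY$ is the canonical partial order induced by the preorder $\preceq$ described there, so that in particular $\Xi(x_1)\leq \Xi(x_2)$ in $\GXY$ if and only if $\Xi(x_1)\preceq\Xi(x_2)$, and similarly for $\Upsilon$.

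For the first claim, recall that $\Xi$ is an order embedding precisely when $x_1\leq_X x_2 \iff \Xi(x_1)\leq \Xi(x_2)$ for all $x_1, x_2 \in X$. By clause (1) of Proposition \ref{P:order}, $\Xi(x_1)\leq\Xi(x_2)$ in $\GXY$ is equivalent to $\forall y\in Y(x_2\R y \ra x_1 \R y)$. Substituting this equivalence yields exactly the displayed biconditional characterizing when $\Xi$ is an order embedding. The proof for $\Upsilon$ is completely analogous, using clause (2) of Proposition \ref{P:order} instead.

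There is no real obstacle here; the content of the proposition is essentially that the $\preceq$-conditions of Proposition \ref{P:order} restricted to each of $\Xi[X]$ and $\Upsilon[Y]$ are what is needed for the corresponding map to be an order embedding. I would simply state the reduction and appeal to Proposition \ref{P:order} for each of the two halves. Injectivity of $\Xi$ (respectively $\Upsilon$) is automatic from the biconditional, since if $\Xi(x_1)=\Xi(x_2)$ then $x_1\leq_X x_2$ and $x_2\leq_X x_1$, so by antisymmetry $x_1 = x_2$; alternatively one can appeal to Lemma \ref{L:inj}. Thus the proof occupies only a few lines beyond citing the two earlier results.
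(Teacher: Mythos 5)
Your proof is correct, and it is precisely the route the paper itself flags ("We could appeal to Proposition \ref{P:order}\dots") before opting instead for the one-line direct observation that $\Xi(x_1)\subseteq\Xi(x_2)$ holds iff $x_2\R y\ra x_1\R y$ for all $y\in Y$. The two arguments have the same content, since clause (1) of Proposition \ref{P:order} is just that observation packaged into the preorder $\preceq$; your handling of injectivity is also fine.
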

\begin{proof}
We could appeal to Proposition \ref{P:order}, but the direct argument is also extremely simple. Explicitly, $\Xi$ is an order embedding if and only if $x_1\leq_X x_2 \iff \Xi(x_1)\subseteq \Xi(x_2)$, and a little consideration reveals that $\Xi(x_1)\subseteq \Xi(x_2)$ if and only if $x_2 \R  y\ra x_1 \R  y$ for all $y\in Y$. Again, the argument for $\Upsilon$ is even more straightforward. 
\end{proof}

Propositions \ref{P:nat} and \ref{P:ordEmb}, while essentially trivial in themselves, contain, in a sense, the seed of inspiration for the rest of the paper. In broad terms, we want to investigate the conditions for the existence of preorders on $X\cup Y$ such that similar results can be proved. This we do in the next section and onwards. First, a little more notation.

\begin{defn}[$\XUY$, $\XUpY$]\label{D:XUP}
Given disjoint sets $X$ and $Y$, we sometimes write $\XUY$ to specify that we are talking about $X\cup Y$ ordered by a given preorder $\preceq$. We use $\XUpY$ to denote the canonical partial order arising from $\XUY$.
\end{defn}

\section{Coherence conditions for order polarities}\label{S:order}
\subsection{The basic case}
In the previous section we discussed polarities and order polarities from the perspective of $\GXY$, and the inherited order structure on $\Xi[X]\cup\Upsilon[Y]$. In this situation the maps $\Xi$ and $\Upsilon$ may fail to be order preserving, order reflecting, or even injective. In this section we forget about $\GXY$, and ask instead, given an order polarity $\PXY$, under what circumstances can we define preorders on $X\cup Y$ that agree with $\R $ on $X\times Y$, and also extend the order structures of $X$ and $Y$? In other words, when are there preorders on $X\cup Y$ agreeing with $\R $ on $X\times Y$ such that the natural inclusions of $X$ and $Y$ into $X\cup Y$ are order preserving? What about if we require the inclusions to be order embeddings, or to have stronger preservation properties? We will address these questions, but first some definitions.

\begin{defn}\label{D:0-pre}
Let $\PXY$ be an order polarity. Define a \textbf{0-preorder} for $\PXY$ to be a preorder $\preceq$ on $X\cup Y$ with additional properties as follows:
\begin{description}
\item[(P1)\label{P1}] $(\forall x\in X)(\forall y\in Y)\big( x\preceq y \lra x\R y\big)$.
\item[(P2)\label{P2}] $\forall x_1,x_2\in X(x_1\leq_X x_2\ra x_1\preceq x_2)$.
\item[(P3)\label{P3}] $\forall y_1,y_2\in Y(y_1\leq_Y y_2\ra y_1\preceq y_2)$.
\end{description}
\end{defn}

\begin{defn}\label{D:0-co}
Let $\PXY$ be an order polarity. We say $\PXY$ is \textbf{0-coherent} if it satisfies the following conditions:
\begin{description}
\item[(C1)\label{C1}] $(\forall x_1,x_2\in X)(\forall y\in Y)\Big(x_1 \leq_X x_2 \ra (x_2 \R  y\ra x_1 \R  y)\Big)$. 
\item[(C2)\label{C2}] $(\forall y_1,y_2\in Y)(\forall x\in X) \Big(y_1\leq_Y y_2\ra (x\R  y_1\ra x\R y_2)\Big)$. 
\end{description}
\end{defn}

\begin{defn}[$\Rz$]\label{D:Rz}
Let $O=\PXY$ be an order polarity. Define the relation $\Rz\subseteq(X\cup Y)^2$ by 
\[\Rz=\leq_X\cup \leq_Y\cup \R.\]
\end{defn}

Technically, $\Rz$ depends on the choice of $O$, so one could make the case that the notation should be something like $\preceq_0^O$ to reflect that. We choose not do this, as we find it rather unwieldy, and the choice of $O$ will always be obvious from the context. Similar issues arise later at several points, for example in Definitions \ref{D:hRm} and \ref{D:hRg}, and we take the same approach. 

\begin{thm}\label{T:0-pol}
Given an order polarity $O=\PXY$, the following are equivalent:
\begin{enumerate}
\item $O$ is 0-coherent.
\item $\Rz$ is a 0-preorder for $O$.
\item There exists a 0-preorder for $O$.
\end{enumerate}
Moreover, the set of 0-preorders for $\PXY$ is closed under non-empty intersections and, if it is non-empty, has $\Rz$ as its smallest member. 
\end{thm}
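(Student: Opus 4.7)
The plan is to prove the cycle $(1)\Rightarrow(2)\Rightarrow(3)\Rightarrow(1)$, and then handle the closure under intersection and the minimality of $\Rz$ as essentially a byproduct.

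First I would observe the easy direction $(2)\Rightarrow(3)$, which is immediate, and the direction $(3)\Rightarrow(1)$, which I would prove by direct computation. Suppose $\preceq$ is any 0-preorder for $O$. For \nameref{C1}, assume $x_1\leq_X x_2$ and $x_2\R y$; then \nameref{P2} gives $x_1\preceq x_2$, \nameref{P1} gives $x_2\preceq y$, transitivity yields $x_1\preceq y$, and \nameref{P1} again gives $x_1\R y$. The argument for \nameref{C2} is symmetric.

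Next, the direction $(1)\Rightarrow(2)$ is the main content. Assume $O$ is 0-coherent. Reflexivity of $\Rz$ follows from reflexivity of $\leq_X$ and $\leq_Y$, and properties \nameref{P2}, \nameref{P3} are immediate from the definition of $\Rz$. For \nameref{P1}, the implication from $x\R y$ to $x\Rz y$ holds by definition, and the converse uses disjointness of $X$ and $Y$: for $x\in X$ and $y\in Y$, a pair $(x,y)$ in $\leq_X$ or $\leq_Y$ is impossible, so $x\Rz y$ forces $x\R y$. The main obstacle is transitivity of $\Rz$. I would do a case analysis based on which of $X$ or $Y$ each of $a,b,c$ belongs to, using disjointness of $X$ and $Y$ to eliminate the cases where the chain ``jumps back'' from $Y$ to $X$ (e.g., $a\in X, b\in Y, c\in X$), since no component of $\Rz$ relates a $Y$-element to an $X$-element. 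The genuinely interesting cases are $a,b\in X$ and $c\in Y$ (handled by \nameref{C1}) and $a\in X$, $b,c\in Y$ (handled by \nameref{C2}); the all-$X$ and all-$Y$ cases follow from transitivity of $\leq_X$ and $\leq_Y$.

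For the ``moreover'' part, closure under non-empty intersection is routine: given a non-empty family $\{\preceq_i\}_{i\in I}$ of 0-preorders, the intersection $\bigcap_i\preceq_i$ is reflexive and transitive because each $\preceq_i$ is, and \nameref{P1}--\nameref{P3} are preserved because the forward implications in \nameref{P2}, \nameref{P3} and both directions of \nameref{P1} transfer to the intersection (using that the family is non-empty for the ``only if'' direction of \nameref{P1}). Finally, to see $\Rz$ is the smallest 0-preorder when one exists, I would note that if $\preceq$ is any 0-preorder and $(a,b)\in\Rz$, then $(a,b)$ lies in $\leq_X$, $\leq_Y$, or $\R$, and in each case \nameref{P2}, \nameref{P3}, or \nameref{P1} respectively gives $a\preceq b$; combined with the already-established fact that $\Rz$ itself is a 0-preorder under (1), this identifies $\Rz$ as the minimum.
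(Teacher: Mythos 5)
Your proposal is correct and follows essentially the same route as the paper: the cycle $(1)\Rightarrow(2)\Rightarrow(3)\Rightarrow(1)$, with transitivity of $\Rz$ established by the same eight-way case analysis in which disjointness of $X$ and $Y$ rules out the $Y$-to-$X$ cases and \nameref{C1}, \nameref{C2} handle the two nontrivial mixed cases. Your treatment of the ``moreover'' clause (intersection closure and minimality via \nameref{P1}--\nameref{P3}) likewise matches the paper's, just spelled out in more detail.
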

\begin{proof}
Suppose first that $\PXY$ is 0-coherent. It is immediate from its definition that $\Rz$ is reflexive, so it remains only to check transitivity. To do this we consider triples $(z_1,z_2,z_3) \in (X\cup Y)^3$, with $z_1 \Rz z_2$, and $z_2\Rz z_3$. A simple counting argument reveals there are eight cases, depending on the containment of each $z_i$ in $X$ or $Y$. The cases where the $z$ values are either all in $X$ or all in $Y$ follow from the fact that $\Rz$ agrees with the orders on $X$ and $Y$. The cases that require $y\Rz x$ are ruled out by the definition of $\Rz$, so the only remaining cases are $(x_1, x_2, y)$, where $x_1,x_2\in X$ and $y\in Y$, and $(x, y_1, y_2)$ where $x\in X$ and $y_1,y_2\in Y$. These cases are covered by the definition of 0-coherence, so we have transitivity, and thus $(1)\implies (2)$.

That $(2)\implies (3)$ is automatic, so suppose now that $\preceq$ is a 0-preorder for $\PXY$. Then that $\PXY$ is 0-coherent is an almost immediate consequence of the transitivity of $\preceq$ along with the additional conditions from Definition \ref{D:0-pre}. Thus $(3)\implies (1)$. 

Finally, the set of 0-preorders for $\PXY$ is obviously closed under non-empty intersections. Moreover, $\Rz$ must be a member of this set (if it is non-empty), by what we have proved already. The proof that it must be the smallest member is routine.
\end{proof}

Conditions \ref{C1} and \ref{C2} are, in a sense, dual to each other, as are several other pairs of conditions to come. We will often appeal to this duality in proofs. Informally, we mean something like ``by switching some conditions to their (intuitively obvious) duals we could prove this using essentially the same argument". It is possible to formalize this intuition, but we omit the details for reasons of space. The ad hoc understanding suffices to reconstruct proofs as necessary. 
  
\subsection{Extension polarities}

Suppose in addition that $X$ and $Y$ are both extensions of some poset $P$. In other words, that there are order embeddings $e_1:P\to X$ and $e_2:P\to Y$. What conditions must $\R $ satisfy in order for there to be a 0-preorder $\preceq$ such that the diagram in Figure \ref{F:fix1} commutes (recall Definition \ref{D:XUP})? In this figure $\iota_X$ and $\iota_Y$ stand for the compositions of the natural inclusion functions into $\XUY$ with the canonical map from $\XUY$ to $\XUpY$. As this situation will be the focus of most of the rest of the document, we make the following definition. 

\begin{figure}[htbp]
\[\xymatrix{ P\ar[r]^{e_Y}\ar[d]_{e_X} & Y\ar[d]^{\iota_Y} \\
X\ar[r]_{\iota_X} & \XUpY
}\] 
\caption{}
\label{F:fix1}
\end{figure}

\begin{defn}\label{D:ext}
An \textbf{extension polarity} is a triple $(e_X,e_Y,\R )$, where $e_X:P \to X$ and $e_Y:P\to Y$ are order extensions of the same poset $P$, and $(X,Y,\R)$ is an order polarity. When both $e_X$ and $e_Y$ are completions, we say $(e_X,e_Y,\R )$ is \textbf{complete}. We sometimes say an extension polarity $\Pe$ \textbf{extends} $P$. The concepts of 0-preorders and 0-coherence, from Definition \ref{D:0-pre} and \ref{D:0-co}, also apply, \emph{mutatis mutandis}, to extension polarities. 
\end{defn}

Note that an order polarity is an extension polarity in the special case where $P$ is empty. 

\begin{defn}\label{D:1-pre}
Let $E=\Pe$ be an extension polarity. Define a \textbf{1-preorder} for $E$ to be a 0-preorder for $E$ with the additional property that the diagram in Figure \ref{F:fix1} commutes.
\end{defn}

\begin{defn}\label{1-co}
Let $E=\Pe$ be an extension polarity. We say $E$ is \textbf{1-coherent} if it is 0-coherent and also satisfies the following conditions:
\begin{description}
\item[(C3)\label{C3}] $\forall p\in P\Big(e_X(p) \R  e_Y(p)\Big)$. 
\item[(C4)\label{C4}] $(\forall p \in P)(\forall x\in X)(\forall y\in Y) \Big((x \R   e_Y(p) \amp e_X(p) \R  y)\ra x\R  y\Big)$. 
\end{description} 
\end{defn}

\begin{defn}[$\hRm$]\label{D:hRm}
Let $\Pe$ be an extension polarity. Define the relation $\hRm\subseteq (X\cup Y)^2$ to be the union of $\R$ with the sets
\begin{align*}
Z_X =&\{(x_1,x_2)\in X^2:\exists p\in P\big(x_1 \R e_Y(p) \amp e_X(p)\leq_X x_2\big)\}\cup \leq_X \\
Z_Y=& \{(y_1,y_2)\in Y^2:\exists p\in P\big(y_1\leq_Y e_Y(p) \amp e_X(p)\R y_2\big)\}\cup \leq_Y \\
Z_{YX} =& \{(y,x)\in Y\times X : \exists p,q\in P\big(y\leq_Y e_Y(p) \amp e_X(p)\R e_Y(q) \amp e_X(q)\leq_X x\big)\}.
\end{align*}
\end{defn}

\begin{lemma}\label{L:1-co-equiv}
Let $E=\Pe$ be a 1-coherent extension polarity, and let $\preceq$ be a 0-preorder for $E$. Then the following conditions are equivalent:

\[\tag{1}\forall p\in P\Big(e_Y(p) \preceq e_X(p)\Big).\]
\[\tag{2} Z_{YX}\subseteq \preceq.\]

\end{lemma}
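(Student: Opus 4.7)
The plan is to prove the two implications directly, using the definition of $Z_{YX}$ together with the fact that a 0-preorder contains $\leq_X$, $\leq_Y$, and $\R$ (via conditions \ref{P1}--\ref{P3}), and to exploit condition \ref{C3} of 1-coherence in the backward direction.

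For $(1)\Rightarrow(2)$, I would take an arbitrary $(y,x)\in Z_{YX}$ and unpack: there exist $p,q\in P$ with $y\leq_Y e_Y(p)$, $e_X(p)\R e_Y(q)$, and $e_X(q)\leq_X x$. Since $\preceq$ is a 0-preorder, \ref{P2} and \ref{P3} give $y\preceq e_Y(p)$ and $e_X(q)\preceq x$, while \ref{P1} gives $e_X(p)\preceq e_Y(q)$. The hypothesis (1) supplies $e_Y(p)\preceq e_X(p)$ and $e_Y(q)\preceq e_X(q)$, so transitivity yields the chain
\[y\preceq e_Y(p)\preceq e_X(p)\preceq e_Y(q)\preceq e_X(q)\preceq x,\]
and hence $(y,x)\in\preceq$.

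For $(2)\Rightarrow(1)$, I would fix any $p\in P$ and show that $(e_Y(p),e_X(p))\in Z_{YX}$ by using $p$ itself as witness for both existentially quantified parameters. The three required conditions $e_Y(p)\leq_Y e_Y(p)$ and $e_X(p)\leq_X e_X(p)$ are immediate by reflexivity, while $e_X(p)\R e_Y(p)$ is precisely condition \ref{C3} from the 1-coherence of $E$. Thus $(e_Y(p),e_X(p))\in Z_{YX}\subseteq\preceq$, which gives $e_Y(p)\preceq e_X(p)$.

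There is no real obstacle here; the lemma is essentially a bookkeeping exercise once the definition of $Z_{YX}$ is parsed. The only point worth flagging is that \ref{C3} is used in exactly one place (to verify the middle condition in the definition of $Z_{YX}$ when witnessing $(e_Y(p),e_X(p))$), which is precisely why the statement is formulated for 1-coherent polarities rather than merely 0-coherent ones.
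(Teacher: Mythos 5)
Your proof is correct and follows essentially the same route as the paper: the forward direction is the identical five-link transitivity chain, and the backward direction instantiates $Z_{YX}$ at $(e_Y(p),e_X(p))$ with $p$ witnessing both parameters (you just spell out the role of \ref{C3} more explicitly than the paper does).
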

\begin{proof}
Assume (1) holds and that there are $p,q\in P$ such that we have $y\leq_Y e_Y(p)$, and $e_X(p) \R  e_Y(q)$, and $e_X(q)\leq_X x$. Then
\[y\preceq e_Y(p)\preceq e_X(p)\preceq e_Y(q)\preceq e_X(q)\preceq x,\]
and so $y\preceq x$ by transitivity, and thus (1)$\implies$ (2). Conversely, if we assume $Z_{YX}\subseteq \preceq$, then setting $y = e_Y(p)$ and $x = e_X(p)$ produces $e_Y(p)\preceq e_X(p)$, and thus (1) and (2) are equivalent as claimed. 
\end{proof}

\begin{lemma}\label{L:hRm-is-1-pre}
Let $E=\Pe$ be a 1-coherent extension polarity, and let $\preceq$ be a 0-preorder for $E$. Then $\preceq$ is a 1-preorder for $E$ if and only if $\preceq$ satisfies the conditions from Lemma \ref{L:1-co-equiv}. Moreover, if these conditions are satisfied, then $Z_X\cup Z_Y\subseteq \preceq$, where $Z_X$ and $Z_Y$ are as in Definition \ref{D:hRm}.
\end{lemma}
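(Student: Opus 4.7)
The plan is to unpack what the diagram in Figure \ref{F:fix1} really asserts, and then reduce both claims to simple transitivity manipulations.

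First I would analyze the commutativity condition. Recall that $\iota_X$ and $\iota_Y$ factor through the canonical quotient $\XUpY$, so $\iota_X \circ e_X = \iota_Y \circ e_Y$ holds if and only if, for every $p \in P$, the elements $e_X(p)$ and $e_Y(p)$ lie in the same $\preceq$-equivalence class, i.e.\ both $e_X(p) \preceq e_Y(p)$ and $e_Y(p) \preceq e_X(p)$. The forward half is free: since $E$ is 1-coherent we have $e_X(p) \R e_Y(p)$ by \ref{C3}, and then \ref{P1} gives $e_X(p) \preceq e_Y(p)$ for any 0-preorder $\preceq$. Hence the diagram commutes if and only if $e_Y(p) \preceq e_X(p)$ for every $p \in P$, which is precisely condition (1) of Lemma \ref{L:1-co-equiv}. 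By that lemma, this is equivalent to $Z_{YX} \subseteq \preceq$, establishing the first claim.

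For the second claim, assume the equivalent conditions hold, so in particular $e_Y(p) \preceq e_X(p)$ for all $p \in P$. To show $Z_X \subseteq \preceq$, take $(x_1, x_2) \in Z_X$. If $x_1 \leq_X x_2$ then $x_1 \preceq x_2$ by \ref{P2}. Otherwise there exists $p \in P$ with $x_1 \R e_Y(p)$ and $e_X(p) \leq_X x_2$, and then chaining
\[x_1 \preceq e_Y(p) \preceq e_X(p) \preceq x_2\]
via \ref{P1}, the hypothesis, and \ref{P2}, respectively, gives $x_1 \preceq x_2$ by transitivity. The argument for $Z_Y \subseteq \preceq$ is the order-theoretic dual, using \ref{P3} instead of \ref{P2}.

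I do not expect a serious obstacle here; the only mildly subtle point is keeping straight what commutativity of the square means once the codomain has been quotiented to $\XUpY$, which is what forces us to look at both directions $e_X(p) \preceq e_Y(p)$ and $e_Y(p) \preceq e_X(p)$ rather than only one. Everything else is a short transitivity chase using the properties of a 0-preorder together with the newly-assumed inequality $e_Y(p) \preceq e_X(p)$.
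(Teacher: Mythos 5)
Your proposal is correct and follows essentially the same route as the paper: reduce commutativity of the square to $e_Y(p)\preceq e_X(p)$ (the other direction being free from \ref{C3} and \ref{P1}), then establish $Z_X\cup Z_Y\subseteq\preceq$ by the transitivity chain $x_1\preceq e_Y(p)\preceq e_X(p)\preceq x_2$ and its dual. Your extra care in spelling out why commutativity in the quotient $\XUpY$ amounts to both inequalities is a worthwhile clarification but does not change the argument.
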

\begin{proof}
It is clearly necessary that (1) hold in order for $\preceq$ to be a 1-preorder, as otherwise the diagram in Figure \ref{F:fix1} will not commute. Conversely, if $\preceq$ satisfies (1), then, from \ref{C3} it follows that the diagram in Figure \ref{F:fix1} commutes, and thus $\preceq$ is a 1-preorder. 

Now, suppose the conditions are satisfied, and let $x_1\leq x_2\in X$. Then $x_1\preceq x_2$ by \ref{P2}. This shows that $\leq_X\subseteq \preceq$.   Suppose then that $x_1 \R  e_Y(p)$ and $e_X(p)\leq_X x_2$ for some $x_1,x_2\in X$ and $p\in P$. Then $x_1\preceq e_Y(p)\preceq e_X(p)\preceq x_2$, and so we must have $x_1\preceq x_2$ by transitivity. It follows that $Z_X\subseteq \preceq$, and that $Z_Y\subseteq \preceq$ follows from a dual argument.
\end{proof}

\begin{thm}\label{T:1-pol}
Given an extension polarity $E=\Pe$, the following are equivalent:
\begin{enumerate}
\item $E$ is 1-coherent.
\item $\hRm$ is a 1-preorder for $E$.
\item There exists a 1-preorder for $E$.
\end{enumerate}
Moreover, the set of 1-preorders for $E$ is closed under non-empty intersections and, if it is non-empty, has $\hRm$ as its smallest member. 
\end{thm}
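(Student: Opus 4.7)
The plan is to mirror the three-step proof of Theorem \ref{T:0-pol}, with Lemmas \ref{L:1-co-equiv} and \ref{L:hRm-is-1-pre} handling the new content specific to the 1-level. First I would establish $(1)\Rightarrow(2)$: assuming 1-coherence, verify that $\hRm$ is a 0-preorder for $E$, and then use the definition of $\hRm$ together with \ref{C3} to deduce the diagram condition. Reflexivity is immediate from $\leq_X\cup\leq_Y\subseteq\hRm$. Property \ref{P1} holds because the only component of $\hRm$ contained in $X\times Y$ is $\R$ itself (the sets $Z_X, Z_Y, Z_{YX}$ live in $X^2, Y^2, Y\times X$ respectively), and \ref{P2}, \ref{P3} are built into $Z_X$, $Z_Y$. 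For the diagram in Figure \ref{F:fix1}, \ref{C3} gives $e_X(p)\,\hRm\, e_Y(p)$, while taking $p=q$ in the definition of $Z_{YX}$ gives $e_Y(p)\,\hRm\,e_X(p)$.

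The substantive step is transitivity of $\hRm$. I would do this by cases on the types of a triple $(z_1,z_2,z_3)\in(X\cup Y)^3$. The cases where all three lie in $X$ or all in $Y$ use \ref{C1}, \ref{C2}, \ref{C4} to chain witnesses: e.g.\ if $x_1 \R e_Y(p)$, $e_X(p)\leq_X x_2$, $x_2\R e_Y(q)$, $e_X(q)\leq_X x_3$, then \ref{C1} yields $e_X(p)\R e_Y(q)$ and \ref{C4} yields $x_1\R e_Y(q)$, giving the pair in $Z_X$. The mixed cases $(x,x,y)$ and $(x,y,y)$ reduce to \ref{C1}+\ref{C4} and \ref{C2}+\ref{C4}. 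The four cases involving a $Z_{YX}$ step ($(x,y,x)$, $(y,x,y)$, $(y,y,x)$, $(y,x,x)$) require concatenating the three-step witness for $Z_{YX}$ with a further $\R$-pair or an order pair, where \ref{C1}, \ref{C2} push the new endpoint through $e_X$ or $e_Y$ and a final application of \ref{C4} collapses the resulting chain of $\R$-steps back into a single $\R$-step. This case analysis is where the bulk of the work lies; it is unavoidable, but each subcase is short, being essentially a single invocation of \ref{C4} after adjusting endpoints with \ref{C1} or \ref{C2}.

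For $(2)\Rightarrow(3)$ there is nothing to prove. For $(3)\Rightarrow(1)$, suppose $\preceq$ is a 1-preorder for $E$. Then $E$ is 0-coherent by Theorem \ref{T:0-pol}. Condition \ref{C3} follows from \ref{P1} and the commutativity of Figure \ref{F:fix1} applied in the forward direction $\iota_X\circ e_X=\iota_Y\circ e_Y$. For \ref{C4}, if $x\R e_Y(p)$ and $e_X(p)\R y$, then $x\preceq e_Y(p)\preceq e_X(p)\preceq y$ via \ref{P1} and commutativity, so $x\preceq y$ by transitivity and $x\R y$ by \ref{P1} again.

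Finally, closure under non-empty intersection is immediate since the defining conditions \ref{P1}--\ref{P3} and transitivity/reflexivity are all preserved by arbitrary intersection of preorders (and the diagram condition is just that certain pairs belong to the preorder). Minimality of $\hRm$ follows by noting that any 1-preorder $\preceq$ must contain $\R$ by \ref{P1}, must contain $Z_X\cup Z_Y$ by Lemma \ref{L:hRm-is-1-pre}, and must contain $Z_{YX}$ by Lemma \ref{L:1-co-equiv} applied via the commutativity of Figure \ref{F:fix1}. Thus $\hRm\subseteq\preceq$ for every 1-preorder $\preceq$, which also re-proves the minimality clause. The main obstacle is purely combinatorial: keeping the eight transitivity cases organised and identifying, in each, the right sequence of \ref{C1}/\ref{C2}/\ref{C4} applications.
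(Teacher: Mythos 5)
Your proposal is correct and follows essentially the same route as the paper: reflexivity and the eight-case transitivity check for $\hRm$ using \ref{C1}, \ref{C2} and \ref{C4}, the observation that the diagram condition reduces to $e_Y(p)\hRm e_X(p)$ (obtained by taking $p=q$ in $Z_{YX}$, exactly as the paper does via Lemmas \ref{L:1-co-equiv} and \ref{L:hRm-is-1-pre}), the transitivity argument for $(3)\Rightarrow(1)$, and minimality via the same lemmas. The only remaining work is writing out the transitivity subcases in full, and your description of how each one resolves matches the paper's treatment.
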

\begin{proof}
Suppose first that $E$ is 1-coherent. That $\hRm$ is reflexive is automatic, so we show now that it is transitive. As in the proof of Theorem \ref{T:0-pol}, we consider the eight relevant cases of the triples $(z_1, z_2, z_3) \in (X\cup Y)^3$. Unfortunately we must proceed case by case, and each case may have several subcases.
\begin{enumerate}[$\bullet$]
\item$(x_1,x_2,x_3)$: Here $x_1\hRm x_2$, and $x_2\hRm x_3$. This case breaks down into subcases, depending on the reason $\hRm$ holds for each pair. 
\begin{enumerate}[$-$]
\item If $x_1 \leq_X x_2$ and $x_2\leq_X x_3$ in $X$, then we have $x_1\leq_X x_3$, and thus $x_1\hRm x_3$, by transitivity of $\leq_X$. 
\item Suppose instead that $x_1\leq_X x_2$, and that there is $p\in P$ with  $x_2 \R  e_Y(p)$ and $e_X(p)\leq_X x_3$. Then $x_1 \R  e_Y(p)$ by \ref{C1}, and so $x_1\hRm x_3$ by definition of $\hRm$.
\item Alternatively, if $x_1 \R  e_Y(p)$, $e_X(p)\leq_X x_2$ and $x_2\leq_X x_3$, then $e_X(p)\leq_X x_3$, and so $x_1\hRm x_3$ by definition of $\hRm$. 
\item Finally, suppose there are $p,q\in P$ with $x_1 \R  e_Y(p)$, with $e_X(p)\leq_X x_2$, with $x_2 \R  e_Y(q)$ and with $e_X(q)\leq_X x_3$. Then $e_X(p) \R  e_Y(q)$ by \ref{C1}, and so $x_1 \R  e_Y(q)$ by \ref{C4}, and thus $x_1\hRm x_3$ by definition of $\hRm$.
\end{enumerate}

\item$(y_1,y_2,y_3)$: This case is dual to the previous one.

\item$(x_1,x_2,y)$: Here we have $x_2 \hRm y$, and thus $x_2 \R  y$. We also have $x_1\hRm x_2$, which breaks down into two cases.
\begin{enumerate}[$-$]
\item First suppose $x_1\leq_X x_2$. Then $x_1 \R  y$ by \ref{C1}, and so $x_1\hRm y$ as required.
\item Suppose instead that there is $p\in P$ with $x_1 \R  e_Y(p)$ and $e_X(p) \leq_X x_2$. Then $e_X(p) \R  y$ by \ref{C1}, and so $x_1 \R  y$ by \ref{C4}, and thus $x_1 \hRm y$ as required.
\end{enumerate}

\item$(y,x_1,x_2)$: Here we have $y\hRm x$, and so there are $p,q\in P$ with $y\leq_Y e_Y(p)$, with $e_X(p)\R e_Y(q)$, and with $e_X(q)\leq_X x_1$, and $x_1\hRm x_2$. There are two subcases. 
\begin{enumerate}[$-$]
\item Suppose first that $x_1\leq_X x_2$. Then $e_X(q)\leq_X x_2$ by the transitivity of $\leq_X$,  and the result then follows immediately from the definition of $\hRm$.
\item Suppose instead that there is $r\in P$ with $x_1 \R  e_Y(r)$ and $e_X(r) \leq_X x_2$. Then an application of \ref{C1} produces $e_X(q) \R  e_Y(r)$. Using this with \ref{C4} provides $e_X(p) \R  e_Y(r)$. Thus we get $y\hRm x_2$ from the definition of $\hRm$. 
\end{enumerate}
\item$(x_1,y,x_2)$: We have $x_1 \R  y$, and, by the definition of $\hRm$, there are $p,q\in P$ with $y\leq_Y e_Y(p)$, with $e_X(p) \R  e_Y(q)$, and with $e_X(q)\leq_X x_2$. Then \ref{C2} gives us $x_1 \R  e_Y(p)$, and consequently \ref{C4} produces $x_1 \R  e_Y(q)$. Thus $x_1\hRm x_2$ by the definition of $\hRm$. 

\item$(y_1, x,y_2)$: Dual to the previous case.

\item$(x,y_1,y_2)$: Dual to the $(x_1,x_2,y)$ case. 
\item$(y_1,y_2,x)$: Dual to the $(y, x_1,x_2)$ case. 
\end{enumerate}

From the above argument we conclude that $\hRm$ is transitive, and thus defines a preorder. To see that $\hRm$ is a 1-preorder first note that it is obviously a 0-preorder (just examine Definitions \ref{D:0-pre} and \ref{D:hRm}). That $\hRm$ is a 1-preorder then follows immediately from Lemma \ref{L:hRm-is-1-pre}, as we have $Z_{YX}\subseteq \hRm$ by definition. Thus (1)$\implies$(2).

That (2)$\implies$(3) is automatic, so suppose (3) holds, and let $\preceq$ be a 1-preorder for $E$. Then \ref{C3} must hold as otherwise the diagram in Figure \ref{F:fix1} would not commute. Similarly, if this diagram commutes, then we must have $e_Y(p)\preceq e_X(p)$ for all $p\in P$. So, given $p\in P$, $x\in X$ and $y\in Y$ with $x \R e_Y(p)$ and $e_X(p) \R y$, we have $x\preceq e_Y(p)\preceq e_X(p)\preceq y$, as $\preceq$ is a 1-preorder, and thus $x\preceq y$ by transitivity of $\preceq$. By the definition of a 1-preorder this implies $x\R y$. So \ref{C4} also holds, and thus (3)$\implies$ (1).

Finally, that the set of 1-preorders for $E$  is closed under non-empty intersections is easily seen. It follows immediately from Lemma \ref{L:hRm-is-1-pre} that, if non-empty, the smallest member is $\hRm$.  
\end{proof}

\begin{defn}\label{D:2-pre}
Let $E=\Pe$ be an extension polarity. Define a \textbf{2-preorder} for $E$ to be a 1-preorder for $E$ with the additional property that the maps $\iota_X$ and $\iota_Y$ from the diagram in Figure \ref{F:fix1} are both order embeddings.
\end{defn}

\begin{defn}\label{2-co}
Let $\Pe$ be an extension polarity. We say $\Pe$ is \textbf{2-coherent} if it is 1-coherent and also satisfies the following conditions:
\begin{description}
\item[(C5)\label{C5}] $(\forall x_1,x_2\in X)(\forall p\in P)\Big( (x_1 \R  e_Y(p)\amp e_X(p)\leq_X x_2) \ra x_1\leq_X x_2\Big)$.
\item[(C6)\label{C6}] $(\forall y_1,y_2\in Y)(\forall p \in P)\Big((y_1\leq_Y e_Y(p)\amp e_X(p) \R  y_2) \ra y_1\leq_Y y_2\Big)$.
\end{description}
\end{defn}

\begin{thm}\label{T:2-pol}
Given an extension polarity $E=\Pe$, the following are equivalent:
\begin{enumerate}
\item $E$ is 2-coherent.
\item $\hRm$ is a 2-preorder for $E$.
\item There exists a 2-preorder for $E$.
\end{enumerate}
Moreover, the set of 2-preorders for $E$ is closed under non-empty intersections and, if it is non-empty, has $\hRm$ as its smallest member. In this case we have $\leq_X = Z_X$ and $\leq_Y = Z_Y$, where $Z_X$ and $Z_Y$ are as in Definition \ref{D:hRm}.
\end{thm}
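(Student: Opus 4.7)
The plan is to mirror the argument of Theorem \ref{T:1-pol}, leveraging it directly to avoid repetition. The implication (2) $\Rightarrow$ (3) is trivial. For (1) $\Rightarrow$ (2), since 2-coherence subsumes 1-coherence, Theorem \ref{T:1-pol} already gives that $\hRm$ is a 1-preorder for $E$. The only remaining task is to verify that, in the canonical partial order on $\XUpY$ induced by $\hRm$, the maps $\iota_X$ and $\iota_Y$ are order embeddings. Preservation of order is immediate from \ref{P2} and \ref{P3}. For order reflection, suppose $x_1, x_2 \in X$ with $x_1 \hRm x_2$. By Definition \ref{D:hRm} either $x_1 \leq_X x_2$ directly, or there exists $p \in P$ with $x_1 \R e_Y(p)$ and $e_X(p)\leq_X x_2$, in which case \ref{C5} forces $x_1 \leq_X x_2$. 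A dual argument using \ref{C6} handles $\iota_Y$.

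For (3) $\Rightarrow$ (1), suppose $\preceq$ is a 2-preorder for $E$. Since it is in particular a 1-preorder, $E$ is 1-coherent by Theorem \ref{T:1-pol}. To establish \ref{C5}, assume $x_1 \R e_Y(p)$ and $e_X(p) \leq_X x_2$. Combining \ref{P1}, \ref{P2}, and the equality $e_Y(p) \preceq e_X(p)$ supplied by Lemma \ref{L:1-co-equiv}, we get $x_1 \preceq e_Y(p) \preceq e_X(p) \preceq x_2$, hence $x_1 \preceq x_2$ by transitivity. Since $\iota_X$ is an order embedding, this forces $x_1 \leq_X x_2$. A dual argument yields \ref{C6}.

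For the moreover claims, the intersection of any family of 2-preorders inherits the reflection property of $\iota_X$ and $\iota_Y$ from a single member (if the intersection contains the pair, so does every member, which then forces the base order relation); hence the class of 2-preorders is closed under non-empty intersections. Minimality of $\hRm$ follows because every 2-preorder is in particular a 1-preorder, so Theorem \ref{T:1-pol} already gives $\hRm \subseteq \preceq$. Finally, $\leq_X \subseteq Z_X$ and $\leq_Y \subseteq Z_Y$ hold by the definitions of $Z_X$ and $Z_Y$, while the reverse inclusions are exactly what \ref{C5} and \ref{C6} assert.

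The only real point requiring care is the correspondence between ``$\iota_X$ is an order embedding in $\XUpY$'' and ``the restriction of $\hRm$ to $X^2$ equals $\leq_X$''; once one observes that, since $X$ and $Y$ are disjoint, $\iota_X$ is automatically injective and its order-reflection property on $X^2$ is governed precisely by $\hRm$ restricted to $X^2$, the rest of the proof is a routine translation of \ref{C5} and \ref{C6}.
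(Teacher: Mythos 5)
Your proof is correct and follows essentially the same route as the paper: reduce to Theorem \ref{T:1-pol}, observe that \ref{C5} and \ref{C6} say exactly that $Z_X\subseteq\,\leq_X$ and $Z_Y\subseteq\,\leq_Y$ (i.e.\ that $\iota_X$ and $\iota_Y$ reflect order), and obtain the converse by running $x_1\preceq e_Y(p)\preceq e_X(p)\preceq x_2$ through the embedding hypothesis. One minor quibble with your closing remark: the injectivity of $\iota_X$ follows from order-reflection together with antisymmetry of $\leq_X$, not from the disjointness of $X$ and $Y$, but this does not affect the argument.
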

\begin{proof}
Suppose first that $E$ is 2-coherent. As we know from Theorem \ref{T:1-pol} that $\hRm$ is a 1-preorder, we need only show that the maps $\iota_X$ and $\iota_Y$ from the diagram in Figure \ref{F:fix1} are order reflecting. But this is immediate from \ref{C5} and \ref{C6}, which amount to stating that $Z_X\subseteq \leq_X$ and $Z_Y\subseteq \leq_Y$, respectively. Thus (1)$\implies$(2). 

That (2)$\implies$(3) is automatic, so suppose now that $\preceq$ is a 2-preorder for $E$. Then, given $x_1,x_2\in X$ and $p\in P$ with $x_1 \R  e_Y(p)$ and $e_X(p)\leq_X x_2$, as $\preceq$ is a 2-preorder (so necessarily a 1-preorder), we have $x_1\preceq e_Y(p)\preceq e_X(p)\preceq x_2$, and thus $x_1\preceq x_2$ by transitivity. It follows immediately from the assumption that $\iota_X$ is an order embedding that $x_1\leq_X x_2$. This proves that \ref{C5} holds for $E$, and that \ref{C6} also holds follows by a dual argument.

That the set of 2-preorders is closed under non-empty intersections is easy to see, and that $\hRm$ is its smallest member (assuming it has any) follows immediately from the corresponding statement about 1-preorders made as part of Theorem \ref{T:1-pol}. Finally, we have noted that \ref{C5} and \ref{C6} imply that $Z_X\subseteq \leq_X$ and $Z_Y\subseteq \leq_Y$, and the opposite inclusions are automatic, so we are done.  
\end{proof}

We will provide examples showing that the strengths of the coherence conditions defined so far are strictly increasing, but we defer this till Section \ref{S:strict}.

\begin{defn}\label{D:3-pre}
Let $E=\Pe$ be an extension polarity. Define a \textbf{3-preorder} for $E$ to be a 2-preorder for $E$ such that the maps $\iota_X$ and $\iota_Y$ from the diagram in Figure \ref{F:fix1} satisfy the following conditions:
\begin{description}
\item[(P4)\label{P4}] For all $S\subseteq P$, if $\bw e_X[S]$ exists in $X$, then $\iota_X(\bw e_X[S]) = \bw \iota_X\circ e_X[S]$.
\item[(P5)\label{P5}] For all $T\subseteq P$, if $\bv e_Y[T]$ exists in $Y$, then $\iota_Y(\bv e_Y[T]) = \bv \iota_Y\circ e_Y[T]$.
\end{description}
\end{defn}

\begin{defn}\label{3-co}
Let $E=\Pe$ be an extension polarity. We say $E$ is \textbf{3-coherent} if it is 2-coherent and also satisfies the following conditions:
\begin{description}
\item[(C7)\label{C7}] \begin{align*}&(\forall x\in X)(\forall y_1,y_2\in Y)(\forall S\subseteq P)\\ &\Big(\big(\bw e_X[S]= x \amp x \R  y_2 \amp \forall p\in S\big(y_1\leq_Y e_Y(p)\big)\big)\ra y_1\leq_Y y_2\Big).\end{align*} 
\item[(C8)\label{C8}] \begin{align*}&(x_1,x_2\in X)(\forall y\in Y)(\forall T\subseteq P)\\ & \Big(\big(\bv e_Y[T] = y \amp x_1 \R  y \amp \forall q\in T \big(e_X(q)\leq_X x_2\big)\big)\ra x_1\leq_X x_2\Big).\end{align*}
\end{description}
\end{defn}

\begin{defn}[$\hRg$]\label{D:hRg}
Let $\Pe$ be an extension polarity. Define the relation $\hRg\subseteq (X\cup Y)^2$ to be the union of $\Rz$ with the sets $Z_S$ and $Z_T$ defined below.
\begin{itemize}
\item[] \begin{align*}Z_S = \{&(y,x)\in Y\times X : \exists S\subseteq P\Big(\bw e_X[S]\text{ exists in $X$}, \bw e_X[S]\leq_X x\\ &\text{ and }\forall p\in S\big(y\leq_Y e_Y(p) \big) \Big)\}.\end{align*} 
\item[] \begin{align*}Z_T = \{&(y,x)\in Y\times X : \exists T\subseteq P\Big(\bv e_Y[T]\text{ exists in $Y$}, y\leq_Y \bv e_Y[T] \\ &\text{ and }\forall q\in T\big(e_X(q)\leq_X x \big) \Big)\}.\end{align*} 
\end{itemize}
\end{defn}

\begin{lemma}\label{L:upgrade}
Let $E=\Pe$ be a 3-coherent extension polarity, and let $\preceq$ be a 2-preorder for $E$. Then $\preceq$ is also a 3-preorder for $E$ if and only if $Z_S\cup Z_T\subseteq \preceq$.
\end{lemma}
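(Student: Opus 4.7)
The argument splits into the two implications, and in each direction both defining conditions (\ref{P4}/\ref{P5}) and $Z_S/Z_T$ are handled by dual arguments, so I would write one case in full and then appeal to duality.

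For the forward direction, assume $\preceq$ is a 3-preorder and take $(y,x)\in Z_S$, witnessed by $S\subseteq P$ with $x':=\bw e_X[S]$ existing in $X$, $x'\leq_X x$, and $y\leq_Y e_Y(p)$ for all $p\in S$. Since $\preceq$ is a 2-preorder, $\iota_Y$ is order preserving and the diagram in Figure \ref{F:fix1} commutes, hence $\iota_Y(y)\leq\iota_Y(e_Y(p))=\iota_X(e_X(p))$ in $\XUpY$ for every $p\in S$. Applying \ref{P4} then gives $\iota_Y(y)\leq\bw\iota_X\circ e_X[S]=\iota_X(x')\leq\iota_X(x)$, which translates to $y\preceq x$. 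Hence $Z_S\subseteq\preceq$, and a dual application of \ref{P5} shows $Z_T\subseteq\preceq$.

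For the backward direction, assume $Z_S\cup Z_T\subseteq\preceq$ and check \ref{P4} (the argument for \ref{P5} is dual). Fix $S\subseteq P$ with $x:=\bw e_X[S]$ existing in $X$. Order-preservation of $\iota_X$ makes $\iota_X(x)$ a lower bound of $\iota_X\circ e_X[S]$ in $\XUpY$. To see it is the greatest such lower bound, take $z\in\XUpY$ with $z\leq\iota_X(e_X(p))$ for every $p\in S$, pick a representative $a\in X\cup Y$, and split into two cases. If $a\in X$ then $a\preceq e_X(p)$ for all $p\in S$; because $\iota_X$ is an order embedding (2-preorder condition) this means $a\leq_X e_X(p)$ for all $p\in S$, so $a\leq_X x$ and therefore $z\leq\iota_X(x)$.

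The less immediate case, which I regard as the main obstacle, is when the representative lies in $Y$, say $a=y_0$. Here the hypothesis reads $y_0\preceq e_X(p)$ for all $p\in S$, a preorder relation of type $Y\times X$ about which \ref{P1}–\ref{P3} say nothing directly. The trick is to exploit commutativity of the diagram in Figure \ref{F:fix1}: it forces $\iota_X(e_X(p))=\iota_Y(e_Y(p))$ and hence $e_X(p)\preceq e_Y(p)$ (and vice versa). Transitivity then converts $y_0\preceq e_X(p)$ into $y_0\preceq e_Y(p)$, and since $\iota_Y$ is an order embedding this gives $y_0\leq_Y e_Y(p)$ for every $p\in S$. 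The trivial inequality $\bw e_X[S]\leq_X x$ now places $(y_0,x)$ in $Z_S$, so by assumption $y_0\preceq x$, and therefore $z\leq\iota_X(x)$. This establishes \ref{P4}; the dual argument, passing through $Z_T$ instead of $Z_S$, gives \ref{P5} and completes the proof.
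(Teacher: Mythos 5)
Your proposal is correct and follows essentially the same route as the paper's proof: forward direction via \ref{P4}/\ref{P5} applied to the witnessing set $S$ (resp.\ $T$), backward direction by splitting a lower bound $z$ into its $X$- and $Y$-representative cases and using membership in $Z_S$ for the latter. The only difference is that you spell out explicitly how $y_0\preceq e_X(p)$ converts to $y_0\leq_Y e_Y(p)$ via commutativity of Figure \ref{F:fix1} and the embedding property, a step the paper compresses into ``it follows from the fact that $\preceq$ is a 2-preorder.''
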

\begin{proof}
Suppose first that $\preceq$ is a 3-preorder for $E$, let $S\subseteq P$, and suppose $\bw e_X[S]$ exists. Suppose also that $y\leq_Y e_Y(p)$ for all $p\in S$. Then, by the assumption that $\preceq$ is a 3-preorder (and so necessarily a 1-preorder) it follows that $y\preceq e_X(p)$ for all $p\in S$, and thus that $\iota_Y(y)$ is a lower bound for $\iota_X\circ e_X[S]$. It then follows from the meet-preservation property of 3-preorders that $y\preceq \bw e_X[S]$ as claimed. Thus $Z_S\subseteq \preceq$. That $Z_T\subseteq \preceq$ follows from a dual argument.

For the converse, let $\preceq$ be a 2-preorder for $E$ and suppose first that $Z_S\subseteq \preceq$. Let $S\subseteq P$ and suppose $\bw e_X[S]$ is defined in $X$. Then $\iota_X(\bw e[S])$ is obviously a lower bound for $\iota_X\circ e_X[S]$. Let $z\in \XUpY$ and suppose $z$ is also a lower bound for $\iota_X\circ e_X[S]$. If $z\in \iota_X[X]$, then we must have $z\leq \iota_X(\bw e[S])$, as $\iota_X$ is an order embedding. Moreover, if $z = \iota_Y(y)$ for some $y\in Y$, then we have $y\preceq e_X(p)$ for all $p\in S$, and it follows from the fact that $\preceq$ is a 2-preorder that $y\leq_Y e_Y(p)$ for all $p\in S$. Consequently, that $y\preceq \bw e_X[S]$, and thus that $z \leq \iota_X(\bw e_X[S])$, follows from the definition of $Z_S$. So $\preceq$ satisfies \ref{P4}, and thus also \ref{P5} by duality. 
\end{proof}

\begin{thm}\label{T:3-pol}
Given an extension polarity $E=\Pe$, the following are equivalent:
\begin{enumerate}
\item $E$ is 3-coherent.
\item $\hRg$ is a 3-preorder for $E$.
\item There exists a 3-preorder for $E$.
\end{enumerate}
Moreover, the set of 3-preorders for $E$ is closed under non-empty intersections and, if it is non-empty, has $\hRg$ as its smallest member. 
\end{thm}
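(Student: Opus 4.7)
The plan is to follow the template of Theorems \ref{T:0-pol}, \ref{T:1-pol}, and \ref{T:2-pol}: establish the cycle $(1)\Rightarrow(2)\Rightarrow(3)\Rightarrow(1)$, then verify intersection-closure and the minimality of $\hRg$. The substantive work is $(1)\Rightarrow(2)$, and more specifically the transitivity of $\hRg$.

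For $(1)\Rightarrow(2)$, assume $E$ is 3-coherent (and therefore 2-coherent). Reflexivity of $\hRg$ follows since $\leq_X, \leq_Y \subseteq \Rz \subseteq \hRg$. Transitivity proceeds by the now-familiar case split on triples $(z_1,z_2,z_3)\in(X\cup Y)^3$. The four cases whose pattern stays within $X$ or $Y$, or which take the single forward jump $x\R y$ (i.e.\ XXX, YYY, XXY, XYY), reduce to \ref{C1}, \ref{C2} and transitivity of $\leq_X,\leq_Y$ exactly as in the earlier theorems. The cases $(y,x_1,x_2)$ and $(y_1,y_2,x)$ are handled by simply extending the witnessing family in the relevant $Z_S$ or $Z_T$: for instance, if $(y,x_1)\in Z_S$ via some $S$ and $x_1\leq_X x_2$, then the same $S$ witnesses $(y,x_2)\in Z_S$. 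The two genuinely new cases are $(x_1,y,x_2)$ and $(y_1,x,y_2)$, where the middle step is the $Y\to X$ jump supplied by $Z_S$ or $Z_T$, and this is where the new axioms are needed: in $(x_1,y,x_2)$ with $(y,x_2)\in Z_S$, apply \ref{C2} to get $x_1\R e_Y(p)$ for every $p\in S$, then \ref{C5} to get $x_1\leq_X e_X(p)$ for every $p\in S$, and take the meet to conclude $x_1\leq_X\bw e_X[S]\leq_X x_2$; for $(y,x_2)\in Z_T$, apply \ref{C2} to get $x_1\R\bv e_Y[T]$ and then \ref{C8} directly. The case $(y_1,x,y_2)$ is dual: with $Z_S$, use \ref{C1} to pull $\bw e_X[S]\R y_2$ across and then apply \ref{C7}; with $Z_T$, use \ref{C1} and a pointwise application of \ref{C6}. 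Once transitivity is in hand, axioms (P1)--(P3) hold because $\Rz\subseteq\hRg$ and $Z_S\cup Z_T\subseteq Y\times X$ does not interfere with them; the diagram of Figure \ref{F:fix1} commutes because \ref{C3} gives $e_X(p)\R e_Y(p)$ and the singleton family $S=\{p\}$ witnesses $(e_Y(p),e_X(p))\in Z_S$; the order-embedding conditions for $\iota_X,\iota_Y$ hold because the restriction of $\hRg$ to $X^2$ (respectively $Y^2$) collapses to $\leq_X$ (respectively $\leq_Y$). Finally Lemma \ref{L:upgrade} upgrades the 2-preorder $\hRg$ to a 3-preorder, as $Z_S\cup Z_T\subseteq\hRg$ holds by construction.

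The implication $(2)\Rightarrow(3)$ is immediate. For $(3)\Rightarrow(1)$, let $\preceq$ be any 3-preorder for $E$; then $E$ is 2-coherent by Theorem \ref{T:2-pol}, so only \ref{C7} and \ref{C8} need to be checked. For \ref{C7}, the hypotheses $y_1\leq_Y e_Y(p)$ for all $p\in S$ and the 1-preorder property give $\iota_Y(y_1)\leq\iota_X(e_X(p))$ in $\XUpY$ for each $p\in S$; the meet-preservation condition \ref{P4} then produces $\iota_Y(y_1)\leq\iota_X(\bw e_X[S])=\iota_X(x)$, the assumption $x\R y_2$ gives $\iota_X(x)\leq\iota_Y(y_2)$, and the order-embedding property of $\iota_Y$ closes the argument with $y_1\leq_Y y_2$. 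Condition \ref{C8} is dual. For closure under non-empty intersections, intersections of preorders are preorders, and axioms (P1)--(P5) together with diagram commutation and the order-embedding properties are each preserved componentwise; the 3-preorder structure of the intersection can also be recovered from Lemma \ref{L:upgrade} once $\hRg$ is known to be the smallest member. That $\hRg$ is indeed the smallest 3-preorder follows because any 3-preorder contains $\Rz$ (from the 0-preorder axioms) and contains $Z_S\cup Z_T$ (by Lemma \ref{L:upgrade}). The main obstacle throughout is the transitivity of $\hRg$, and within that, the two mixed cases $(x_1,y,x_2)$ and $(y_1,x,y_2)$: these are precisely the places at which the newly introduced axioms \ref{C7} and \ref{C8} (together with \ref{C5} and \ref{C6} from 2-coherence) must be marshalled to close the chain through a meet in $X$ or a join in $Y$.
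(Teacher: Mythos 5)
Your proposal is correct and follows essentially the same route as the paper: the same case analysis for transitivity of $\hRg$ (including the identical use of \ref{C2}+\ref{C5} with a meet, and \ref{C2}+\ref{C8}, in the $(x_1,y,x_2)$ case and its dual), the same use of Lemma \ref{L:upgrade} both to upgrade $\hRg$ to a 3-preorder and to establish minimality, and the same \ref{P4}-based argument for $(3)\Rightarrow(1)$. No gaps.
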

\begin{proof}
Suppose first that $E$ is 3-coherent.   It's apparent from the definition of $\hRg$ that it is reflexive and that the $\iota$ maps will be order embeddings. Now, we have $e_X(p)\hRg e_Y(p)$ from \ref{C3} and the definition of $\hRg$, and by setting $S= \{p\}$ and $x = e_X(p)$, we can get $e_Y(p) \hRg e_X(p)$ from the fact that $(e_Y(p),e_X(p))\in Z_S$. Thus the diagram in Figure \ref{F:fix1} will commute. So, if $\hRg$ is transitive, then it is a 2-preorder, and thus a 3-preorder, for $E$ by Lemma \ref{L:upgrade}.

The main work now is showing that $\hRg$ is transitive. Again this breaks down into eight cases of form $(z_1,z_2,z_3)$. The cases where a $y$ value does not appear before an $x$ value are covered by the proof of Theorem \ref{T:0-pol}, so the proofs need not be repeated. There are four remaining cases.

\begin{enumerate}[$\bullet$]
\item $(y,x_1,x_2)$: We have $x_1\leq_X x_2$, and two subcases.
\begin{enumerate}[$-$]
\item Suppose first that $(y,x_1)\in Z_S$. So there is $S\subseteq P$ with $\bw e_X[S] \leq_X x_1$ and $y\leq_Y e_Y(p)$ for all $p\in S$. Then, since $x_1\leq_X x_2$ we also have $\bw e_X[S] \leq_X x_2$, and so $(y,x_2)\in Z_S$ too. 
\item Suppose instead that $(y,x_1)\in Z_T$. Then there is $T\subseteq P$ with $\bv e_Y[T] \geq_Y y$ and $e_X(q)\leq_X x_1$ for all $q\in T$. Then, as $x_1\leq_X x_2$ we also have $e_X(q) \leq_X x_2$ for all $q\in T$, and so $(y,x_2)\in Z_T$ too.
\end{enumerate}

\item $(y_1,y_2,x)$: Dual to the previous case. 

\item $(x_1,y,x_2)$: We have $x_1 \R  y$ and two subcases.
\begin{enumerate}[$-$]
\item Suppose first that $(y,x_2)\in Z_S$. Then there is $S\subseteq P$ with $\bw e_X[S] \leq_X x_2$ and $y\leq_Y e_Y(p)$ for all $p\in S$. So, given $p\in S$, as $x_1 \R y$ by assumption, we have  $x_1 \R  e_Y(p)$ by \ref{C2}. It then follows from \ref{C5} that $x_1\leq_X e_X(p)$, and so $x_1\leq_X \bw e_X[S] \leq_X x_2$ as required.
\item Suppose now that $(y,x_2)\in Z_T$. Then there is $T\subseteq P$ with $y\leq_Y \bv e_Y[T] $ and $e_X(q)\leq_X x_2$ for all $q\in T$. Then we have $x_1\R\bv e_Y[T]$ by \ref{C2}, and so $x_1\leq_X x_2$ by \ref{C8}.
\end{enumerate}

\item Dual to the previous case.
\end{enumerate}

This proves that $\hRg$ is a 3-preorder, and thus (1)$\implies$(2). Again, that (2)$\implies$(3) is automatic, so suppose now that $\preceq$ is a 3-preorder for $E$. If \ref{C7} were to fail for some $y_1,y_2\in Y$, then we would have $y_1\preceq y_2$ (via an appeal to \ref{P4} and other properties of $\preceq$), but not $y_1\leq_Y y_2$, which would contradict the fact that $\preceq$ is a 3-preorder. That we also have \ref{C8} follows from a dual argument. Thus (3)$\implies$(1).

Finally, it is again easy to show that the set of 3-preorders on $E$ will be closed under non-empty intersections, and that $\hRg$ is its smallest element whenever it is non-empty follows immediately from Lemma \ref{L:upgrade}.
\end{proof}

Note that, when $\Pe$ is 2-coherent, given $x\in X$ and $y\in Y$, and given $p,q\in P$ such that
\begin{enumerate}
\item $y\leq_Y e_Y(p)$,
\item $e_X(p)\R  e_Y(q)$, and 
\item $e_X(q)\leq_X x$, 
\end{enumerate}
by setting $S = \{q\}$ we have $\bw e_X[S]\leq_X x$, and also $y\leq_Y e_Y(q)$ by \ref{C5}. Recalling Definitions \ref{D:hRm} and \ref{D:hRg}, it follows that $Z_{YX}\subseteq Z_S$, and we also have $Z_{YX}\subseteq Z_T$ by a dual argument. Example \ref{E:Psep}, later, demonstrates that these inclusions may be strict, as, even when $E=\Pe$ is 3-coherent, there may be a 2-preorder $\preceq$ for $E$ that is not also a 3-preorder for $E$. In that example we have $Z_S\cap Z_T\not\subseteq \preceq$, but we must have $Z_{YX}\subseteq \preceq$ as, by Theorem \ref{T:2-pol}, we have $\hRm\subseteq \preceq$. Thus we cannot have either $Z_S\subseteq Z_{YX}$ or $Z_T\subseteq Z_{YX}$.

\section{Galois polarities}\label{S:Galois}
\subsection{Entanglement}
In applications of polarities to completion theory, the orders on the sets $X$ and $Y$ of an order polarity $\PXY$ are related to $\R $ via a property we present here as Definition \ref{D:ent}. 

\begin{defn}\label{D:ent}
If $\Pe$ is an extension polarity, we say it is \textbf{entangled} if the following conditions are satisfied:
\begin{description}
\item[(E1)\label{E1}] For all $x_1\not\leq x_2\in X$ there is $y\in Y$ with $(x_2,y)\in \R $ and $(x_1,y)\notin \R$.
\item[(E2)\label{E2}] For all $y_1\not\leq y_2\in Y$ there is $x\in X$ with $(x,y_1)\in \R  $ and $(x,y_2)\notin \R  $.
\end{description}
In this situation we also say that $\Pe$ is an \textbf{entangled polarity}.
\end{defn}

For entangled polarities we can refine Definition \ref{D:0-co} using the following lemma.

\begin{lemma}\label{L:ent}
Let $\PXY$ be an entangled order polarity. Then $\PXY$ is 0-coherent if and only if:
\begin{description}
\item[(C1$'$)\label{C1'}]  $\forall x_1,x_2\in X\Big(x_1\leq_X x_2 \lra \forall y\in Y\big(x_2\R  y\ra x_1\R y\big)\Big)$.
\item[(C2$'$)\label{C2'}] $\forall y_1,y_2\in Y\Big(y_1\leq_Y y_2\lra \forall x\in X\big(x\R  y_1\ra x\R y_2\big)\Big)$.
\end{description}
\end{lemma}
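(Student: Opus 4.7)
The plan is to observe that (C1$'$) and (C2$'$) are just (C1) and (C2) promoted from one-way implications to biconditionals, so the proof reduces to matching the two new implications against (E1) and (E2).

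For the ($\Leftarrow$) direction there is essentially nothing to do: the forward implications in (C1$'$) and (C2$'$) are literally (C1) and (C2), so assuming (C1$'$) and (C2$'$) gives 0-coherence immediately. I would dispatch this in one sentence.

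For the ($\Rightarrow$) direction, assume $\PXY$ is 0-coherent and entangled. The forward implications in (C1$'$) and (C2$'$) come for free from \ref{C1} and \ref{C2}, so only the reverse implications need attention. For the reverse implication in (C1$'$), I would argue by contraposition: fix $x_1, x_2 \in X$ with $x_1 \not\leq_X x_2$; then \ref{E1} supplies a $y \in Y$ with $x_2 \R y$ and $x_1 \not\R y$, which is exactly the negation of $\forall y\in Y(x_2 \R y \ra x_1 \R y)$. The reverse implication in (C2$'$) is handled the same way via \ref{E2}.

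There is no real obstacle here; the lemma is essentially a repackaging. The only thing to be mildly careful about is making sure the direction of the biconditionals in (C1$'$) and (C2$'$) is lined up correctly with \ref{E1} and \ref{E2} (note that (E1) is stated for $x_1 \not\leq x_2$ and concludes with $(x_2,y)\in\R$, $(x_1,y)\notin\R$, which matches precisely what is needed). Accordingly I would present the whole argument as two short contrapositive observations, with no calculation.
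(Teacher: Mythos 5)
Your proposal is correct and matches the paper's argument, which simply notes that under entanglement \ref{C1'} and \ref{C2'} are equivalent to \ref{C1} and \ref{C2} respectively; your contrapositive reading of \ref{E1} and \ref{E2} is exactly the content of that observation, just written out.
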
 
\begin{proof}
We claim that \ref{C1'} and \ref{C2'} here are equivalent, respectively, to \ref{C1} and \ref{C2} when $\Pe$ is entangled. This is immediate from the definitions. 
\end{proof}

In the case of entangled polarities, using \ref{C1'} and \ref{C2'} we could, if we were so inclined, restate things like the various coherence conditions to avoid explicit reference to the orders on $X$ and $Y$. Lemma \ref{L:ent} also has the following corollary.

\begin{cor}\label{C:ent}
Let $E=\Pe$ be an entangled extension polarity. Then, if $\preceq$ is a 0-preorder for $E$, for all $x_1,x_2\in X$ we have $x_1\leq_X x_2 \iff x_1\preceq x_2$, and for all $y_1,y_2\in Y$ we have $y_1\leq_Y y_2\iff y_1\preceq y_2$. Similarly, if $\preceq$ is a 1-preorder for $E$, then it is also a 2-preorder for $E$. 
\end{cor}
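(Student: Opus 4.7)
The plan is to derive part 1 directly from the entanglement conditions together with the defining properties of a 0-preorder, and then to observe that part 2 follows almost immediately from part 1 once one unpacks what it means for the maps $\iota_X,\iota_Y$ to be order embeddings into $\XUpY$.

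For part 1, the forward implications $x_1\leq_X x_2\Rightarrow x_1\preceq x_2$ and $y_1\leq_Y y_2\Rightarrow y_1\preceq y_2$ are exactly conditions \ref{P2} and \ref{P3} in Definition \ref{D:0-pre}. For the converses I would argue by contrapositive on the $X$ side, the $Y$ side being dual. Suppose $x_1,x_2\in X$ with $x_1\not\leq_X x_2$. By entanglement \ref{E1}, choose $y\in Y$ with $x_2\R y$ and $x_1\not\R y$. Then \ref{P1} gives $x_2\preceq y$; if we also had $x_1\preceq x_2$, transitivity of $\preceq$ would yield $x_1\preceq y$, whence \ref{P1} in the other direction would force $x_1\R y$, a contradiction. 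So $x_1\not\preceq x_2$, as required. The argument using \ref{E2} is symmetric.

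For part 2, note that a 1-preorder is in particular a 0-preorder, so part 1 applies. By the construction of $\XUpY$ as the canonical partial-order quotient of $\XUY$ (Definition \ref{D:XUP}), we have $\iota_X(x_1)\leq \iota_X(x_2)$ iff $x_1\preceq x_2$, and similarly for $\iota_Y$; part 1 then identifies these conditions with $x_1\leq_X x_2$ and $y_1\leq_Y y_2$ respectively, showing that $\iota_X$ and $\iota_Y$ are order embeddings. This is exactly the extra clause in Definition \ref{D:2-pre} that distinguishes 2-preorders from 1-preorders.

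There is no substantive obstacle here; the only subtlety worth checking is that the passage from $\XUY$ to its canonical quotient $\XUpY$ cannot identify distinct elements within $X$ (or within $Y$), so that $\iota_X$ and $\iota_Y$ are genuinely injective. This is secured by part 1 itself: if $x_1\preceq x_2$ and $x_2\preceq x_1$, part 1 gives $x_1\leq_X x_2$ and $x_2\leq_X x_1$, and antisymmetry of $\leq_X$ forces $x_1=x_2$ (with the dual observation for $Y$).
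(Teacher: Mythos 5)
Your proof is correct and follows essentially the same route as the paper: the forward directions are \ref{P2}/\ref{P3}, the converses use entanglement plus transitivity and \ref{P1} to derive a contradiction from $x_1\preceq x_2$ when $x_1\not\leq_X x_2$, and the second claim reduces to observing that the only difference between 1- and 2-preorders is that $\iota_X$ and $\iota_Y$ must be order embeddings. Your explicit check that the quotient to $\XUpY$ does not identify distinct elements of $X$ (or $Y$) is a sensible added detail that the paper leaves implicit.
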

\begin{proof}
Let $\preceq$ be a 0-preorder for $E$.  Appealing to Lemma \ref{L:ent} we assume that \ref{C1'} and \ref{C2'} both hold. Let $x_1\not\leq_X x_2\in X$. Then, by entanglement, there is $y\in Y$ with $(x_2,y)\in \R$ and $(x_1,y)\notin \R $. So we cannot have $x_1 \preceq x_2$, as otherwise transitivity would produce $x_1 \preceq y$, and consequently $x_1 \R  y$. This proves the first claim. The second claim also follows from this argument, as the difference between 1-preorders and 2-preorders is only that in the latter case the induced maps $\iota_X$ and $\iota_Y$ from Figure \ref{F:fix1} must be order embeddings.   
\end{proof}

Note that, if we treat order polarities as extension polarities where $P$ is the empty poset, then given an order polarity $O=\PXY$, every 0-preorder for $O$ is automatically a 1-preorder. Thus if $O$ is an entangled order polarity, the above result shows that its sets of 0-, 1-, and 2-preorders coincide. However, it is not the case that every 2-preorder for $O$ is necessarily a 3-preorder, as the following example demonstrates. 

\begin{ex} Consider the order polarity $O=(\{x\},\{y\},\{(x,y)\})$ and the preorder $\Rz$ from Definition \ref{D:Rz}. Then $\Rz$ is trivially a 2-preorder for $O$, but it is not a 3-preorder for $O$ as, for example, 
\[\bw_X e_X[\emptyset] = x\not\succeq_0 y = \bw_{X\uplus_{\Rz} Y} \emptyset = \bw_{X\uplus_{\Rz} Y}\iota_X\circ e_X[\emptyset],\]
and thus $\Rz$ does not satisfy \ref{P4}. 
\end{ex}

\subsection{Defining Galois polarities}
\begin{defn}
A \textbf{Galois polarity} is a 3-coherent extension polarity $\Pe$ such that $e_X:P\to X$ is a meet-extension, and $e_Y:P\to Y$ is a join-extension.
\end{defn}

The motivation for the name \emph{Galois polarity} will become clear in Section \ref{S:GC}. Galois polarities have several strong properties, as we shall see. We will use the following lemma.

\begin{lemma}\label{L:conds}
Let $\Pe$ be an extension polarity. Suppose $\Pe$ satisfies \ref{C3}. Then, if $\Pe$ satisfies \ref{C1}, it also satisfies $(\dagger_1)$ below. Similarly, if $\Pe$ satisfies \ref{C2}, then it also satisfies $(\dagger_2)$.
\begin{itemize}
\item[($\dagger_1$)] $(\forall p\in P)(\forall x\in X) \Big(x \leq_X e_X(p)\ra x\R  e_Y(p)\Big)$. 
\item[($\dagger_2$)] $(\forall p\in P)(\forall y\in Y) \Big(e_Y(p) \leq_Y y \ra e_X(p) \R  y\Big)$.
\end{itemize}
Moreover, if a polarity $\Pe$ satisfies either $(\dagger_1)$ or $(\dagger_2)$, then it also satisfies \ref{C3}.  
\end{lemma}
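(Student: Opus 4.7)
The plan is to observe that each implication collapses to a one-line application of either monotonicity or a trivial instance, so the proof will be very short and broken into three small pieces.

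First I would handle the implication \ref{C1} $\implies (\dagger_1)$ under the assumption of \ref{C3}. Fix $p\in P$ and $x\in X$ with $x\leq_X e_X(p)$. By \ref{C3} we have $e_X(p)\R e_Y(p)$, and then \ref{C1}, applied with $x_1=x$, $x_2=e_X(p)$, and $y=e_Y(p)$, immediately yields $x\R e_Y(p)$. The implication \ref{C2} $\implies (\dagger_2)$ is entirely dual: given $p\in P$ and $y\in Y$ with $e_Y(p)\leq_Y y$, apply \ref{C2} to $e_X(p)\R e_Y(p)$ (from \ref{C3}) to conclude $e_X(p)\R y$.

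For the ``moreover'' clause, I would argue that each of $(\dagger_1)$ and $(\dagger_2)$ gives back \ref{C3} by a trivial specialization. If $(\dagger_1)$ holds, then for any $p\in P$, set $x=e_X(p)$; since $e_X(p)\leq_X e_X(p)$, we obtain $e_X(p)\R e_Y(p)$, which is exactly \ref{C3}. The same instantiation $y=e_Y(p)$ in $(\dagger_2)$ works symmetrically.

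I expect no serious obstacles; the only thing to be careful about is keeping the roles of $x_1$, $x_2$ in \ref{C1} (and $y_1$, $y_2$ in \ref{C2}) straight, and making clear that \ref{C3} is used in both directions of the two-way relationship between $(\dagger_1)$/$(\dagger_2)$ and \ref{C3}.
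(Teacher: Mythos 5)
Your proof is correct and follows essentially the same route as the paper: use \ref{C3} to obtain $e_X(p)\R e_Y(p)$ and then apply \ref{C1} (resp.\ \ref{C2}) with the instantiation you describe, and recover \ref{C3} from $(\dagger_1)$ or $(\dagger_2)$ by the reflexivity specialization $x=e_X(p)$ (resp.\ $y=e_Y(p)$). No gaps.
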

\begin{proof}
Suppose $\Pe$ satisfies \ref{C1} and \ref{C3}, and let $x\leq_X e_X(p)$ for some $x\in X$ and $p\in P$. Then $e_X(p) \R  e_Y(p)$ by \ref{C3}, and so $x \R  e_Y(p)$ by \ref{C1}. Thus $\Pe$ satisfies ($\dagger_1$). The case where we assume \ref{C2} and \ref{C3} to prove ($\dagger_2$) is dual.  
Suppose now that $\Pe$ satisfies ($\dagger_1$), and let $p\in P$. Then, as $e_X(p)\leq_X e_X(p)$, we have $e_X(p) \R  e_Y(p)$ by ($\dagger_1$), and thus $\Pe$ satisfies \ref{C3}. The case where we assume ($\dagger_2$) and prove \ref{C3} is again dual. 
\end{proof}

\begin{lemma}\label{L:Gent}
Galois polarities are entangled.
\end{lemma}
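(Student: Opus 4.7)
The plan is to prove (E1) directly; then (E2) follows by the usual duality. So suppose for a contradiction that $x_1,x_2\in X$ with $x_1\not\leq_X x_2$, yet every $y\in Y$ with $x_2\R y$ also satisfies $x_1\R y$. The key tools are: that $e_X$ is a meet-extension, so that $x_2=\bw e_X[S]$ where $S=e_X^{-1}(x_2^\uparrow)$; condition $(\dagger_1)$ from Lemma \ref{L:conds}, which is available since a Galois polarity satisfies \ref{C1} and \ref{C3}; and condition \ref{C5}, part of 2-coherence.

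Concretely, I would argue as follows. For each $p\in S$ we have $x_2\leq_X e_X(p)$, so $(\dagger_1)$ yields $x_2\R e_Y(p)$. By the standing hypothesis this forces $x_1\R e_Y(p)$. Now \ref{C5}, applied with the roles of its ``$x_2$'' taken by $e_X(p)$ and using the trivial inequality $e_X(p)\leq_X e_X(p)$, gives $x_1\leq_X e_X(p)$. Since this holds for every $p\in S$, and $x_2=\bw e_X[S]$, we conclude $x_1\leq_X x_2$, contradicting the choice of $x_1,x_2$. This establishes \ref{E1}.

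For \ref{E2}, the argument is dual: use that $e_Y$ is a join-extension to write $y_1=\bv e_Y[T]$ for $T=e_Y^{-1}(y_1^\downarrow)$, apply $(\dagger_2)$ (also from Lemma \ref{L:conds}, using \ref{C2} and \ref{C3}) to convert each $e_Y(q)\leq_Y y_1$ into $e_X(q)\R y_1$, contradicting the assumption to get $e_X(q)\R y_2$, and then invoke \ref{C6} to obtain $e_Y(q)\leq_Y y_2$ for each $q\in T$, which joins up to $y_1\leq_Y y_2$.

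I do not expect a genuine obstacle here: once one notices that \ref{C5} is essentially the statement ``$x_1\R e_Y(p)$ forces $x_1\leq_X e_X(p)$'', and the meet-extension property lets $x_2$ be reconstructed from the $e_X(p)$ above it, the proof is forced. The only thing to be a little careful about is to invoke Lemma \ref{L:conds} to get $(\dagger_1)$ and $(\dagger_2)$ rather than re-deriving them, and to record explicitly which of the coherence conditions are being used (\ref{C1}, \ref{C2}, \ref{C3}, \ref{C5}, \ref{C6}), so that the reader sees that neither the top-level condition \ref{C4} nor the \ref{C7}/\ref{C8} part of 3-coherence is actually needed --- 2-coherence together with the meet- and join-extension hypotheses already suffices for entanglement.
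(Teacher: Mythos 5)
Your proof is correct and follows essentially the same route as the paper: both rest on the meet-extension property of $e_X$, condition $(\dagger_1)$ from Lemma \ref{L:conds}, and \ref{C5} (dually for \ref{E2}). The paper phrases it by directly exhibiting a witness $p$ with $x_2\leq_X e_X(p)$ and $x_1\not\leq_X e_X(p)$, whereas you argue by contradiction that $x_1$ would be a lower bound of $e_X[e_X^{-1}(x_2^\uparrow)]$; these are the same argument up to contraposition.
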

\begin{proof}
Let $\Pe$ be a Galois polarity, and let $x_1\not\leq x_2\in X$. Then, as $e_X$ is a meet-extension there is $p\in P$ with $x_1\not\leq_X e_X(p)$, and $x_2\leq_X e_X(p)$. Thus $x_2 \R  e_Y(p)$ by $(\dagger_1)$ of Lemma \ref{L:conds}. Moreover, if $x_1 \R  e_Y(p)$, then $x_1 \leq_X e_X(p)$ by \ref{C5}, which contradicts the choice of $p$. We conclude that \ref{E1} holds. A dual argument works for \ref{E2}.
\end{proof}

\begin{cor}\label{C:ent2}
If $G=\Pe$ is a Galois polarity, then every 1-preorder for $G$ is also a 2-preorder for $G$.
\end{cor}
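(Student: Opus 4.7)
The plan is to derive this as an immediate combination of the two most recently established facts. By Lemma \ref{L:Gent}, every Galois polarity is entangled, so $G = \Pe$ satisfies conditions \ref{E1} and \ref{E2}. This brings us into the hypotheses of Corollary \ref{C:ent}, whose second assertion is precisely that for an entangled extension polarity, every 1-preorder is automatically a 2-preorder. Thus the corollary follows by direct invocation: let $\preceq$ be a 1-preorder for $G$, apply Lemma \ref{L:Gent} to conclude $G$ is entangled, and then apply the second statement of Corollary \ref{C:ent} to $\preceq$ to conclude $\preceq$ is a 2-preorder for $G$.

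If one wanted to unpack the reason without citing Corollary \ref{C:ent}, the underlying argument is short. To check that $\iota_X$ and $\iota_Y$ are order embeddings, suppose $x_1 \not\leq_X x_2$; entanglement (via Lemma \ref{L:Gent}) supplies $y \in Y$ with $x_2 \R y$ and $(x_1, y) \notin \R$. If $\preceq$ were to satisfy $x_1 \preceq x_2$, transitivity together with \ref{P1} applied to $x_2 \preceq y$ would force $x_1 \preceq y$, hence $x_1 \R y$, a contradiction. A dual argument handles $\iota_Y$. Since a 1-preorder is already a 0-preorder (so \ref{P2} and \ref{P3} hold) and the diagram of Figure \ref{F:fix1} already commutes, this order-reflecting property is the only thing left to verify for the 2-preorder status.

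There is no real obstacle here: the hard work has been absorbed into Lemma \ref{L:Gent}, whose proof in turn relied on the meet-/join-extension properties of $e_X$ and $e_Y$ via Lemma \ref{L:conds}. The corollary simply records the pleasant consequence that, in the Galois setting, the distinction between 1-coherence and 2-coherence collapses at the level of preorders.
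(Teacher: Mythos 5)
Your proof is correct and follows exactly the paper's own route: the paper likewise derives this corollary immediately from Lemma \ref{L:Gent} together with Corollary \ref{C:ent}. The additional unpacking you provide accurately reproduces the argument inside Corollary \ref{C:ent}, so nothing is missing.
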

\begin{proof}
This follows immediately from Lemma \ref{L:Gent} and Corollary \ref{C:ent}.
\end{proof}

If $G$ is a Galois polarity, then $\hRg$ from Definition \ref{D:hRg} is the only 3-preorder for $G$ as we show in Theorem \ref{T:unique}. First, the following technical lemma will be useful.

\begin{lemma}\label{L:GalSimp}
If $E=\Pe$ is 3-coherent, then a) and b) below each imply c), for all $x\in X$ and for all $y\in Y$. Moreover, if $E$ is Galois, then a), b) and c) are all equivalent, for all $x$ and $y$.  
\begin{enumerate}[a)]
\item There is $S\subseteq P$ with $\bw e_X[S] \leq_X x$ and $y\leq_Y e_Y(p)$ for all $p\in S$.
\item There is $T\subseteq P$ with $\bv e_Y[T] \geq y$ and $e_X(q)\leq_X x$ for all $q\in T$.  
\item For all $p,q\in P$, if $e_Y(p)\leq_Y y$ and $x\leq_X e_X(q)$, then $p\leq_P q$.
\end{enumerate}
\end{lemma}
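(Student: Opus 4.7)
The plan is to prove the two implications a)$\Rightarrow$c) and b)$\Rightarrow$c) directly, using only that $e_X$ and $e_Y$ are order embeddings (not even 3-coherence is needed), and then to prove the converses c)$\Rightarrow$a) and c)$\Rightarrow$b) in the Galois setting by exploiting the meet- and join-extension properties of $e_X$ and $e_Y$ to construct witnesses $S$ and $T$.

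For a)$\Rightarrow$c), suppose $S\subseteq P$ witnesses a) and let $p,q\in P$ satisfy $e_Y(p)\leq_Y y$ and $x\leq_X e_X(q)$. Then for every $p'\in S$, $e_Y(p)\leq_Y y\leq_Y e_Y(p')$, so $p\leq_P p'$ because $e_Y$ is an order embedding. Applying $e_X$ (which is order preserving) yields $e_X(p)\leq_X e_X(p')$ for all $p'\in S$, hence $e_X(p)\leq_X \bw e_X[S]\leq_X x\leq_X e_X(q)$, and since $e_X$ is an order embedding we conclude $p\leq_P q$. The implication b)$\Rightarrow$c) is the exact dual: from $T$ witnessing b), and $p,q$ as in c), derive $q'\leq_P q$ for all $q'\in T$, push through $e_Y$, and use $e_Y(p)\leq_Y y\leq_Y \bv e_Y[T]\leq_Y e_Y(q)$ together with the embedding property.

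For the Galois case, assume c) and build a witness for a). Set $S=e_X^{-1}(x^\uparrow)$, so that the meet-extension hypothesis on $e_X$ gives $\bw e_X[S]=x$, establishing the first half of a). For the second half, fix $p\in S$. Since $x\leq_X e_X(p)$, condition c) applied with $q=p$ tells us that every $p'\in P$ with $e_Y(p')\leq_Y y$ satisfies $p'\leq_P p$, i.e.\ $e_Y^{-1}(y^\downarrow)\subseteq p^\downarrow$. Order preservation of $e_Y$ then makes $e_Y(p)$ an upper bound of $e_Y[e_Y^{-1}(y^\downarrow)]$, and since $e_Y$ is a join-extension we have $y=\bv e_Y[e_Y^{-1}(y^\downarrow)]\leq_Y e_Y(p)$, giving the second half of a). The implication c)$\Rightarrow$b) is established by the dual argument, choosing $T=e_Y^{-1}(y^\downarrow)$ and using the join-extension property to get $\bv e_Y[T]=y$ and the meet-extension property of $e_X$ to turn c) into $e_X(q')\leq_X x$ for each $q'\in T$.

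The main subtlety is recognising that c) is self-dual in the right way to allow the witnesses $S$ and $T$ to be obtained symmetrically from the meet- and join-extension structures; once one sees that $S=e_X^{-1}(x^\uparrow)$ (resp.\ $T=e_Y^{-1}(y^\downarrow)$) is the canonical candidate, the rest is just order-theoretic bookkeeping. No appeal to any of the coherence conditions \ref{C1}--\ref{C8} or to $\R$ itself is needed, so the lemma is ultimately a statement about how meet- and join-extensions interact through c), with the 3-coherence hypothesis simply providing the ambient context in which the lemma will be applied.
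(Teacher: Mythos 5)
Your proof is correct, and it takes a genuinely lighter route than the paper's. The paper proves a)$\Rightarrow$c) by invoking the existence of a 3-preorder $\preceq$ and the inclusion $Z_S\subseteq\preceq$ from Lemma \ref{L:upgrade}, chaining $e_Y(p)\preceq y\preceq\bw e_X[S]\preceq x\preceq e_X(q)$ and then descending to $p\leq_P q$ via the commutativity of Figure \ref{F:fix1}; for c)$\Rightarrow$a),b) it again works inside $X\uplus Y$, using \ref{P4} and \ref{P5} to get $y\preceq x$ before reading off the two conclusions. You bypass the preorder entirely: for a)$\Rightarrow$c) you observe that $e_X(p)$ is a lower bound of $e_X[S]$ and compare it with $\bw e_X[S]\leq_X x\leq_X e_X(q)$ inside $X$ alone, and for the converse you pick the same canonical witnesses $S=e_X^{-1}(x^\uparrow)$ and $T=e_Y^{-1}(y^\downarrow)$ as the paper does but verify the required inequalities directly from the meet-/join-extension properties. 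Both arguments are sound; yours has the added payoff of showing that the forward implications need nothing beyond $e_X$ and $e_Y$ being order embeddings, and the equivalence needs only that they be meet- and join-extensions, so the 3-coherence and Galois hypotheses are indeed just ambient context here, as you note. The paper's heavier phrasing does keep the statement visibly tied to the sets $Z_S$, $Z_T$ and the unique 3-preorder, which is how the lemma gets used in Corollary \ref{C:alt} and Theorem \ref{T:unique}, but nothing in the proof itself requires that machinery.
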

\begin{proof}
 As $E$ is 3-coherent, we can let $\preceq$ be a 3-preorder for $E$.  Suppose first that a) holds for $x$ and $y$, and let $p,q\in P$ with $e_Y(p)\leq_Y y$ and $x\leq_X e_X(q)$. Then we have
\[e_Y(p)\preceq y \preceq \bw e_X[S] \preceq x \preceq e_X(q)\]
for some $S\subseteq P$, using the fact that $Z_S\subseteq \preceq$, by Lemma \ref{L:upgrade}. By commutativity of the diagram in Figure \ref{F:fix1} we must therefore have $e_X(p)\preceq e_X(q)$, and so $p\leq_P q$. This shows a)$\implies$ c), and a dual argument shows b)$\implies$ c).

Suppose now that $E$ is Galois and that c) holds for $x$ and $y$. As $E$ is Galois we have $x = \bw e_X[S]$ for $S = e_X^{-1}(x^\uparrow)$, and $y = \bv e_Y[T]$ for $T=e_Y^{-1}(y^\downarrow)$. By c) we have $q\leq_P p$ for all $q\in T$ and $p\in S$. Given a 3-preorder $\preceq$ for $E$ we thus have $e_X(q) \preceq e_Y(p)$ for all $q\in T$ and $p\in S$, and, appealing to \ref{P4} and \ref{P5}, we must have $y = \bv e_Y[T] \preceq \bw e_X[S] = x$, and thus $y\leq_Y e_Y(p)$ for all $p\in S$, and $e_X(q)\leq_X x$ for all $q\in T$. It follows that c) implies both a) and b), and so we have the claimed equivalence.
\end{proof}

\begin{defn}\label{D:alt}
Given an extension polarity $\Pe$, define 
\[Z'_{YX} = \{(y,x)\in Y\times X: (\forall p\in e_Y^{-1}(y^\downarrow)(\forall q\in e_X^{-1}(x^\uparrow))\big(p\leq_P q \big)  \}.\]
\end{defn}

\begin{cor}\label{C:alt}
If $G = \Pe$ is a Galois polarity, then we can define $\hRg$ from Definition \ref{D:hRg} to be $\Rz\cup Z'_{YX}$.
\end{cor}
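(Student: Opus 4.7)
The plan is to invoke Lemma \ref{L:GalSimp} directly, since all the work has already been done there. Specifically, I would first observe that condition a) of Lemma \ref{L:GalSimp} is, by inspection, exactly the defining condition for membership of a pair $(y,x)\in Y\times X$ in the set $Z_S$ from Definition \ref{D:hRg}; likewise condition b) is precisely the membership condition for $Z_T$; and condition c) is precisely the condition defining $Z'_{YX}$ in Definition \ref{D:alt}.

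Since $G$ is a Galois polarity (and hence in particular 3-coherent), Lemma \ref{L:GalSimp} tells us that conditions a), b), and c) are pairwise equivalent for every pair $(y,x)\in Y\times X$. Translating this equivalence into set-theoretic terms at the level of subsets of $Y\times X$, we obtain the identities
\[
Z_S \;=\; Z_T \;=\; Z'_{YX}.
\]
In particular, $Z_S\cup Z_T = Z'_{YX}$.

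Chaining this with the definition of $\hRg$ from Definition \ref{D:hRg}, which is $\hRg = \Rz\cup Z_S\cup Z_T$, we conclude that $\hRg = \Rz\cup Z'_{YX}$, which is the claim. There is no real obstacle to overcome here: the content of the corollary is simply a repackaging of Lemma \ref{L:GalSimp} as an alternative formula for $\hRg$, and the proof is a one-line appeal to that lemma together with unfolding of Definitions \ref{D:hRg} and \ref{D:alt}.
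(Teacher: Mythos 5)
Your proof is correct and matches the paper's own argument, which likewise derives $Z'_{YX}=Z_S\cup Z_T$ as an immediate consequence of Lemma \ref{L:GalSimp} and then unfolds Definition \ref{D:hRg}. The identification of conditions a), b), c) with membership in $Z_S$, $Z_T$, $Z'_{YX}$ respectively is exactly the intended reading.
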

\begin{proof}
It follows immediately from Lemma \ref{L:GalSimp} that $Z'_{YX}=Z_S\cup Z_T$ in this case.
\end{proof}

\begin{thm}\label{T:unique}
If $G=\Pe$ is a Galois polarity, then:
\begin{enumerate}
\item The maps $\iota_X:X\to \XUgY$ and $\iota_Y:Y\to \XUgY$ are completely meet- and join-preserving respectively.
\item $\hRg$ is the only 3-preorder for $G$.
\item $X\uplus_{\hRg} Y$ is join-generated by $\iota_X[X]$, and meet-generated by $\iota_Y[Y]$. 
\end{enumerate}
\end{thm}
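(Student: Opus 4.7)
The plan is to establish (1), (2), and (3) in that order, with (1) being the technically deepest and (2) essentially a corollary of entanglement.

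For (1), I need to upgrade the preservation from \ref{P4} (which covers only meets of the form $\bw e_X[S]$) to arbitrary meets existing in $X$. Given $T\subseteq X$ whose meet $\bw T$ exists, $\iota_X(\bw T)$ is obviously a lower bound of $\iota_X[T]$. For universality, let $z\in\XUgY$ be a lower bound. If $z=\iota_X(x)$ the conclusion is immediate since $\iota_X$ is an order embedding. If $z=\iota_Y(y)$, then $y\hRg t$ for each $t\in T$, and I will verify $y\hRg\bw T$ via part c) of Lemma \ref{L:GalSimp}. Fix $p\in e_Y^{-1}(y^\downarrow)$ and $q\in e_X^{-1}((\bw T)^\uparrow)$; for each $t\in T$, part c) applied to $y\hRg t$ gives $p\leq_P q''$ for every $q''\in P$ with $t\leq_X e_X(q'')$, so $e_X(p)$ is a lower bound of $e_X[e_X^{-1}(t^\uparrow)]$. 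Since $e_X$ is a meet-extension, $e_X(p)\leq_X \bw e_X[e_X^{-1}(t^\uparrow)]=t$ for each $t\in T$. Hence $e_X(p)\leq_X\bw T\leq_X e_X(q)$, giving $p\leq_P q$ as required. The statement for $\iota_Y$ follows by duality.

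For (2), Theorem \ref{T:3-pol} already identifies $\hRg$ as the smallest 3-preorder, so it suffices to show any 3-preorder $\preceq$ satisfies $\preceq\subseteq\hRg$. By Lemma \ref{L:Gent} the polarity is entangled, and Corollary \ref{C:ent} together with \ref{P1} force $\preceq$ to agree with $\hRg$ on $X^2$, on $Y^2$, and on $X\times Y$; only the $Y\times X$ direction is free. Suppose $y\preceq x$. For any $p\in e_Y^{-1}(y^\downarrow)$ and $q\in e_X^{-1}(x^\uparrow)$, commutativity of Figure \ref{F:fix1} identifies $e_X(p)$ with $e_Y(p)$ and $e_X(q)$ with $e_Y(q)$ modulo the equivalence induced by $\preceq$. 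Combined with $e_Y(p)\preceq y\preceq x\preceq e_X(q)$ and transitivity, this yields $e_X(p)\preceq e_X(q)$, hence $p\leq_P q$ since $\iota_X\circ e_X$ is an order embedding. By Corollary \ref{C:alt} this places $(y,x)\in Z'_{YX}\subseteq\hRg$.

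For (3), elements of $\iota_X[X]$ are trivially joins of themselves, so the substantive claim is that each $\iota_Y(y)$ satisfies $\iota_Y(y)=\bv\{\iota_X(x):x\R y\}$. The right-hand side has $\iota_Y(y)$ as an upper bound by the definition of $\hRg$. For universality, suppose $z$ is an upper bound. If $z=\iota_X(x')$, condition $(\dagger_2)$ of Lemma \ref{L:conds} gives $e_X(p)\R y$ for every $p\in e_Y^{-1}(y^\downarrow)$, whence $e_X(p)\leq_X x'$; for any $q$ with $x'\leq_X e_X(q)$ it follows that $e_X(p)\leq_X e_X(q)$, so $p\leq_P q$, and then part c) of Lemma \ref{L:GalSimp} delivers $y\hRg x'$. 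If $z=\iota_Y(y')$, the same $(\dagger_2)$ gives $e_X(p)\R y'$ for each such $p$, and \ref{C6} then forces $e_Y(p)\leq_Y y'$; since $e_Y$ is a join-extension, $y\leq_Y y'$. Meet-generation by $\iota_Y[Y]$ follows dually.

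The main obstacle is (1): upgrading meet-preservation from meets of images of $P$ to all meets existing in $X$. The key maneuver is to use part c) of Lemma \ref{L:GalSimp} to translate the external hypothesis $y\hRg t$ into the internal $P$-level inequality $e_X(p)\leq_X t$, after which the meet-extension property of $e_X$ transports the bound down to $\bw T$.
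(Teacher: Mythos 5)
Your proof is correct, and while part (2) follows the paper's argument essentially verbatim (minimality from Theorem \ref{T:3-pol}, then showing any 3-preorder is forced to agree with $\hRg$ on $Y\times X$ via transitivity, the commutativity of Figure \ref{F:fix1}, and Corollary \ref{C:alt}), parts (1) and (3) take a genuinely different route. For (1) the paper decomposes an arbitrary meet $\bw Z$ in $X$ by writing each $z\in Z$ as $\bw e_X[S_z]$ with $S_z=e_X^{-1}(z^\uparrow)$ and then applies \ref{P4} twice, re-associating meets over $\bigcup_z S_z$; you instead verify the greatest-lower-bound property directly, using Lemma \ref{L:GalSimp}(c) and Corollary \ref{C:alt} to translate a lower bound $\iota_Y(y)$ of $\iota_X[T]$ into the $P$-level inequality $e_X(p)\leq_X t$ for all $p\in e_Y^{-1}(y^\downarrow)$, after which the meet-extension property of $e_X$ pushes the bound down to $\bw T$. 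For (3) the paper gives a one-line density argument ($\iota_X[X]\supseteq\gamma[P]$, which is join-dense in $\iota_Y[Y]$ by part (1)), whereas you prove the sharper explicit identity $\iota_Y(y)=\bv\{\iota_X(x):x\R y\}$ from $(\dagger_2)$, \ref{C6}, and Lemma \ref{L:GalSimp}(c); a pleasant side effect is that your (3) does not depend on (1). Both approaches are sound; the paper's is more computational, yours makes more systematic use of the $Z'_{YX}$ characterization of $\hRg$ on $Y\times X$.
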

\begin{proof}
We will start by showing that $\iota_X$ is completely meet-preserving. Let $x\in X$ and suppose $x = \bw Z$ for some $Z\subseteq X$. For each $z\in Z$ define $S_z= \bw e_X^{-1}(z^\uparrow)$. Then for all $z\in Z$ we have $z = \bw e_X[S_z]$, as $e_X$ is a meet-extension. Moreover, $x = \bw e_X[\bigcup_{z\in Z} S_z]$. So, using \ref{P4},
\[\iota_X (x) = \iota_X(\bw e_X[\bigcup_{z\in Z} S_z]) = \bw \iota_X\circ e_X[\bigcup_{z\in Z} S_z]= \bw_{z\in Z} \iota_X(\bw e_X[S_z])=\bw \iota_X[Z].\]
This shows $\iota_X$ is completely meet-preserving, and that $\iota_Y$ is completely join-preserving follows from a dual argument.

To see that $\hRg$ is the only 3-preorder for $G$ note first that it must be the smallest such preorder, by Theorem \ref{T:3-pol}. Moreover, if $\preceq$ is another 3-preorder for $G$, then $\preceq$ is determined, by $\leq_X$, $\leq_Y$, and $\R $, everywhere except on $Y\times X$. So $\preceq\neq\hRg$ if and only if there is $x\in X$ and $y\in Y$ with $y\preceq x$ and $y\nothRg x$. But, by Corollary \ref{C:alt}, this is impossible, as for any $p,q\in P$ with $e_Y(p)\leq_Y y$ and $x\leq_X e_X(q)$ we are forced to have $p\leq_P q$ by the transitivity of $\preceq$ and the commutativity of the diagram in Figure \ref{F:fix1}. 

Finally, $\iota_X[X]$ is a join-dense subset of $X\uplus_{\hRg} Y$ because it's a join-dense subset of itself, and it contains $\iota_X\circ e_X[P]$, which is a join-dense subset of $\iota_Y[Y]$.    
\end{proof}

Given a  0-coherent extension polarity $E=\Pe$ where $e_X$ and $e_Y$ are meet- and join-extensions respectively, there is a simple necessary and sufficient condition for $E$ to be Galois, as explained in the next proposition. 
 
\begin{prop}\label{P:GalSimp}
Let $E=\Pe$ be 0-coherent, and let $e_X$ and $e_Y$ be, respectively, meet- and join-extensions of $P$. Then $\Pe$ is Galois if and only if the following conditions are both satisfied:
\begin{description}
\item[(S1)\label{S1}] $(\forall p\in P)(\forall x\in X)\Big(x\leq_X e_X(p)\lra x \R e_Y(p)\Big)$.
\item[(S2)\label{S2}] $(\forall p\in P)(\forall y\in Y)\Big(e_Y(p)\leq_Y y\lra e_X(p) \R y\Big)$.
\end{description}
\end{prop}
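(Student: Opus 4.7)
The plan is to split the proof into the two implications, observing that the forward direction follows almost immediately from lemmas already available, while the backward direction requires checking each of the coherence conditions \ref{C3}--\ref{C8} by hand.

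For the forward direction, I would appeal to the fact that a Galois polarity is in particular 1-coherent, so Lemma \ref{L:conds} delivers $(\dagger_1)$ and $(\dagger_2)$, which are exactly the left-to-right implications in \ref{S1} and \ref{S2}. The right-to-left implications are immediate from the 2-coherence conditions \ref{C5} and \ref{C6}: setting $x_2 = e_X(p)$ in \ref{C5} gives $x \R e_Y(p) \ra x \leq_X e_X(p)$, and dually for \ref{C6}.

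For the backward direction, assume \ref{S1} and \ref{S2} hold alongside 0-coherence and the meet/join-extension hypotheses. The easy conditions fall out quickly: \ref{C3} is just \ref{S1} applied to $x = e_X(p)$; \ref{C5} is the right-to-left half of \ref{S1} composed with transitivity of $\leq_X$, and \ref{C6} is dual; for \ref{C4}, given $x \R e_Y(p)$ and $e_X(p) \R y$, applying \ref{S2} yields $e_Y(p) \leq_Y y$, and then \ref{C2} from 0-coherence upgrades $x \R e_Y(p)$ to $x \R y$.

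The main obstacle, and the place where the meet/join-extension hypotheses actually do work, is \ref{C7} (and dually \ref{C8}). Suppose $x = \bw e_X[S]$, $x \R y_2$, and $y_1 \leq_Y e_Y(p)$ for every $p \in S$. The strategy is to show that every $e_Y(q)$ with $e_Y(q) \leq_Y y_1$ lies below $y_2$, and then use the join-extension property to conclude $y_1 \leq_Y y_2$. Given such a $q$, transitivity gives $e_Y(q) \leq_Y e_Y(p)$ for all $p \in S$, hence $q \leq_P p$ (since $e_Y$ is an order embedding), hence $e_X(q) \leq_X e_X(p)$ for all $p \in S$, and therefore $e_X(q) \leq_X \bw e_X[S] = x$. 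Now \ref{C1} applied to $x \R y_2$ yields $e_X(q) \R y_2$, and the right-to-left half of \ref{S2} converts this back to $e_Y(q) \leq_Y y_2$. Because $e_Y$ is a join-extension, $y_1 = \bv e_Y[e_Y^{-1}(y_1^\downarrow)]$, so $y_1 \leq_Y y_2$ as required. The proof of \ref{C8} is dual, using that $e_X$ is a meet-extension. Collecting these verifications gives 3-coherence, hence $\Pe$ is Galois.
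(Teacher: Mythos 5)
Your proposal is correct and follows essentially the same route as the paper: Lemma \ref{L:conds} together with \ref{C5}/\ref{C6} for the forward direction, and a condition-by-condition verification of \ref{C3}--\ref{C8} for the converse, with the \ref{C7} argument (reduce to $e_Y(q)\leq_Y y_1$, push through to $e_X(q)\leq_X x$, apply \ref{C1} and \ref{S2}, then invoke the join-extension property) matching the paper's verbatim. The only cosmetic difference is that you verify \ref{C4} via \ref{S2} and \ref{C2} where the paper uses \ref{S1} and \ref{C1}; both are valid.
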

\begin{proof}
Suppose first that $E$ is Galois, and let $p\in P$ and $x\in X$. Suppose $x\leq_X e_X(p)$. Then $x \R e_Y(p)$ by Lemma \ref{L:conds}. Conversely, if $x\R e_Y(p)$, then $x\leq_X e_X(p)$ by\ref{C5}, as $e_X(p)\leq_X e_X(p)$. Thus (S1) holds, and (S2) holds by a dual argument.

Suppose now that $E$ is 0-coherent and satisfies (S1) and (S2), and also that $e_X$ and $e_Y$ are meet- and join-extensions respectively. We will show that the conditions \ref{C3}--\ref{C8} are satisfied.
\begin{enumerate}

\item[\ref{C3}:] This follows immediately from (S1) as $e_X(p)\leq _X e_X(p)$ for all $p\in P$.

\item[\ref{C4}:] If $x\R e_Y(p)$ and $e_X(p)\R y$, then $x\leq_X e_X(p)$ by (S1), and so $x\R y$ by \ref{C1}.

\item[\ref{C5}:] Let $x_1\R e_Y(p)$ and let $e_X(p)\leq_X x_2$. Then $x_1\leq_X e_X(p)$ by \ref{S1}, and so $x_1\leq_X x_2$ by transitivity of $\leq_X$. 

\item[\ref{C6}:]  This is dual to (C5).

\item[\ref{C7}:] Let $\bw e_X[S] = x$, let $x\R y_2$, and suppose $y_1\leq_Y e_Y(p)$ for all $p\in S$. Let $q\in P$ and suppose $e_Y(q)\leq_Y y_1$. Then $q\leq_P p$ for all $p\in S$, as $e_Y$ is an order embedding, and so $e_X(q)\leq_X x$. Thus $e_X(q) \R y_2$ by \ref{C1}, and so $e_Y(q)\leq_Y y_2$ by \ref{S2}. As $e_Y$ is a join-extension it follows that $y_1\leq_Y y_2$ as required.

\item[\ref{C8}:] This is dual to (C7).
\end{enumerate}
\end{proof}

It follows from Proposition \ref{P:GalSimp} that what we call a Galois polarity corresponds to what \cite[Section 4]{GJP13} calls a $\Delta_1$-polarity. See also \cite[Proposition 4.1]{GJP13}, which tells us that the preorder $\hRg$ as defined using Corollary \ref{C:alt} is the one arising naturally from $\GXY$. Theorem \ref{T:unique} says that this is in fact the only 3-preorder definable for a Galois polarity. Note that if $\Pe$ is not Galois, then $\hRg$ may not be a preorder.

Since Galois polarities have only one 3-preorder, to lighten the notation we will from now on write e.g. $X\uplus Y$ in place of $\XUgY$ when working with Galois polarities.

\section{The satisfaction and separation of the coherence conditions}\label{S:satisfaction}
\subsection{Sets of coherent relations}\label{S:satisfaction1}
If $X$ and $Y$ are posets, it's easy to see that the set of relations on $X\times Y$ such that the induced order polarity is 0-coherent is closed under arbitrary unions and intersections, and has $\emptyset$ and $X\times Y$ as least and greatest elements respectively. The situation for extension polarities and more restrictive forms of coherence is a little more delicate, as illustrated by Proposition \ref{P:relations} below. First we introduce another definition.

\begin{defn}[$\R_l$]\label{D:Rl}
Let $e_X:P\to X$ and $e_Y:P\to Y$ be poset extensions. Define the relation $\R_l\subseteq X\times Y$ by
\[x \R_l y \iff e_X^{-1}(x^\uparrow)\cap e_Y^{-1}(y^\downarrow)\neq\emptyset.\]
\end{defn}

\begin{prop}\label{P:relations}
Let $e_X:P\to X$ and $e_Y:P\to Y$ be poset extensions. Then:
\begin{enumerate}
\item The set of relations $\R$ such that $\Pe$ is $n$-coherent is closed under non-empty intersections for all $n\in\{0,1,2,3\}$.
\item $(e_X,e_Y,\R_l)$ is 2-coherent.
\item If $\R\subseteq X\times Y$ and $\Pe$ is $1$-coherent, then $\R_l\subseteq \R$.
\item If $e_X$ and $e_Y$ are, respectively, meet- and join-extensions, then $(e_X,e_Y,\R_l)$ is 3-coherent (and thus Galois).
\item If $(e_X,e_Y,\R_l)$ is not $3$-coherent, then there is no $\R$ such that $\Pe$ is $3$-coherent. 
\end{enumerate}
\end{prop}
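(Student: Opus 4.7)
The plan is to treat the five items sequentially, with most of the content being a direct verification of the relevant coherence conditions for $\R_l$. Item (1) is the most mechanical: each of \ref{C1}--\ref{C8} is a universal implication whose consequent is either an inequality (independent of $\R$) or an atomic statement of the form $z_1 \R z_2$, and whose antecedent depends on $\R$ only positively. Conjunctions of $\R_i$-antecedents are the $\bigcap_i \R_i$-antecedent, and consequents of the form $z_1\R z_2$ propagate to intersections because $z_1 \R_i z_2$ for every $i$ implies $(z_1,z_2) \in \bigcap_i \R_i$. I would check each of \ref{C1}--\ref{C8} in this way.

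For (2) I would verify \ref{C1}--\ref{C6} for $\R_l$ using witnesses in $P$. For instance, for \ref{C4}: if $x\R_l e_Y(p)$ and $e_X(p)\R_l y$, then there are $q,r\in P$ with $x\leq_X e_X(q)$, $e_Y(q)\leq_Y e_Y(p)$, $e_X(p)\leq_X e_X(r)$, $e_Y(r)\leq_Y y$; since $e_X$ and $e_Y$ are order embeddings we obtain $q\leq_P p\leq_P r$, hence $x\leq_X e_X(r)$ and $e_Y(r)\leq_Y y$, so $x\R_l y$. The other cases are similar or simpler, with \ref{C3} witnessed by $p$ itself. For (3), if $x\R_l y$ pick a witness $p$. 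Then \ref{C3} gives $e_X(p)\R e_Y(p)$, \ref{C1} applied to $x\leq_X e_X(p)$ gives $x\R e_Y(p)$, and \ref{C2} applied to $e_Y(p)\leq_Y y$ gives $x\R y$. For (4), with $e_X$ and $e_Y$ meet- and join-extensions respectively and 0-coherence supplied by (2), I would invoke Proposition \ref{P:GalSimp} and verify \ref{S1} and \ref{S2}. \ref{S1} is immediate in both directions: $x\leq_X e_X(p)$ is witnessed by $p$ itself, and conversely a witness $q$ for $x\R_l e_Y(p)$ satisfies $q\leq_P p$ (because $e_Y$ is an embedding), forcing $x\leq_X e_X(q)\leq_X e_X(p)$; \ref{S2} is dual.

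The most interesting part is (5), which I would prove by contrapositive. Suppose $\Pe$ is 3-coherent (hence in particular 1-coherent) for some $\R$; I need to deduce that $(e_X,e_Y,\R_l)$ is also 3-coherent. By (3) we have $\R_l\subseteq \R$, and by (2) the triple $(e_X,e_Y,\R_l)$ is already 2-coherent, so only \ref{C7} and \ref{C8} require attention. The key observation is that these two conditions are preserved under passage to subrelations: if the antecedents of \ref{C7} hold for $\R_l$ at data $(x,y_1,y_2,S)$, then in particular $(x,y_2)\in \R_l\subseteq \R$, so \ref{C7} for $\R$ delivers $y_1\leq_Y y_2$, and \ref{C8} is handled dually. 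This establishes (5). The main conceptual point is recognising that, while general coherence conditions are only guaranteed to be preserved under intersections of relations (item (1)), \ref{C7} and \ref{C8} enjoy the stronger property of being inherited by arbitrary subrelations, which is what makes $\R_l$ the canonical "least restrictive" test case for 3-coherence among 1-coherent extensions of $(e_X,e_Y)$.
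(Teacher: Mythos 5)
Your proposal is correct, and items (1), (2), (3) and (5) follow essentially the same route as the paper: (1) by noting that each of \ref{C1}--\ref{C8} has an antecedent that is monotone in $\R$ and a consequent that is either $\R$-atomic or an order statement (the paper simply calls this a routine inspection), (3) by the same \ref{C3}/\ref{C1}/\ref{C2} chain through a witness $p$, and (5) by the same observation that \ref{C7} and \ref{C8} pass to subrelations, so a failure for $\R_l$ propagates to every $\R\supseteq\R_l$. The one place you genuinely diverge is item (4): the paper verifies \ref{C7} for $\R_l$ directly, with an argument that chases an arbitrary $q\in e_Y^{-1}(y_1^\downarrow)$ through the meet $\bw e_X[S]$ and uses the join-extension property of $e_Y$ at the end; you instead check the much simpler biconditionals \ref{S1} and \ref{S2} for $\R_l$ (which are nearly tautological from the definition of $\R_l$ and the fact that $e_X$, $e_Y$ are order embeddings) and invoke Proposition \ref{P:GalSimp}, which already packages the \ref{C7}/\ref{C8} verification generically. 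Since Proposition \ref{P:GalSimp} precedes Proposition \ref{P:relations} in the paper and does not depend on it, there is no circularity, and your route buys a shorter, more conceptual proof of (4) at the cost of leaning on that earlier result; the paper's direct verification is self-contained but repeats work that \ref{P:GalSimp} does in general.
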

\begin{proof}
First, that the sets in question are all closed under non-empty intersections can be proved by a routine inspection of the conditions \ref{C1}--\ref{C8} and we omit the details. Checking that $(e_X,e_Y,\R_l)$ is 2-coherent is a similarly straightforward check of conditions \ref{C1}--\ref{C6}. Thus we have dealt with (1) and (2). For (3),  If $\R$ is a relation such that $\Pe$ is 1-coherent, then, in particular, $\R$ must satisfy \ref{C1}, \ref{C2} and \ref{C3}. Given $x\in X$ and $y\in Y$, if there is $p\in e_X^{-1}(x^\uparrow)\cap e_Y^{-1}(y^\downarrow)$, then we have $x\leq_X e_X(p)$ and $e_Y(p)\leq_Y y$ by choice of $p$, and $e_X(p)\R e_Y(p)$ by \ref{C3}. Thus $x\R e_Y(p)$ by \ref{C1}, and so $x\R y$ by \ref{C2}. It follows that $\R_l\subseteq \R$ as claimed.

For (4), suppose that $e_X$ is a meet-extension and $e_Y$ is a join-extension. We will check that $\R_l$ also satisfies \ref{C7}. Let $S\subseteq P$, and let $x = \bw e_X[S]$ in $X$. Let $y_1,y_2\in Y$ and suppose that $y_1 \leq_Y e_Y(p)$ for all $p\in S$, and that $x \R_l y_2$. Let $q\in P$ and suppose $e_Y(q)\leq_Y y_1$. Then $e_Y(q)\leq_Y e_Y(p)$, and thus $q\leq_P p$, for all $p\in S$. It follows that $e_X(q)\leq_X e_X(p)$ for all $p\in S$, and so $e_X(q)\leq_X x$. Also, by definition of $\R_l$, there is $q'\in P$ with $x\leq_X e_X(q')$ and $e_Y(q')\leq_Y y_2$. But then $q\leq_P q'$, and consequently $e_Y(q)\leq_Y y_2$. This is true for all $q\in e_Y^{-1}(y_1^\downarrow)$, and so $y_1 \leq_Y y_2$ as $e_Y$ is a join-extension. $\R_l$ also satisfies \ref{C8} by duality, and so the claim is proved.

Finally, if $(e_X,e_Y,\R_l)$ is not 3-coherent, then, as we have shown it must be 2-coherent, it must fail to satisfy either \ref{C7} or \ref{C8}. In either case, since by (3) any $\R$ making $\Pe$ 3-coherent must contain $\R_l$, inspection of \ref{C7} and \ref{C8} reveals that no such $\R$ can exist, which proves (5). 
\end{proof}
 
Note that when $e_Y$ is not a join-extension $(e_X,e_Y,\R_l)$ may not satisfy \ref{C7}, as Example \ref{E:C7fail} demonstrates. By duality, when $e_X$ is not a meet-extension $(e_X,e_Y,\R_l)$ may not satisfy \ref{C8}.  

\begin{ex}\label{E:C7fail}
Let $P$ be the poset in Figure \ref{F:C7failP}, and let $e_X$ and $e_Y$ be the extensions defined in Figures \ref{F:C7failX} and \ref{F:C7failY} respectively. Here the embedded images of elements of $P$ are represented using $\bullet$, and the extra elements of $X$ and $Y$ using $\circ$. Note that $e_X$ is a meet-extension, but $e_Y$ is not a join-extension. Let $S= \{p,q\}$. Then $x = \bw e_X[S]$, and $x\R_l e_Y(r)$. But we also have $y\leq_Y e_Y(p)$ and $y\leq_Y e_Y(q)$, but $y\not\leq_Y e_Y(r)$. So \ref{C7} does not hold for $\R_l$. 
\end{ex} 

\begin{figure}[hbbp]
  \centering
  \begin{minipage}[b]{0.32\textwidth}
  \[\xymatrix@=1.5em{ \bullet_p & \bullet_q & \bullet_r 
}\] 
\caption{}
\label{F:C7failP}
  \end{minipage}
  \hfill
  \begin{minipage}[b]{0.32\textwidth}
   \[\xymatrix@=1.5em{ \bullet_p\ar@{-}[dr] & \bullet_q\ar@{-}[d] & \bullet_r\ar@{-}[dl] \\
	& \circ_x &
}\] 
\caption{}
\label{F:C7failX}
  \end{minipage}
	\hfill
  \begin{minipage}[b]{0.32\textwidth}
   \[\xymatrix@=1.5em{\bullet_p\ar@{-}[d] & \bullet_q\ar@{-}[dl] & \bullet_r \\
	 \circ_{y} 
}\] 
\caption{}
\label{F:C7failY}
  \end{minipage}
\end{figure}

\subsection{A strict hierarchy for coherence}\label{S:strict}

Example \ref{E:C7fail}, taken with Proposition \ref{P:relations}(2), also demonstrates that it is possible for an extension polarity to be 2-coherent but not 3-coherent (take $(e_X, e_Y, \R_l)$ from this example). Thus 3-coherence is a strictly stronger property than 2-coherence. However, this example only applies when either $e_Y$ fails to be a join-extension, or, by duality, when $e_X$ fails to be a meet-extension. Example \ref{E:C7fail2} below demonstrates that, even when $e_X$ and $e_Y$ \emph{are} meet- and join-extensions respectively, there may be choices of $\R$ for which $\Pe$ is 2-coherent but not 3-coherent.

\begin{ex}\label{E:C7fail2}
Let $e_X$ and $e_Y$ be as in Figures \ref{F:C7failX2} and \ref{F:C7failY2} respectively. Then it's easy to see that $e_X$ and $e_Y$ are meet- and join-extensions respectively. Moreover, if we define $\R = \R_l\cup \{(x, y_2)\}$, then $E=\Pe$ is 2-coherent, as can be observed by noting the preorder on $X\cup Y$ described in Figure \ref{F:2cons}. However, $E$ is not 3-coherent, as we prove now. If $E$ is 3-coherent, then it is Galois, by definition, and the characterization of $\hRg$ from Corollary \ref{C:alt} is valid. Noting that $(y_1,x)\in Z'_{YX}$, it follows that $y_1\hRg y_2$, and thus that $\hRg$ is not a 2-preorder for $E$, as it does not reflect the order on $Y$. But then $E$ is not 3-coherent, by Theorem \ref{T:3-pol}, and to avoid contradiction we must conclude that $\Pe$ is not 3-coherent after all.     
\end{ex}

\begin{figure}[!tbbp]
  \centering
  \begin{minipage}[b]{0.32\textwidth}
   \[\xymatrix@=1.5em{ \bullet\ar@{-}[dr] & \bullet\ar@{-}[d] \\ 
	& \circ_x\\
	\bullet\ar@{-}[ur] & \bullet\ar@{-}[u] & \bullet & \bullet
}\] 
\caption{}
\label{F:C7failX2}
  \end{minipage}
	\hfill
  \begin{minipage}[b]{0.32\textwidth}
   \[\xymatrix@=1.5em{\bullet\ar@{-}[d] & \bullet\ar@{-}[dl] & \circ_{y_2} \\ 
	\circ_{y_1} &  \\
	\bullet\ar@{-}[u] & \bullet\ar@{-}[ul] & \bullet\ar@{-}[uu] & \bullet\ar@{-}[uul]
}\] 
\caption{}
\label{F:C7failY2}
  \end{minipage}
	\hfill
	\begin{minipage}[b]{0.32\textwidth}
	\[\xymatrix@=1.5em{\bullet\ar@{-}[dr] & \bullet\ar@{-}[d] & \circ_{y_2} \\ 
	\circ_{y_1}\ar@{-}[ur]\ar@{-}[u] & \circ_{x}\ar@{-}[ur] &  \\
	\bullet\ar@{-}[ur]\ar@{-}[u] & \bullet\ar@{-}[u]\ar@{-}[ul] & \bullet\ar@{-}[uu] & \bullet\ar@{-}[uul]
}\] 
\caption{}
 \label{F:2cons} 
	\end{minipage}
\end{figure} 

2-coherence is also a strictly stronger condition than 1-coherence, as witnessed by Example \ref{E:1not2} below. 
\begin{ex}\label{E:1not2}
Let $P$ be the two element antichain $\{p,q\}$. Define $X\cong Y\cong P$, and let $e_X$ and $e_Y$ be isomorphisms. Define $\R = \R_l\cup \{(e_X(p), e_Y(q))\}$. Let $E = \Pe$. Then $E$ is 1-coherent, as can be proved via Theorem \ref{T:1-pol}, either by writing down a suitable 1-preorder (the one inducing the two element chain $e_X(p)=e_Y(p)\leq e_X(q)= e_Y(q)$), or by formally proving that $\hRm$ is such a thing.  But $E$ is not  2-coherent, which we can intuit by noticing that there's no way the $\iota$ maps from Figure \ref{F:fix1} are going to be order reflecting, or prove formally via Theorem \ref{T:2-pol} by observing that $\leq_X$ is a strict subset of $Z_X$ here.   
\end{ex}

\subsection{Separating the classes of preorders}
We have seen that the classes of extension polarities defined by the coherence conditions are strictly separated. It is also true that, even for a Galois polarity $G$, the set of 3-preorders for $G$ may be strictly contained in its set of 2-preorders. Moreover, for a 3-coherent polarity $E$, it may be the case that the set of 3-preorders for $E$ is strictly contained in the set of its 2-preorders, which is itself strictly contained in its set of 1-preorders (from Corollary \ref{C:ent2} we know this last statement is not true for Galois polarities). This is demonstrated in Examples \ref{E:Psep} and \ref{E:Psep2} respectively.

\begin{ex}\label{E:Psep}
Let $e_X$ and $e_Y$ be as in Figures \ref{F:enotgX} and \ref{F:enotgY} respectively. Then $G=(e_X,e_Y,\R_l)$ is Galois, by Proposition \ref{P:relations}(4), and the order represented in Figure \ref{F:enotgC} is induced by a 2-preorder for $G$, which we call $\preceq$. However, $\preceq$ is not a 3-preorder for $G$ as \ref{P4} and \ref{P5} fail.  
\end{ex}

\begin{figure}[!tbbp]
  \begin{minipage}[b]{0.32\textwidth}
  \[\xymatrix@=1.5em{\bullet & \bullet \\   
	 & \circ_{x}\ar@{-}[u]\ar@{-}[ul]\\
	 \bullet\ar@{-}[ur] & \bullet\ar@{-}[u]
}\] 
\caption{}
\label{F:enotgX}
  \end{minipage}
  \hfill
  \begin{minipage}[b]{0.32\textwidth}
   \[\xymatrix@=1.5em{
	\bullet\ar@{-}[d] & \bullet\ar@{-}[dl] \\
	 \circ_{y} \\
	\bullet\ar@{-}[u] & \bullet\ar@{-}[ul]
}\] 
\caption{}
\label{F:enotgY}
  \end{minipage}\hfill
	\begin{minipage}[b]{0.32\textwidth}
  \[\xymatrix@=1.5em{\bullet & \bullet \\ 
	 \circ_{y}\ar@{-}[ur]\ar@{-}[u] & \circ_{x}\ar@{-}[u]\ar@{-}[ul]  \\
	\bullet\ar@{-}[ur]\ar@{-}[u] & \bullet\ar@{-}[u]\ar@{-}[ul]
}\] 
\caption{}
\label{F:enotgC}
  \end{minipage}
	\end{figure}
	
\begin{ex}\label{E:Psep2}
Let $P$ and $e_X$ be as in Example \ref{E:Psep}, and let $e_Y$ be defined by the diagram in Figure \ref{F:noteY}. Then $E= (e_X,e_Y,\R_l)$ is 3-coherent, because $\hRg$ induces the order illustrated in Figure \ref{F:is-3-co}. However, $E$ has a 2-preorder that is not a 3-preorder (described in Figure \ref{F:enotgC2}), and a 1-preorder that is not a 2-preorder obtained by additionally setting $y_2\preceq y_1$. 
\end{ex}

\begin{figure}[!tbbp]
\begin{minipage}[b]{0.32\textwidth}
  \[\xymatrix@=1.5em{
	\bullet\ar@{-}[d] & \bullet\ar@{-}[dl] \\
	 \circ_{y} & & \circ_{y_2}\\
	\bullet\ar@{-}[u] & \bullet\ar@{-}[ul] & \circ_{y_1}\ar@{-}[u]
}\] 
\caption{}
\label{F:noteY}
  \end{minipage}
  \hfill
  \begin{minipage}[b]{0.32\textwidth}
    \[\xymatrix@=0.62em{
	\bullet\ar@{-}[dr] & & \bullet\ar@{-}[dl] \\
	 & \circ_{y}\ar@{-}[d]  \\
	& \circ_x && \circ_{y_2} \\
	\bullet\ar@{-}[ur] & & \bullet\ar@{-}[ul] & \circ_{y_1}\ar@{-}[u]
}\] 
\caption{}
\label{F:is-3-co}
  \end{minipage}
	 \hfill
  \begin{minipage}[b]{0.32\textwidth}
   \[\xymatrix@=1.5em{\bullet & \bullet \\ 
	 \circ_{y}\ar@{-}[ur]\ar@{-}[u] & \circ_{x}\ar@{-}[u]\ar@{-}[ul] & \circ_{y_2}\ar@{-}[d]  \\
	\bullet\ar@{-}[ur]\ar@{-}[u] & \bullet\ar@{-}[u]\ar@{-}[ul] & \circ_{y_1}
}\] 
\caption{}
\label{F:enotgC2}
  \end{minipage}
	\end{figure}

\section{Extending and restricting polarity relations}\label{S:ExtRes}
\subsection{Extension}
If $e:P\to Q$ is an order extension, then given another order extension $e':Q\to Q'$, the composition $e'\circ e$ is also an order extension. It is natural to ask whether an extension polarity $E=\Pe$ can be extended to something like $E'=(e'_X\circ e_X, e'_Y\circ e_Y, \R')$, and under what circumstances the level of coherence of $E$ transfers to $E'$. This is of particular interest, for example, if we wish to extend $e_X$ and $e_Y$ to completions, as we shall do in Section \ref{S:GC}. The next theorem provides some answers, but first we need a definition.

\begin{defn}[$\bR$]\label{D:bR}
Let $\Pe$ be an extension polarity, let $i_X:X\to\MX$ and $i_Y:Y\to\MY$ be order extensions with $\MX\cap\MY=\emptyset$. Let $\bR$ be the relation on $\MX\times\MY$ defined by
\[x' \bR y' \iff (\exists x\in X)(\exists y\in Y)\Big(x'\leq_{\MX} i_X(x)\amp i_Y(y)\leq_{\MY} y'\amp x \R  y\Big).\]
\end{defn}

\begin{thm}\label{T:Galois}
Let $E=\Pe$ be an extension polarity, let $i_X:X\to\MX$ and $i_Y:Y\to\MY$ be order extensions with $\MX\cap\MY=\emptyset$. Let $\overline{E} =  (i_X\circ e_X, i_Y\circ e_Y, \bR)$. Then:
\begin{enumerate}
\item $\overline{E}$ is 0-coherent.
\item For all $x\in X$ and for all $y\in Y$ we have 
\[x \R  y\ra i_X(x) \bR i_Y(y),\] 
and the converse is true if and only if $E$ is 0-coherent.
\item If $E$ is $n$-coherent, then $\overline{E}$ is $n$-coherent, for $n \in\{1, 2\}$. 
\item If $E$ is Galois, and if $i_X:X\to \MX$ and $i_Y:Y\to\MY$ are meet- and join-extensions respectively, then $\overline{E}$ is also Galois. 
\item Let $(\MX,\MY,\oS)$ be 0-coherent, and suppose 
\[x\R y\ra i_X(x)\oS i_Y(y).\] Then $\bR\subseteq \oS$.
\item If $\overline{E}$ is not $n$-coherent, then there is no $\oS\subseteq \MX\times \MY$ satisfying 
\[x\R y\ra i_X(x)\oS i_Y(y)\] 
such that $(i_X\circ e_X, i_Y\circ e_Y, \oS)$ is $n$-coherent, for $n\in\{2,3\}$.
\end{enumerate}
\end{thm}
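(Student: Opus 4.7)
The six items share a common strategy: manipulate the existential witnesses in the definition of $\bR$ and invoke transitivity together with the appropriate coherence conditions on $E$. For part (1), I would verify \ref{C1} and \ref{C2} for $\bR$ directly: if $x'_1\leq_{\MX} x'_2$ and $x'_2\bR y'$ with witness $(x,y)$, then $x'_1\leq_{\MX} i_X(x)$ by transitivity, so the same witness exhibits $x'_1\bR y'$; the dual argument handles \ref{C2}. Part (2) forward is obtained by taking $(x,y)$ itself as a witness, using reflexivity. The converse under 0-coherence of $E$ unpacks a witness $(x_0,y_0)$ of $i_X(x)\bR i_Y(y)$; since $i_X,i_Y$ are order embeddings, $x\leq_X x_0$ and $y_0\leq_Y y$, and \ref{C1} followed by \ref{C2} for $E$ yield $x\R y$. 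For the ``only if'' half, any failing instance of \ref{C1} or \ref{C2} for $E$ immediately produces a pair violating the converse via the embedding property.

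Part (3) proceeds by checking each new coherence axiom in turn. Condition \ref{C3} for $\overline{E}$ follows by taking $(e_X(p),e_Y(p))$ as witness and invoking \ref{C3} for $E$. For \ref{C4}, given witnesses for $x'\bR (i_Y\circ e_Y)(p)$ and $(i_X\circ e_X)(p)\bR y'$, I apply \ref{C2} and \ref{C1} of $E$ to lift both sides into the image of $e_X,e_Y$, and then \ref{C4} of $E$ produces a combined relation from which $x'\bR y'$ follows. For \ref{C5}: given witness $(x_1,y_1)$ of $x'_1\bR(i_Y\circ e_Y)(p)$, \ref{C2} upgrades $y_1\leq_Y e_Y(p)$ to $x_1\R e_Y(p)$, then \ref{C5} for $E$ applied with $x_2=e_X(p)$ gives $x_1\leq_X e_X(p)$, and chaining $x'_1\leq i_X(x_1)\leq i_X(e_X(p))\leq x'_2$ yields $x'_1\leq_{\MX} x'_2$; \ref{C6} is dual.

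Part (4) is the trickiest. By Proposition \ref{P:GalSimp}, it suffices to verify 0-coherence (from (1)), \ref{S1}, \ref{S2}, and that the composites $i_X\circ e_X$ and $i_Y\circ e_Y$ are meet- and join-extensions respectively. The conditions \ref{S1} and \ref{S2} transfer smoothly by witness-unpacking: the forward direction uses $(e_X(p),e_Y(p))$ as witness together with \ref{C3}, while the backward direction applies \ref{S1} or \ref{S2} of $E$ to the witness of $\bR$. The genuinely delicate step is the meet-extension property of $i_X\circ e_X$: given $x'\in\MX$, one writes $x'=\bw i_X[B]$ for $B=i_X^{-1}(x'^\uparrow)$ and each $x\in B$ as $\bw e_X[C_x]$ with $C_x=e_X^{-1}(x^\uparrow)$, producing $\bigcup_{x\in B}C_x\subseteq(i_X\circ e_X)^{-1}(x'^\uparrow)$, and must collapse the nested meet to $x'=\bw(i_X\circ e_X)[\bigcup_{x\in B}C_x]$. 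This step requires $i_X$ to preserve meets of the form $\bw e_X[C_x]$, which is not automatic for a bare meet-extension and which I expect to establish by leveraging the Galois conditions \ref{S1} and \ref{S2} of $E$ to align meets computed in $X$ with those in $\MX$; the dual argument handles $i_Y\circ e_Y$.

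For parts (5) and (6), the content is that $\bR$ is the minimal relation making the extended polarity 0-coherent above $E$. Part (5) follows by unpacking the $\bR$-witness of $(x',y')$: from $x\R y$, the standing hypothesis gives $i_X(x)\oS i_Y(y)$; then \ref{C1} for $\oS$ applied to $x'\leq i_X(x)$ yields $x'\oS i_Y(y)$, and \ref{C2} for $\oS$ applied to $i_Y(y)\leq y'$ yields $x'\oS y'$. For (6), I argue by contrapositive: any $\oS$ making $(i_X\circ e_X,i_Y\circ e_Y,\oS)$ $n$-coherent (for $n\in\{2,3\}$) is in particular 0-coherent, so (5) forces $\bR\subseteq\oS$; since each of the conditions \ref{C3}--\ref{C8} places the polarity relation only on the antecedent of an implication, their satisfaction passes downward to any subrelation, and thus $\overline{E}$ itself inherits $n$-coherence from the $\oS$-polarity, contradicting the hypothesis.
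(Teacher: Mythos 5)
Parts (1), (2), (3) and (5) of your proposal are correct and substantively the same as the paper's proof: where you chain \ref{C1}, \ref{C2}, \ref{C4} and \ref{C5} of $E$ directly, the paper packages the same chains as transitivity of the preorder $\hRm$, but the content is identical, and your direct argument for the ``only if'' half of (2) (a failing instance of \ref{C1} or \ref{C2} for $E$ immediately violates the converse) is an acceptable substitute for the paper's appeal to $\overline{E}$ being 0-coherent.

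Part (4) contains the one substantial gap, and you have put your finger on exactly the right spot without closing it. The step you defer --- that $i_X\circ e_X$ is a meet-extension --- cannot be recovered from \ref{S1} and \ref{S2}, because it is a purely order-theoretic statement about the chain $P\to X\to \MX$ that makes no reference to $\R$, $Y$ or $\MY$; and it is false for bare meet-extensions. Take $P=\{a,b\}$ an antichain, $X=\{a,b,c\}$ with $c$ below both $a$ and $b$, and $\MX=\{a,b,c,d\}$ with $c<d<a$ and $c<d<b$: then $e_X$ and $i_X$ are meet-extensions, but $(i_X\circ e_X)^{-1}(c^\uparrow)=\{a,b\}$ and $\bw_{\MX}\{a,b\}=d\neq c$. (The paper's own proof asserts the composition property without argument; the repair is the additional hypothesis, used in Proposition \ref{P:commute2} and Corollary \ref{C:converses}, that $i_X$ preserves existing meets in $X$ of subsets of $e_X[P]$ --- which is precisely what your nested-meet computation requires.) Your route through Proposition \ref{P:GalSimp} for the coherence half of (4), and your verifications of \ref{S1} and \ref{S2} for $\bR$, are fine and a clean alternative to the paper's direct check of \ref{C7} and \ref{C8}.

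In part (6) your inheritance claim is too strong: \ref{C3} asserts an instance of the relation outright, and \ref{C4} has the relation in its consequent, so neither passes down to a subrelation. If your argument were valid as stated it would equally establish the $n=1$ case, which the paper refutes by example immediately after the theorem ($P$ a singleton and $\R=\emptyset$, where $\bR$ fails \ref{C3} but a one-element $\oS$ is 1-coherent). The downward transfer is only available for \ref{C5}--\ref{C8}, where the relation occurs solely in antecedents, and this is all the paper's own proof uses; you should restrict your claim to those conditions and treat the failure of $\overline{E}$ to be $n$-coherent as a failure of one of \ref{C5}--\ref{C8}, as the paper implicitly does.
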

\begin{proof}\mbox{}
\begin{enumerate}
\item We check that $(\MX, \MY, \bR)$ is 0-coherent using Definition \ref{D:0-co}. We need only check \ref{C1} as \ref{C2} is dual. Let $x'_1\leq x_2'\in \MX$, let $y'\in \MY$, and suppose $x_2' \bR y'$. Then there are $x\in X$ and $y\in Y$ with $x'_1\leq_{\MX}x'_2\leq_{\MX} i_X(x)$, with $i_Y(y)\leq_{\MY} y'$, and with $x \R  y$. But then $x_1' \bR y'$, by definition of $\bR$, so \ref{C1} holds.

\item If $x \R  y$, then that $i_X(x) \bR i_Y(y)$ follows directly from the definition. Conversely, suppose $\Pe$ is 0-coherent, let $x_1\in X$, let $y_1\in Y$, and suppose $i_X(x_1) \bR i_Y(y_1)$. Then there is $x_2\in X$ and $y_2\in Y$ with $x_1\leq_X x_2$, with $x_2 \R  y_2$, and with $y_2\leq_Y y_1$. It follows from 0-coherence of $\Pe$ that $x_1\R y_1$ as required. Moreover,  $(i_X\circ e_X, i_Y\circ e_Y, \bR)$ is always 0-coherent by (1), so, if the converse holds $E$ inherits 0-coherence from $\overline{E}$.  

\item Now suppose $E$ is 1-coherent. We check that \ref{C3} and \ref{C4} hold for $\overline{E}$.
\begin{enumerate}
\item[\ref{C3}:] Let $p\in P$. Then $e_X(p) \R  e_Y(p)$ as $E$ is 1-coherent, and it follows easily that $i_X\circ e_X(p) \bR i_Y\circ e_Y(p)$. Thus \ref{C3} holds for $\overline{E}$ as required. 
\item[\ref{C4}:] Let $x'\in \MX$, let $y'\in \MY$, and let $p\in P$. Suppose $x' \bR (i_Y\circ e_Y(p))$ and $(i_X\circ e_X(p)) \bR y'$. Then there are $x_1\in X$ and $y_1\in Y$, with $x'\leq_{\MX} i_X(x_1)$, with $x_1 \R  y_1$, and with $i_Y(y_1)\leq_{\MY} i_Y\circ e_Y(p)$, and also $x_2\in X$ and $y_2\in Y$ with $i_X\circ e_X(p)\leq_{\MX} i_X(x_2)$, with $x_2 \R  y_2$, and with $i_Y(y_2)\leq_{\MY} y'$. As $i_X$ and $i_Y$ are order embeddings we have $y_1\leq_Y e_Y(p)$ and $e_X(p)\leq_X x_2$. As $E$ is 1-coherent it follows from Theorem \ref{T:1-pol} that $\hRm$ is a 1-preorder for $E$, and thus 
\[x_1\hRm y_1 \hRm e_Y(p)\hRm e_X(p)\hRm x_2\hRm y_2. \]
So $x_1 \R  y_2$ by transitivity of $\hRm$ and the fact that it agrees with $\R $ on $X\times Y$. It follows immediately that $x' \bR y'$, and so \ref{C4} holds for $\overline{E}$. 
\end{enumerate}
Thus $\overline{E}$ is 1-coherent. Suppose now that $E$ is 2-coherent. We check that \ref{C5} holds for $\overline{E}$. Let $x_1',x_2'\in \MX$, and let $p\in P$. Suppose $x'_1 \bR (i_Y\circ e_Y(p))$, and $i_X\circ e_X(p)\leq_{\MX} x'_2$. Then there are $x\in X$ and $y\in Y$ with $x'_1\leq_{\MX} i_X(x)$, with $i_Y(y)\leq_{\MY} i_Y\circ e_Y(p)$, and with $x\R  y$. As $E$ is 2-coherent we know from Theorem \ref{T:2-pol} that $\hRm$ is a 2-preorder for $E$, and we have
\[x \hRm y\hRm e_Y(p) \hRm e_X(p).\]
So $x\leq_X e_X(p)$, as $\hRm$ is a 2-preorder,  and consequently 
\[x_1'\leq_{\MX} i_X(x)\leq_{\MX} i_X\circ e_X(p)\leq_{\MX} x'_2.\] 
Thus $x_1'\leq_{\MX} x_2'$, and so \ref{C5} holds. By duality \ref{C6} also holds, and so $\overline{E}$ is 2-coherent as claimed. 

\item Suppose now that $E$ is Galois, and that the $i_X$ and $i_Y$ are meet- and join-extensions respectively. First, that $i_X\circ e_X$ and $i_Y\circ e_Y$ are, respectively, meet- and join-extensions follows from the corresponding properties of $i_X$, $e_X$, $i_Y$ and $e_Y$. It remains only to check that \ref{C7} and \ref{C8} hold for $\overline{E}$. 

Let $x'\in \MX$, let $y_1',y_2'\in \MY$, and let $S\subseteq P$. Suppose $\bw (i_X\circ e_X[S]) = x'$. Suppose also that $x' \bR y_2'$, and that $y_1'\leq_{\MY} i_Y\circ e_Y(p)$ for all $p\in S$. Then there are $x\in X$ and $y\in Y$ with $x'\leq_{\MX} i_X(x)$ and $i_Y(y)\leq_{\MY} y_2'$, and with $x \R  y$. We aim to prove that $y_1'\leq_{\MY} y_2'$. 

Let $y_0\in Y$ be such that $i_Y(y_0)\leq_{\MY} y_1'$, and let $q\in e_Y^{-1}(y_0^\downarrow)$. Then 
\[e_Y(q) \leq_Y y_0 \leq_Y e_Y(p) \text{ for all } p\in S,\] 
and so $i_X\circ e_X(q)\leq_{\MX} x'\leq_{\MX} i_X(x)$, and consequently $e_X(q)\leq_X x$. Since $E$ is Galois, we know from Theorem \ref{T:3-pol} that $\hRg$ is a 3-preorder for $E$, and we have $y_0\hRg x$
as the map $\iota_Y: Y\to X\uplus_{\hRg} Y$ preserves joins of sets in $e_Y[P]$ and $y_0 = \bv e_Y[y_Y^{-1}(y_0^\downarrow)]$. So we have
\[y_0 \hRg x \hRg y,\]  
and thus $y_0\leq_Y y$ for all $y_0$ with $i_Y(y_0)\leq_{\MY} y'_1$. But, as $i_Y$ is a join-extension, we have 
\[y_1' = \bv i_Y[i_Y^{-1}(y_1'^\downarrow)],\] 
and so $y_1'\leq_{\MY} i_Y(y)\leq_{\MY} y'_2$, which is what we are trying to prove. It follows that \ref{C7} holds for $\overline{E}$, and thus by duality \ref{C8} also holds.
 
\item Suppose $x'\bR y'$. Then there is $x\in X$ and $y\in Y$ with $x'\leq_{\MX} i_X(x)$, $x\R y$, and $i_Y(y) \R y'$. Let $\oS\subseteq \MX\times \MY$ satisfy the conditions from (5). Then  $i_X(x) \oS i_Y(y)$, and so $x' \oS y'$ by \ref{C1} and \ref{C2}, and the result follows.
\item From (5) we know that any relation on $\MX\times\MY$ that `extends $\R$' must contain $\bR$. Examination of the conditions \ref{C5}--\ref{C8} reveals that if they fail for $\bR$ they will also fail for any relation containing $\bR$. 

\end{enumerate}  
\end{proof}

Theorem \ref{T:Galois}(6), tells us that if we want to find a 2- or 3-coherent polarity extending $\Pe$, then it suffices to look at $\bR$, as if this does not produce the desired result then nothing will. Note that this does not apply for 1-coherence. To see this, let $P=\{p\}\cong X\cong Y\cong \MX\cong \MY$, and let $\R =\emptyset$. Then \ref{C3} fails for $\overline{E}$, but if $\oS=\{(i_X\circ e_X(p)), i_Y\circ e_Y(p)\}$, then $(i_X\circ e_X,i_Y\circ e_Y,\oS)$ is obviously 1-coherent. 

For 0-coherent polarities we can add converses to some of the statements in Theorem \ref{T:Galois}, but we will leave this till Corollary \ref{C:converses}. Note that for $\overline{E}$ to be 3-coherent it is not sufficient for $E$ to be 3-coherent, or even Galois. The additional restrictions on the extensions $i_X$ and $i_Y$ from Theorem \ref{T:Galois}(4) are necessary, as Example \ref{E:notSuf} demonstrates below.

\begin{ex}\label{E:notSuf}
Let $P$ be the three element antichain from Figure \ref{F:C7failP}, and let $X \cong Y \cong P$. Let $i_X:X\to\MX$ and $i_Y:Y\to\MY$ be the poset extensions illustrated in Figures \ref{F:C7failX} and \ref{F:C7failY} respectively. Define $\R$ on $X\times Y$ by $x\R y\iff$ there is $p\in P$ with $x=e_X(p)$ and $y= e_Y(p)$.
We can put a poset structure on $X\cup Y$ just by identifying copies of elements of $P$ appropriately, in which case we end up with something isomorphic to $P$. Clearly the natural maps $\iota_X$ and $\iota_Y$ are meet- and join-preserving order embeddings here, and so $\Pe$ is Galois.  However, $(i_X\circ e_X, i_Y\circ e_Y, \bR)$ is not 3-coherent. Indeed, it follows from Example \ref{E:C7fail} that, if we define $\oS_l\subseteq \MX\times \MY$ analogously to Definition \ref{D:Rl}, the polarity $(i_X\circ e_X, i_Y\circ e_Y, \oS_l)$ is not 3-coherent. Thus there is no relation $\oS$ such that $(i_X\circ e_X, i_Y\circ e_Y, \oS)$ is 3-coherent, by Proposition \ref{P:relations}(5).
\end{ex}

The following lemma says, roughly, that the extension of the `minimal' polarity relation $\R_l$ is again the minimal polarity relation.
\begin{lemma}\label{L:minExt}
Let $(e_X,e_Y,\R_l)$ be an extension polarity, where $\R_l$ is as in Definition \ref{D:Rl}, and  let $i_X:X\to\MX$ and $i_Y:Y\to\MY$ be order extensions with $\MX\cap\MY=\emptyset$. Then $\overline{\R_l} = \oS_l$, where $\oS_l\subseteq \MX\times \MY$ is defined analogously to $\R_l$.  
\end{lemma}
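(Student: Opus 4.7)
The plan is to prove both inclusions directly by unpacking the definitions. Neither direction has substantial content beyond chasing witnesses through the appropriate composition of order relations; the ``difficulty'' is really just bookkeeping, so I will lay out the witnesses explicitly.

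For the inclusion $\overline{\R_l}\subseteq \oS_l$, I would start from a pair $(x',y')\in \MX\times\MY$ with $x'\,\overline{\R_l}\,y'$. By Definition \ref{D:bR} there exist $x\in X$ and $y\in Y$ with $x'\leq_{\MX} i_X(x)$, $i_Y(y)\leq_{\MY} y'$, and $x\R_l y$. The last condition provides a witness $p\in P$ with $x\leq_X e_X(p)$ and $e_Y(p)\leq_Y y$. Applying the order-preservation of $i_X$ and $i_Y$ and then transitivity gives $x'\leq_{\MX} i_X\circ e_X(p)$ and $i_Y\circ e_Y(p)\leq_{\MY} y'$, so $p\in (i_X\circ e_X)^{-1}(x'^\uparrow)\cap (i_Y\circ e_Y)^{-1}(y'^\downarrow)$, hence $x'\,\oS_l\,y'$.

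For the reverse inclusion $\oS_l\subseteq \overline{\R_l}$, I would take a pair $(x',y')$ with $x'\,\oS_l\,y'$, so there is some $p\in P$ with $x'\leq_{\MX} i_X\circ e_X(p)$ and $i_Y\circ e_Y(p)\leq_{\MY} y'$. The trick is simply to choose $x:=e_X(p)$ and $y:=e_Y(p)$ as the required witnesses for $\overline{\R_l}$: by reflexivity $p\in e_X^{-1}(x^\uparrow)\cap e_Y^{-1}(y^\downarrow)$, so $x\R_l y$, and the remaining two conditions of Definition \ref{D:bR} are exactly what our hypothesis gave us. Thus $x'\,\overline{\R_l}\,y'$.

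I expect no real obstacle here: both arguments amount to moving a single witness $p\in P$ back and forth between the original extensions and the further extensions $i_X, i_Y$, using only order-preservation and transitivity. The mild subtlety (if any) is noticing in the second direction that one need not search for witnesses in $X\setminus e_X[P]$ or $Y\setminus e_Y[P]$; the embedded images $e_X(p), e_Y(p)$ always suffice.
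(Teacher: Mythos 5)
Your proof is correct and is essentially the paper's argument: the paper writes the same witness-chasing as a single chain of equivalences (passing through the condition $(i_X\circ e_X)^{-1}(x'^\uparrow)\cap(i_Y\circ e_Y)^{-1}(y'^\downarrow)\neq\emptyset$), while you split it into two inclusions, with the same key observation that in the reverse direction one may take $x=e_X(p)$ and $y=e_Y(p)$ as witnesses.
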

\begin{proof}
Let $x'\in \MX$ and let $y'\in \MX$. Then
\begin{align*}
x'\overline{\R_l} y' &\iff x'\leq_{\MX} i_X(x)\text{, }x\R_l y\text{ and }i_Y(y)\leq_{\MY} y'\text{ for some } x\in X \text{ and }y\in Y \\
&\iff x'\leq_{\MX} i_X(x)\text{, }i_Y(y)\leq_{\MY} y'\text{ and } e_X^{-1}(x^\uparrow)\cap e_Y^{-1}(y^\downarrow)\neq\emptyset \\
&\iff (i_X\circ e_X)^{-1}(x'^\uparrow)\cap (i_Y\circ e_Y)^{-1}(y'^\downarrow)\neq\emptyset \\
&\iff x' \oS_l y'. 
\end{align*}
\end{proof}

\subsection{Restriction}
If $i_X:X\to \MX$ and $i_Y:Y\to\MY$ are order extensions, then a polarity $(\MX, \MY, \oS)$ can be restricted in a natural way to a polarity $(X, Y, \uS)$. The following definition makes this precise.

\begin{defn}\label{D:uS}
Let $X$ and $Y$ be posets, and let $i_X:X\to \MX$ and $i_Y:Y\to\MY$ be order extensions with $\MX\cap\MY=\emptyset$. Let $\oS$ be a relation on $\MX\times \MY$. Define the relation $\uS\subseteq X\times Y$ by
\[x \uS y \iff i_X(x) \oS i_Y(y).\] 
\end{defn}

It turns out the coherence properties behave quite well under restriction. We make this precise in Theorem \ref{T:phi}, but first we need another definition.

\begin{defn}[$\phi$, $\preceq_\phi$]\label{D:phi}
Let $P$ be a poset, let $e_X:P\to X$, $i_X:X\to \MX$, $e_Y:P\to Y$ and $i_Y:Y\to \MY$ be order extensions, where $X\cap Y = \emptyset = \MX\cap \MY$. Define $\phi':X\cup Y\to \MX\cup\MY$ by
\[\phi'(z) = \begin{cases} i_X(z) \text{ if } z\in X.\\  
 i_Y(y) \text{ if } z\in Y.\end{cases}\] 
Let $\preceq$ be a preorder on $\MX\cup\MY$, and define the preorder $\preceq_\phi$ on $X\cup Y$ by setting \[z_1\preceq_\phi z_2 \iff \phi'(z_1)\preceq \phi'(z_2).\] 

Let $X\uplus_{\preceq_\phi} Y$ and $\MX\uplus_\preceq \MY$ be the posets induced by $X\cup_{\preceq_\phi} Y$ and $\MX\cup_\preceq\MY$ respectively, and let $\iota_X:X\to X\uplus_{\preceq_\phi} Y$, $\iota_Y:Y\to X\uplus_{\preceq_\phi} Y$,  $\iota_{\MX}:\MX\to \MX\uplus_\preceq \MY$ and $\iota_{\MY}:\MY\to \MX\uplus_\preceq \MY$ be the maps induced by the inclusion functions. Define $\phi: X\uplus_{\preceq_\phi} Y\to \MX\uplus_\preceq \MY$ by
\[\phi(z) = \begin{cases} \iota_{\MX}\circ i_X(x) \text{ if $z=\iota_X(x)$ for some $x\in X$.} \\
\iota_{\MY}\circ i_Y(y) \text{ if $z=\iota_Y(y)$ for some $y\in Y$.}
\end{cases}\] 
\end{defn}

It should be reasonably clear that $\phi'$ is well defined. It may not be immediately obvious that $\preceq_\phi$ is a preorder, but a quick check reveals that this is indeed the case. We show that $\phi$ is well defined as part of the next theorem.

\begin{thm}\label{T:phi}
With a setup as in Definition \ref{D:phi}, the map $\phi$ is a well defined order embedding. Moreover, suppose  $\oS\subseteq \MX\times\MY$ and define $E=(i_X\circ e_X,i_Y\circ e_Y, \oS)$. Then: 
\begin{enumerate}
\item If $\preceq$ is a 0-preorder for $E$, then the maps $\iota_X$ and $\iota_Y$ are order preserving.
\item If $\preceq$ is a 1-preorder for $E$, then the diagram in Figure \ref{F:restrict} commutes.
\item If $\preceq$ is a 2-preorder for $E$, the maps $\iota_X$ and $\iota_Y$ are order embeddings.
\item Suppose $\preceq$ is a 3-preorder for $E$, and suppose also that $i_X$ preserves meets in $X$ of subsets of $e_X[P]$ whenever they exist, and that $i_Y$ likewise preserves joins in $Y$ of subsets of $e_Y[P]$. Then $\preceq_\phi$ satisfies \ref{P4} and \ref{P5}.
\end{enumerate}
\end{thm}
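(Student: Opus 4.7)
The plan is to handle the well-definedness of $\phi$ and the order embedding property first, and then treat parts (1)--(4) by systematically transferring the hypothesis on $\preceq$ across $\phi$. For well-definedness, note that if $z_1, z_2 \in X \cup Y$ represent the same element of $X \uplus_{\preceq_\phi} Y$, then $z_1 \preceq_\phi z_2$ and $z_2 \preceq_\phi z_1$; by the definition of $\preceq_\phi$ this says $\phi'(z_1) \preceq \phi'(z_2)$ and vice versa, so $\phi'(z_1)$ and $\phi'(z_2)$ yield the same element of $\MX \uplus_\preceq \MY$. The order embedding claim is equally direct: in both canonical posets, order between representatives is the symmetrized version of $\preceq_\phi$ or $\preceq$, and $\preceq_\phi$ is defined to be the pullback of $\preceq$ under $\phi'$.

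For parts (1), (2), and (3), the strategy is to translate each claim back to the corresponding property of $\preceq$ on $E$ by composing with $i_X$ or $i_Y$. In part (1), $x_1 \leq_X x_2$ yields $i_X(x_1) \leq_{\MX} i_X(x_2)$ because $i_X$ is an order embedding, and then \ref{P2} applied to $\preceq$ gives $i_X(x_1) \preceq i_X(x_2)$, which is exactly $x_1 \preceq_\phi x_2$; the argument for $\iota_Y$ is symmetric. For (2), commutativity reduces to showing $e_X(p) \preceq_\phi e_Y(p)$ and the reverse for each $p \in P$, which follows from commutativity of the corresponding diagram for $E$ (guaranteed by $\preceq$ being a 1-preorder) applied to $i_X \circ e_X(p)$ and $i_Y \circ e_Y(p)$. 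For (3), order reflection by $\iota_X$ factors as $\iota_X(x_1) \leq \iota_X(x_2) \iff i_X(x_1) \preceq i_X(x_2) \iff i_X(x_1) \leq_{\MX} i_X(x_2) \iff x_1 \leq_X x_2$, using the 2-preorder property to collapse the middle equivalence and the embedding property of $i_X$ for the last.

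Part (4) is the substantive step. Let $S \subseteq P$ with $\bw e_X[S]$ existing in $X$; I need to show $\iota_X(\bw e_X[S]) = \bw \iota_X \circ e_X[S]$ in $X \uplus_{\preceq_\phi} Y$. By part (1) the left-hand side is already a lower bound, so the task is to show it is greatest. Given any other lower bound $z$, I would apply $\phi$ to obtain $\phi(z) \leq \phi(\iota_X(e_X(p))) = \iota_{\MX}(i_X(e_X(p)))$ for each $p \in S$. The meet-preservation hypothesis on $i_X$ yields $i_X(\bw e_X[S]) = \bw i_X \circ e_X[S]$ in $\MX$, so \ref{P4} for $\preceq$ gives
\[\bw \iota_{\MX} \circ i_X \circ e_X[S] = \iota_{\MX}(\bw i_X \circ e_X[S]) = \iota_{\MX}(i_X(\bw e_X[S])) = \phi(\iota_X(\bw e_X[S])).\]
Hence $\phi(z) \leq \phi(\iota_X(\bw e_X[S]))$, and reflection by $\phi$ finishes the argument; \ref{P5} is dual. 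The crux, and the only point where the extra hypothesis on $i_X$ and $i_Y$ is used, is this passage from meets in $X$ to meets in $\MX$: without it, the meet computed by \ref{P4} for $\preceq$ would equal $\iota_{\MX}(\bw i_X \circ e_X[S])$, and we would have no means to identify this with $\phi(\iota_X(\bw e_X[S]))$.
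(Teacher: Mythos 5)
Your proposal is correct and follows essentially the same route as the paper: parts (1)--(4) are argued exactly as in the paper's proof, including the key use of the meet-preservation hypothesis on $i_X$ to pass from $\bw e_X[S]$ in $X$ to $\bw i_X\circ e_X[S]$ in $\MX$ before invoking \ref{P4} for $\preceq$. Your treatment of well-definedness and the embedding property is slightly slicker (observing that $\phi$ is the quotient of $\phi'$ and that $\preceq_\phi$ is the pullback of $\preceq$, rather than the paper's four-case check), but this is the same underlying observation.
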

\begin{proof}
To see that $\phi$ is well defined and order preserving, suppose $\iota_X(x)\leq \iota_Y(y)$ for some $x\in X$ and $y\in Y$. Then $x \preceq_\phi y$, and thus, by definition of $\preceq_\phi$, we have $i_X(x)\preceq i_Y(y)$. It follows that $\iota_{\MX}\circ i_X(x)\leq \iota_{\MY}\circ i_Y(y)$, and thus $\phi(x)\leq \phi(y)$. By a similar argument, if $\iota_Y(y)\leq \iota_X(x)$, then we also have $\phi(y)\leq \phi(x)$, and so $\phi$ is well defined and order preserving as claimed.

To see that $\phi$ is an order embedding, let $z_1,z_2\in X\uplus_{\preceq_\phi} Y$, and suppose that $\phi(z_1)\leq \phi(z_2)$. There are four cases. Suppose first that $z_1 = \iota_X(x)$ and $z_2 = \iota_Y(y)$ for some $x\in X$ and $y\in Y$. Then $i_X(x)\preceq i_Y(y)$, and so $x\preceq_\phi y$, from which it follows immediately that $\iota_X(x)\leq \iota_Y(y)$, and thus that $z_1\leq z_2$. The other cases are more or less exactly the same.

Suppose $\preceq$ is a 0-preorder for $E$. Then that $\iota_X$ is order preserving follows immediately from the fact that $i_X$ and $\iota_{\MX}$ are order preserving, and $\iota_Y$ is order preserving by duality.

Suppose now that $\preceq$ is a 1-preorder for $E$, and let $p\in P$. Then $i_X\circ e_X(p)\preceq i_Y\circ e_Y(p)$. It follows immediately from this that $e_X(p)\preceq_\phi e_Y(p)$, and thus that $\iota_X\circ e_X(p)\leq \iota_Y\circ e_Y(p)$. By a similar argument we also have $\iota_Y\circ e_Y(p)\leq \iota_X\circ e_X(p)$. This shows the upper left square of the diagram in Figure \ref{F:restrict} commutes, and that the rest of the diagram commutes follows immediately from the definition of $\phi$ and the assumption that $\preceq$ is a 1-preorder for $E$. 

Suppose now that $\preceq$ is a 2-preorder for $E$, let $x_1,x_2\in X$, and suppose that $\iota_X(x_1)\leq \iota_X(x_2)$. Then $\phi(x_1)\leq \phi(x_2)$, and so $\iota_{\MX}\circ i_X(x_1)\leq \iota_{\MX}\circ i_X(x_2)$, by definition of $\phi$. As $\preceq$ is a 2-preorder for $E$, the map $\iota_{\MX}$ is an order embedding, so, since $i_X$ is also an order embedding, we must have $x_1\leq_X x_2$. This shows $\iota_X$ is an order embedding, as we have already proved it is order preserving. The argument for $\iota_Y$ is dual.

Finally, suppose $\preceq$ is a 3-preorder for $E$, and that $i_X$ and $i_Y$ have the preservation properties described above. Let $S\subseteq P$, and suppose $\bw e_X[S]$ exists in $X$. Then $\bw i_X\circ e_X[S] = i_X(\bw e_X[S])$ in $\MX$. Let $z\in X\uplus_{\preceq_\phi} Y$ and suppose $z$ is a lower bound for $\iota_X\circ e_X[S]$. Then $\phi(z)$ is a lower bound for $\phi\circ\iota_X\circ e_X[S]$, and so by commutativity of the diagram in Figure \ref{F:restrict} it follows that $\phi(z)$ is a lower bound for $\iota_{\MX}\circ i_X\circ e_X[S]$. So, as $\preceq$ is a 3-preorder for $E$, we have
\begin{align*}
\phi(z) &\leq \bw \iota_{\MX}\circ i_X\circ e_X[S] \\
&= \iota_{\MX}\circ i_X(\bw e_X[S])\\
& = \phi\circ\iota_X(\bw e_X[S]),
\end{align*} 
and so $z\leq \iota_X(\bw e_X[S])$, as $\phi$ is an order embedding. It follows that $\bw \iota_X\circ e_X[S] = \iota_X(\bw e_X[S])$, and thus $\preceq_\phi$ satisfies \ref{P4}. The argument for \ref{P5} is dual.
\end{proof}

\begin{figure}[htbp]
\[\xymatrix{ P\ar[r]^{e_Y}\ar[d]_{e_X} & Y\ar[d]^{\iota_Y}\ar[r]^{i_Y} & \MY\ar[dd]^{\iota_{\MY}} \\
X\ar[r]_{\iota_X}\ar[d]_{i_X} & X\uplus_{\preceq_\phi} Y\ar[dr]^{\phi} \\
\MX\ar[rr]_{\iota_{\MX}} & & \MX\uplus_\preceq\MY
}\] 
\caption{}
\label{F:restrict}
\end{figure}

\begin{cor}\label{C:rest}
Let $X$ and $Y$ be disjoint posets, and let $i_X:X\to \MX$ and $i_Y:Y\to\MY$ be order extensions with $\MX\cap \MY = \emptyset$. Let $\oS$ be a relation on $\MX\times \MY$. Then:
\begin{enumerate}
\item If $(\MX, \MY, \oS)$ is 0-coherent, then so is $(X,Y,\uS)$. 
\end{enumerate}
Moreover, if $P$ is a poset, and if $e_X:P\to X$ and $e_Y:P\to Y$ are order extensions, then both $E=(i_X\circ e_X, i_Y\circ e_Y, \oS)$ and $\underline{E}=(e_X,e_Y,\uS)$   are extension polarities, and:
\begin{enumerate}
\item[(2)] If $E$ is $n$-coherent, then so is $\underline{E}$ for $n \in\{1,2\}$.
\item[(3)] Suppose $i_X$ preserves meets in $X$ of subsets of $e_X[P]$ whenever they exist, and let $i_Y$ likewise preserve joins in $Y$ of subsets of $e_Y[P]$. Then, if $E$ is $3$-coherent, so is $\underline{E}$, and the same is true if we replace `3-coherent' with `Galois'.
\end{enumerate}
\end{cor}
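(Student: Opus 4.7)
My plan is to derive each part from the characterization-by-preorders Theorems~\ref{T:0-pol}--\ref{T:3-pol} combined with the pullback machinery of Theorem~\ref{T:phi}. Part (1) is handled by a direct unwinding: given $x_1\leq_X x_2$ in $X$ and $y\in Y$ with $x_2\uS y$, we have $i_X(x_1)\leq_{\MX}i_X(x_2)$ and $i_X(x_2)\oS i_Y(y)$, so \ref{C1} for $(\MX,\MY,\oS)$ forces $i_X(x_1)\oS i_Y(y)$, i.e.\ $x_1\uS y$; the \ref{C2} direction is dual.

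For (2), and for the 3-coherence portion of (3), I would fix $n\in\{1,2,3\}$, take an $n$-preorder $\preceq$ for $E$ supplied by the relevant theorem, and form $\preceq_\phi$ as in Definition~\ref{D:phi}. First I would verify that $\preceq_\phi$ is a 0-preorder for $\underline{E}$: \ref{P2} and \ref{P3} are immediate from the order-embedding property of $i_X,i_Y$, and for $x\in X,y\in Y$,
\[x\preceq_\phi y\iff i_X(x)\preceq i_Y(y)\iff i_X(x)\oS i_Y(y)\iff x\uS y\]
by \ref{P1} for $\preceq$ together with Definition~\ref{D:uS}. Parts (2), (3), (4) of Theorem~\ref{T:phi} then upgrade $\preceq_\phi$ to an $n$-preorder for $\underline{E}$ one step at a time---the Figure~\ref{F:fix1} diagram, order-reflection of $\iota_X,\iota_Y$, and finally \ref{P4}--\ref{P5} under the preservation hypotheses---and the appropriate characterization theorem yields the desired $n$-coherence of $\underline{E}$.

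For the Galois strengthening in (3), I also need $e_X$ to be a meet-extension and $e_Y$ a join-extension. Given $x\in X$, set $S=e_X^{-1}(x^\uparrow)$; since $i_X$ is an order embedding, $S$ also equals $(i_X\circ e_X)^{-1}(i_X(x)^\uparrow)$, so the meet-extension property of $i_X\circ e_X$ yields $\bw i_X\circ e_X[S]=i_X(x)$ in $\MX$. Any lower bound $z\in X$ of $e_X[S]$ satisfies $i_X(z)\leq_{\MX}i_X\circ e_X(p)$ for every $p\in S$, hence $i_X(z)\leq_{\MX}i_X(x)$ and so $z\leq_X x$; as $x$ is itself a lower bound, $x=\bw e_X[S]$ and $e_X$ is a meet-extension. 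The case of $e_Y$ is dual. The principal obstacle throughout is essentially bookkeeping---keeping the various preorders, the canonical partial orders they induce, and the maps $\iota_X,\iota_Y,\iota_{\MX},\iota_{\MY},\phi$ organized---since Theorem~\ref{T:phi} has already done the substantive categorical lifting.
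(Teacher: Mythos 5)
Your proposal is correct and follows essentially the same route as the paper: pull an $n$-preorder for $E$ back along $\phi'$ to $\preceq_\phi$, observe that $x\preceq_\phi y\iff i_X(x)\preceq i_Y(y)\iff i_X(x)\oS i_Y(y)\iff x\uS y$, and let Theorem \ref{T:phi} together with Theorems \ref{T:0-pol}--\ref{T:3-pol} do the rest. Your explicit check that $e_X$ and $e_Y$ inherit the meet-/join-extension properties from $i_X\circ e_X$ and $i_Y\circ e_Y$ is a detail the paper leaves implicit in the Galois case, and it is carried out correctly.
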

\begin{proof}
This all almost follows immediately from Theorems \ref{T:0-pol}, \ref{T:1-pol}, \ref{T:2-pol}, \ref{T:3-pol} and \ref{T:phi}, as we have \emph{almost} proved that if $\preceq$ is $n$-preorder for $E$, then $\preceq_\phi$ is an $n$-preorder for $\underline{E}$ for all $n\in\{0,1,2,3\}$ (modulo some extra conditions for $n=3$). To complete the proof we need only show that, for all $x\in X$ and $y\in Y$, we have $x\preceq_\phi y\iff x \uS y$. Now, 
\[x\preceq_\phi y  \iff i_X(x)\preceq i_Y(y) \iff i_X(x) \oS i_Y(y) \iff x\uS y,\]
so we are done. 
\end{proof}

Converses for the implications in Corollary \ref{C:rest} do not hold, as Example \ref{E:notConv} demonstrates.

\begin{ex}\label{E:notConv}
Let $P$ be the poset represented by the $\bullet$ elements in Figure \ref{F:notConv}, let $P\cong X\cong Y\cong\MX$, and let $\MY$ be represented by Figure \ref{F:notConv}. Then the implicit maps $i_X$ and $i_Y$ are obviously meet- and join-extensions respectively, and are also, respectively, trivially completely meet- and join-preserving. Let $\oS = \oS_l\cup \{(p, y)\}$, where $\oS_l\subseteq \MX\times\MY$ is defined analogously to Definition \ref{D:Rl}. Then $\uS= \R_l$,  and so $(e_X, e_Y, \uS)$ is Galois by Proposition \ref{P:relations}(4). However, $(i_X\circ e_X, i_Y\circ e_Y, \oS)$ is not even 0-coherent, as we have $(p, y) \in S$ but $(q, y)\notin S$, and so \ref{C1} fails. 
\end{ex}

\begin{figure}[htbp]
\[\xymatrix@=1.5em{ \circ_y \\
\bullet\ar@{-}[u] & \bullet\ar@{-}[ul] & \bullet_p\\
& & \bullet_q\ar@{-}[u]
}\]
\caption{}
\label{F:notConv}
\end{figure}

Using the notation of Definitions \ref{D:bR} and \ref{D:uS}, we can define a map $\overline{(-)}$ from the complete lattice of relations on $X\times Y$ to the complete lattice of relations on $\MX\times\MY$, by taking $\R$ to $\bR$. Similarly, we can define a map $\underline{(-)}$ going back the other way by taking $\oS$ to $\uS$. These maps are obviously order preserving. We also have the following result.

\begin{lemma}\label{L:Gcon}
 Let $X$ and $Y$ be disjoint posets, let $i_X:X\to\MX$  and $i_Y:Y\to\MY$ be order extensions with $\MX\cap\MY=\emptyset$. Then: 
\begin{enumerate}
\item Let $\R\subseteq X\times Y$. Then $\R\subseteq \underline{(\bR)}$. Moreover, if $(X, Y, \R)$ is 0-coherent, then $\R = \underline{(\bR)}$.
\item Let $\oS\subseteq\MX\times \MY$. If $(\MX, \MY, \oS)$ is 0-coherent, then $\overline{(\underline{\oS})}\subseteq \oS$.  
\end{enumerate} 
\end{lemma}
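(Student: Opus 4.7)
The plan is to unfold the definitions of $\bR$ (Definition \ref{D:bR}) and $\uS$ (Definition \ref{D:uS}) and apply the 0-coherence conditions \ref{C1} and \ref{C2} directly. Neither part presents a real obstacle, since both are essentially transitivity arguments passing through the order-embedding images; part (1) is morally just a restatement of Theorem \ref{T:Galois}(2) read through $\underline{(-)}$, and part (2) is a two-step chase.

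For part (1), I would split into the two inclusions. The inclusion $\R \subseteq \underline{(\bR)}$ holds unconditionally: given $(x,y)\in\R$, taking $x$ and $y$ themselves as the existential witnesses in Definition \ref{D:bR} (using reflexivity $i_X(x)\leq_{\MX} i_X(x)$ and $i_Y(y)\leq_{\MY} i_Y(y)$) yields $i_X(x)\bR i_Y(y)$, i.e.\ $x\underline{(\bR)} y$. For the reverse inclusion under 0-coherence of $(X,Y,\R)$, I would start from $x\underline{(\bR)} y$, unpack to get $x_0\in X$ and $y_0\in Y$ with $i_X(x)\leq_{\MX} i_X(x_0)$, $i_Y(y_0)\leq_{\MY} i_Y(y)$, and $x_0 \R y_0$, use the fact that $i_X$ and $i_Y$ are order embeddings to pull these comparisons back to $x\leq_X x_0$ and $y_0\leq_Y y$, and then close with one application of \ref{C1} (to get $x\R y_0$) followed by \ref{C2} (to get $x \R y$).

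For part (2), the plan is to fix $x'\in\MX$ and $y'\in\MY$ with $x'\,\overline{(\underline{\oS})}\,y'$ and unpack Definition \ref{D:bR} applied to the relation $\uS$, obtaining $x\in X$ and $y\in Y$ with $x'\leq_{\MX} i_X(x)$, $i_Y(y)\leq_{\MY} y'$, and $x\,\uS\,y$; the last, by Definition \ref{D:uS}, means precisely $i_X(x)\,\oS\,i_Y(y)$. Then 0-coherence of $(\MX,\MY,\oS)$ finishes the job: \ref{C1} applied to $x'\leq_{\MX} i_X(x)$ and $i_X(x)\,\oS\,i_Y(y)$ gives $x'\,\oS\,i_Y(y)$, and \ref{C2} applied to $i_Y(y)\leq_{\MY} y'$ and $x'\,\oS\,i_Y(y)$ gives $x'\,\oS\,y'$, which is what we want. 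No finer combinatorial analysis is required.
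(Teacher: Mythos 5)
Your proof is correct and follows essentially the same route as the paper: the unconditional inclusion in (1) by reflexivity of the embeddings, and the two remaining claims by unpacking the definitions and closing with \ref{C1} and \ref{C2}. The only cosmetic difference is that for the reverse inclusion in (1) the paper simply cites Theorem \ref{T:Galois}(2), whereas you inline the same order-embedding pullback argument that proves that statement.
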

\begin{proof}
We start with (1). Let $x\in X$, let $y\in Y$ and suppose $x \R y$. Then $i_X(x) \bR i_Y(y)$ by definition of $\bR$, and so $x \underline{(\bR)} y$ by definition of $\underline{(\bR)}$. Suppose now that $(X, Y, \R)$ is 0-coherent and let $x \underline{(\bR)} y$. Then $i_X(x) \bR i_Y(y)$ by definition of $\underline{(\bR)}$, and thus $x \R y$ by Theorem \ref{T:Galois}(2).

For (2), suppose first that $(\MX, \MY, \oS)$ is 0-coherent, and let $x'\in \MX$ and $y'\in \MY$ with $x' \overline{(\underline{\oS})} y'$. Then, by definition of $\overline{(\underline{\oS})}$ there are $x\in X$ and $y\in Y$ with $x'\leq_{\MX} i_X(x)$, with $i_Y(y)\leq_{\MY} y'$, and with $x \underline{\oS} y$. But then $i_X(x) \oS i_Y(y)$ by definition of $\underline{\oS}$, and so $x'\leq_{\MX} i_X(x) \oS i_Y(y)\leq_{\MY} y'$, and thus $x' \oS y'$ by 0-coherence of $(\MX, \MY, \oS)$.
\end{proof}

Note that the opposite inclusion to that in Lemma \ref{L:Gcon}(2) may fail, even when $(i_X\circ e_X, i_Y\circ e_Y, \oS)$ is Galois, as is demonstrated in Example \ref{E:notEq} below. Note also that the polarity $(\MX, \MY, \oS)$ from Example \ref{E:notConv} is not 0-coherent, but, appealing to Lemma \ref{L:minExt}, we have $\overline{(\underline{\oS})}\subseteq \oS$. Thus $(\MX, \MY, \oS)$ being 0-coherent is strictly stronger than having $\overline{(\underline{\oS})}\subseteq \oS$.   

\begin{cor}
Using the notation of Lemma \ref{L:Gcon}, let $L$ be the complete lattice of relations between $X$ and $Y$, and let $M$ be the complete lattice of relations $S\subseteq \MX\times\MY$ such that $(\MX,\MY,S)$ is 0-coherent. Then the maps $\overline{(-)}:L\to M$ and $\underline{(-)}:M\to L$ are, respectively, the left and right adjoints of a Galois connection. 
\end{cor}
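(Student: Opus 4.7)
The plan is to derive the Galois connection directly from the unit/counit inequalities already packaged in Lemma \ref{L:Gcon}, rather than reproving anything about the coherence conditions. Recall that for a monotone Galois connection with left adjoint $\overline{(-)}:L\to M$ and right adjoint $\underline{(-)}:M\to L$, it suffices to exhibit both maps as order preserving and to verify the biconditional $\bR\subseteq \oS \iff \R\subseteq\underline{\oS}$ for all $\R\in L$ and $\oS\in M$.

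First I would take care of the preliminary bookkeeping. The lattice $L$ is trivially complete. For $M$, I would observe that \ref{C1} and \ref{C2} for $(\MX,\MY,\oS)$ are preserved under arbitrary unions (if $x_2' \oS_i y'$ for some $i$, apply \ref{C1} componentwise) and arbitrary intersections (intersect the conclusions), with $\emptyset$ and $\MX\times\MY$ as bottom and top; this matches the remark at the start of Section \ref{S:satisfaction1}. That $\overline{(-)}$ actually lands in $M$ is Theorem \ref{T:Galois}(1), and both $\overline{(-)}$ and $\underline{(-)}$ are order preserving by their pointwise definitions (as was already remarked in the paragraph preceding Lemma \ref{L:Gcon}).

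The core of the argument is then a two-line manipulation. For the forward direction, Lemma \ref{L:Gcon}(1) gives $\R\subseteq\underline{(\bR)}$ unconditionally, so if $\bR\subseteq\oS$ then monotonicity of $\underline{(-)}$ yields $\underline{(\bR)}\subseteq\uS$, hence $\R\subseteq\uS$. For the reverse direction, monotonicity of $\overline{(-)}$ turns $\R\subseteq\uS$ into $\bR\subseteq\overline{(\uS)}$; and because $\oS\in M$ is 0-coherent, Lemma \ref{L:Gcon}(2) applied to $\oS$ gives $\overline{(\underline{\oS})}\subseteq\oS$, so chaining produces $\bR\subseteq\oS$. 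This establishes the adjunction biconditional.

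There is no real obstacle: the hard work, which is the inclusion $\overline{(\underline{\oS})}\subseteq\oS$ for 0-coherent $\oS$ (where 0-coherence is genuinely used to collapse the zig-zag $x'\leq_{\MX} i_X(x)\,\oS\, i_Y(y)\leq_{\MY} y'$ to $x'\oS y'$) is already done inside Lemma \ref{L:Gcon}(2). The only place where one must be slightly careful is in remembering that the counit inequality requires the target to lie in $M$, which is precisely why we restrict the codomain of the left adjoint to $M$ rather than to the lattice of \emph{all} relations on $\MX\times\MY$; without this restriction the inequality $\overline{(\underline{\oS})}\subseteq\oS$ can fail, as the polarity from Example \ref{E:notConv} and the discussion following Lemma \ref{L:Gcon} illustrate.
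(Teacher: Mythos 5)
Your proposal is correct and follows essentially the same route as the paper: both rest on Theorem \ref{T:Galois}(1) for well-definedness of $\overline{(-)}$ into $M$ and on the two inclusions $\R\subseteq\underline{(\bR)}$ and $\overline{(\underline{\oS})}\subseteq\oS$ from Lemma \ref{L:Gcon}. The only difference is that the paper cites the standard equivalence between the unit/counit inequalities and the hom-set biconditional (via \cite[Lemma 7.26]{DavPri02}), whereas you derive that biconditional explicitly by chaining the inclusions with monotonicity; this is a presentational choice, not a different argument.
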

\begin{proof}
First, recall the discussion at the start of Section \ref{S:satisfaction1} for the lattice structure of $M$. Moreover,  $\overline{(-)}:L\to M$ is well defined by Theorem \ref{T:Galois}(1). By Lemma \ref{L:Gcon} we have $\R\subseteq \underline{(\bR)}$ for all $R\in L$, and $\overline{(\underline{\oS})}\subseteq \oS$ for all $S\in M$, which is one of the equivalent conditions for two order preserving maps to form a Galois connection (see e.g. \cite[Lemma 7.26]{DavPri02}). 
\end{proof}

Using Corollary \ref{C:rest} and Lemma \ref{L:Gcon} we get partial converses for Theorem \ref{T:Galois}.

\begin{cor}\label{C:converses}
With notation as in Theorem \ref{T:Galois}, let $E= \Pe$ and suppose $E$ is 0-coherent. Let $\overline{E} = (i_X\circ e_X, i_Y\circ e_Y, \bR)$.  Then:
\begin{enumerate}
\item If $\overline{E}$ is $n$-coherent, then so is $E$ for $n \in\{1,2\}$. 
\item  Suppose $i_X$ preserves meets in $X$ of subsets of $e_X[P]$ whenever they exist, and let $i_Y$ likewise preserve joins in $Y$ of subsets of $e_Y[P]$. Then, whenever $\overline{E}$ is $3$-coherent, so is $E$, and this is also true if we replace `3-coherent' with `Galois'. 
\end{enumerate}
\end{cor}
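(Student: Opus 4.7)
The strategy is to reduce the corollary to Corollary \ref{C:rest} using Lemma \ref{L:Gcon}(1). Since $E$ is 0-coherent by hypothesis, Lemma \ref{L:Gcon}(1) gives $\R = \underline{(\bR)}$, so $E = (e_X, e_Y, \underline{(\bR)})$ is precisely the restriction of $\overline{E}$ in the sense of Definition \ref{D:uS}. Coherence properties of $E$ therefore transfer from those of $\overline{E}$ via Corollary \ref{C:rest}, and this is essentially all there is to do.

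For part (1), I would apply Corollary \ref{C:rest}(2) to $\overline{E}$ with $\oS = \bR$, which yields $n$-coherence of $(e_X, e_Y, \underline{(\bR)}) = E$ for $n \in \{1,2\}$. For the 3-coherent case of part (2), the meet- and join-preservation hypotheses on $i_X$ and $i_Y$ are exactly those required by Corollary \ref{C:rest}(3), so the same reasoning yields 3-coherence of $E$.

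The only step not covered by the cited machinery concerns the Galois extension of part (2). There, 3-coherence of $E$ has already been established, so it remains to verify that $e_X$ is a meet-extension and $e_Y$ a join-extension. I would argue directly: since $\overline{E}$ is Galois, $i_X \circ e_X$ is a meet-extension. Given $q \in X$, the fact that $i_X$ is an order embedding gives $(i_X \circ e_X)^{-1}(i_X(q)^\uparrow) = e_X^{-1}(q^\uparrow)$, and hence $i_X(q) = \bw (i_X \circ e_X)[e_X^{-1}(q^\uparrow)]$ in $\MX$. Any lower bound $r$ of $e_X[e_X^{-1}(q^\uparrow)]$ in $X$ maps via $i_X$ to a lower bound of $(i_X \circ e_X)[e_X^{-1}(q^\uparrow)]$ in $\MX$, which therefore satisfies $i_X(r) \leq i_X(q)$; order reflection of $i_X$ gives $r \leq q$ in $X$. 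Thus $q = \bw e_X[e_X^{-1}(q^\uparrow)]$, so $e_X$ is a meet-extension, and the argument for $e_Y$ being a join-extension is dual.

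No step presents a genuine obstacle; the only content not immediately supplied by Lemma \ref{L:Gcon} and Corollary \ref{C:rest} is the short pullback argument showing that meet- and join-extension properties descend from $i_X \circ e_X$ and $i_Y \circ e_Y$ to $e_X$ and $e_Y$, which needs to be stated explicitly because it is not formally subsumed by earlier results.
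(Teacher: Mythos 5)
Your proposal is correct and follows essentially the same route as the paper: the paper's proof likewise combines Lemma \ref{L:Gcon}(1) (giving $\R = \underline{(\bR)}$ from 0-coherence) with Corollary \ref{C:rest} to transfer $n$-coherence, and handles the Galois case of (2) by the Galois clause of Corollary \ref{C:rest}(3). Your explicit pullback argument showing that the meet-/join-extension property descends from $i_X\circ e_X$ and $i_Y\circ e_Y$ to $e_X$ and $e_Y$ is a correct (and welcome) elaboration of a step the paper leaves implicit, not a different approach.
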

\begin{proof}
For (1), given $n\in\{1,2\}$, if $\overline{E}$ is $n$-coherent, then so is $(e_X,e_Y,\underline{(\bR)})$, by Corollary \ref{C:rest}, and as $E$ is assumed to be 0-coherent we have $\R = \underline{(\bR)}$, by Lemma \ref{L:Gcon}. The proof of (2) is essentially the same. 
\end{proof}

\begin{ex}\label{E:notEq}
Let $P$ be the poset in Figure \ref{F:notEqP}, and let $X \cong Y \cong P$. Let $\MX$ and $\MY$ be the posets in Figures \ref{F:notEqX} and \ref{F:notEqY} respectively. Define $\oS\subseteq\MX\times \MY$ so that $i_X\circ e_X(p) \oS e_Y\circ i_Y(p)$ for all $p\in P$, and also $x' \oS y'$. Then $(i_X\circ e_X, i_Y\circ e_Y, \oS)$ is Galois, as can be seen by considering the poset in Figure \ref{F:notEqPos}, and defining $\iota_{\MX}$ and $\iota_{\MY}$ in the obvious way. However, there is no $x\in X$ and $y\in Y$ with $x'\leq_{\MX} i_X(x)$, with $i_Y(y)\leq_{\MY} y'$, and with $x \underline{\oS} y$. Thus $(x',y')\notin \overline{(\underline{\oS})}$.
\end{ex}

\begin{figure}[!tbbp]
  \centering
  \begin{minipage}[b]{0.32\textwidth}
  \[\xymatrix@=1.5em{ \bullet\ar@{-}[d] & \bullet\ar@{-}[d] \\
 \bullet & \bullet
}\] 
\caption{}
\label{F:notEqP}
  \end{minipage}
  \hfill
  \begin{minipage}[b]{0.32\textwidth}
   \[\xymatrix@=1.5em{ \bullet\ar@{-}[dr]\ar@{-}[d] &  & \bullet\ar@{-}[dl]\ar@{-}[d] \\
\bullet & \circ_{x'} & \bullet
}\] 
\caption{}
\label{F:notEqX}
  \end{minipage}
	 \hfill
  \begin{minipage}[b]{0.32\textwidth}
   \[\xymatrix@=1.5em{ \bullet\ar@{-}[d] & \circ_{y'}  & \bullet\ar@{-}[d] \\
\bullet\ar@{-}[ur] &  & \bullet\ar@{-}[ul]
}\]
\caption{}
\label{F:notEqY}
  \end{minipage}
\end{figure}

\begin{figure}[htbp]
\[\xymatrix@=1.5em{ \bullet\ar@{-}[d]\ar@{-}[dr] & \circ_{y'}  & \bullet\ar@{-}[d]\ar@{-}[dl] \\
\bullet\ar@{-}[ur] & \circ_{x'}\ar@{-}[u]  & \bullet\ar@{-}[ul]
}\]
\caption{}
\label{F:notEqPos}
\end{figure}

\section{Galois polarities revisited}\label{S:GaloiS1}
\subsection{Galois polarities via Galois connections}\label{S:GC}
Galois polarities are so named because their associated (unique) 3-preorder can be described in terms of a Galois connection. This idea is precisely articulated in Corollary \ref{C:Galois} below. 

When $G=\Pe$ is Galois, we know from Theorem \ref{T:unique} that $G$ has only a single 3-preorder, $\hRg$. As mentioned previously, to lighten the notation we write e.g. $X\uplus Y$ in place of $\XUgY$ when working with Galois polarities. The next proposition collects together some useful facts, but first we need a definition. 

\begin{defn}[$\gamma$]\label{D:gamma}
Let $\Pe$ be a Galois polarity. Define $\gamma:P\to X \uplus Y$ by
\[\gamma = \iota_X\circ e_X = \iota_Y\circ e_Y.\]
\end{defn}

Noting Figure \ref{F:fix1}, it's easy to see that $\gamma$ is well defined.

\begin{prop}\label{P:commute2}
Let $G=\Pe$ be a Galois polarity, let $i_X:X\to \MX$ be a completely meet-preserving meet-extension, and let $i_Y:Y\to\MY$ be a completely join-preserving join-extension. Then:
\begin{enumerate}
\item $\overline{G}=(i_X\circ e_X, i_Y\circ e_Y, \bR)$ is Galois.
\item The map $\gamma$ is an order embedding. Moreover, if $S,T\subseteq P$ and $\bw S$ and $\bv T$ exist in $P$, then
\begin{enumerate}
\item $\gamma(\bw S) = \bw \gamma[S] \iff e_X(\bw S)= \bw e_X[S]$, and
\item $\gamma(\bv T) = \bv \gamma[T] \iff e_Y(\bv T)=\bv e_Y[T]$.
\end{enumerate} 
\item $\gamma[P] = \iota_X[X]\cap \iota_Y[Y]$.
\end{enumerate}
\end{prop}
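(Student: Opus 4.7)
Part (1) is a direct application of Theorem~\ref{T:Galois}(4): since $G$ is Galois by hypothesis and $i_X$, $i_Y$ are, respectively, a meet-extension and a join-extension, that theorem delivers immediately that $\overline{G}$ is Galois.

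For part (2), observe that $\gamma = \iota_X\circ e_X$ is a composition of order embeddings ($e_X$ by the extension hypothesis, $\iota_X$ because $\hRg$ is a 3-preorder for the Galois polarity $G$, hence in particular a 2-preorder by Theorem~\ref{T:3-pol}), so $\gamma$ is an order embedding. For the forward direction of (2a), I would apply~\ref{P4} directly: assuming $e_X(\bw S)=\bw e_X[S]$, one has $\gamma(\bw S)=\iota_X(\bw e_X[S])=\bw \iota_X\circ e_X[S]=\bw\gamma[S]$. For the converse, supposing $\gamma(\bw S)=\bw\gamma[S]$, I would argue that $e_X(\bw S)$ is already a lower bound in $X$ for $e_X[S]$, and that for any other lower bound $x\in X$ the image $\iota_X(x)$ is a lower bound for $\gamma[S]$ in $X\uplus Y$, hence lies below $\bw\gamma[S]=\iota_X(e_X(\bw S))$; order reflection by $\iota_X$ then gives $x\leq_X e_X(\bw S)$. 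Part (2b) is exactly dual, with joins, $\iota_Y$, and~\ref{P5} replacing meets, $\iota_X$, and~\ref{P4}.

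For part (3), the inclusion $\gamma[P]\subseteq \iota_X[X]\cap\iota_Y[Y]$ is immediate from $\gamma=\iota_X\circ e_X=\iota_Y\circ e_Y$. For the reverse, my plan is to take $z\in \iota_X[X]\cap\iota_Y[Y]$, write $z=\iota_X(x)=\iota_Y(y)$, and produce $p\in P$ with $\gamma(p)=z$; since $\iota_X$ is injective this reduces to exhibiting $p$ with $e_X(p)=x$. The two inequalities $\iota_X(x)\leq\iota_Y(y)$ and $\iota_Y(y)\leq\iota_X(x)$ translate, via Corollary~\ref{C:alt}, into $x\R y$ together with the separation condition that every $p\in V:=e_Y^{-1}(y^\downarrow)$ satisfies $p\leq_P q$ for every $q\in U:=e_X^{-1}(x^\uparrow)$. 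The key step will be to establish $U\cap V\neq\emptyset$: once some $p_0\in U\cap V$ is in hand, the separation combined with $x=\bw e_X[U]$ and $y=\bv e_Y[V]$ (from the meet- and join-extension properties) pinches $e_X(p_0)$ to $x$ and $e_Y(p_0)$ to $y$, delivering $z=\gamma(p_0)$.

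The main obstacle will be the nonemptiness of $U\cap V$ in part (3). I anticipate extracting a witness by combining $x\R y$ with the Galois conditions~\ref{S1} and~\ref{S2} from Proposition~\ref{P:GalSimp}, transferring $x\R y$ into compatible assertions about $e_X$ and $e_Y$ that mesh with the separation; this ties in naturally with the Galois-connection viewpoint forecast in Section~\ref{S:GC}. Parts (1) and (2), by contrast, present no comparable difficulty and should resolve essentially by diagram chasing.
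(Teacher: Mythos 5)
Parts (1) and (2) of your proposal are correct and coincide with the paper's argument: (1) is exactly Theorem \ref{T:Galois}(4), and (2) is the composition-of-embeddings observation together with \ref{P4}/\ref{P5} in one direction and order-reflection of $\iota_X$, $\iota_Y$ in the other (the paper in fact disposes of (2) more tersely than you do).

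The genuine gap is in part (3), and it sits exactly where you flagged it: you never establish $U\cap V=e_X^{-1}(x^\uparrow)\cap e_Y^{-1}(y^\downarrow)\neq\emptyset$, and the route you propose cannot deliver it. By Definition \ref{D:Rl}, that nonemptiness is precisely the assertion $x\R_l y$, whereas what $\iota_X(x)\leq\iota_Y(y)$ gives you is only $x\R y$, and Proposition \ref{P:relations}(3) only guarantees $\R_l\subseteq\R$, not equality. Conditions \ref{S1} and \ref{S2} constrain $\R$ only on pairs with one coordinate in $e_X[P]$ or $e_Y[P]$, so they say nothing about a pair $(x,y)$ with both coordinates outside the images and cannot manufacture a witness $p_0$. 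Indeed no argument can: take $P=\omega\cup\omega^\partial$ (a copy of $\omega$ entirely below a copy of $\omega^\partial$), let $X=P\cup\{z_X\}$ with $z_X=\bw e_X[\omega^\partial]$, let $Y=P\cup\{z_Y\}$ with $z_Y=\bv e_Y[\omega]$, and set $\R=\R_l\cup\{(z_X,z_Y)\}$. One checks via Proposition \ref{P:GalSimp} that this is Galois, and Corollary \ref{C:alt} gives $\iota_Y(z_Y)\leq\iota_X(z_X)$, so $\iota_X(z_X)=\iota_Y(z_Y)\in\iota_X[X]\cap\iota_Y[Y]$; yet $e_X^{-1}(z_X^\uparrow)\cap e_Y^{-1}(z_Y^\downarrow)=\omega^\partial\cap\omega=\emptyset$ and this common point is not in $\gamma[P]$. (This is the same phenomenon the paper itself exhibits at the end of Section \ref{S:PolAndDel}.) For what it is worth, the paper's own proof of (3) makes precisely the leap you could not justify, deducing the nonemptiness directly from $\iota_X(x)\leq\iota_Y(y)$; so you have located a real defect rather than merely an omission of your own. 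When $\R=\R_l$ the witness is immediate from $x\R_l y$ and your pinching argument (using the separation from Corollary \ref{C:alt} together with $x=\bw e_X[U]$) correctly finishes the proof, but for an arbitrary Galois polarity the claimed equality in (3) fails.
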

\begin{proof}\mbox{}
\begin{enumerate}[(1)]
\item That $\overline{G}$ is Galois is Theorem \ref{T:Galois}(4).
\item That $\gamma$ is well defined follows from 1-coherence of $\Pe$, and that $\gamma$ is an order embedding follows from 2-coherence of $\Pe$, as $\gamma$ is the composition of two order embeddings, $\iota_X\circ e_X$. That (a) and (b) hold follows from 3-coherence of $\Pe$, as, for example, $\gamma = \iota_X\circ e_X$ and $\iota_X$ preserves meets in $X$ of subsets of $e_X[P]$.
\item We obviously have $\gamma[P] \subseteq \iota_X[X]\cap \iota_Y[Y]$, so let $z\in \iota_X[X]\cap \iota_Y[Y]$. Then there are $x\in X$ and $y\in Y$ with $z=\iota_X(x)=\iota_Y(y)$. Thus, as $\iota_X(x)\leq\iota_Y(y)$ we have $e_X^{-1}(x^\uparrow)\cap e_Y^{-1}(y^\downarrow)\neq\emptyset$. Suppose $p\in e_X^{-1}(x^\uparrow)\cap e_Y^{-1}(y^\downarrow)$, and that $e_X(p)\not\leq_X x$. Then, as $e_X$ is a meet-extension, there is $q\in P$ with $x\leq_X e_X(q)$ and $e_X(p)\not\leq_X e_X(q)$. But this is a contradiction, as, since $\iota_Y(y)\leq \iota_X(x)$, appealing to Corollary \ref{C:alt} we see that that $p\leq_P q$. Thus $x=e_X(p)$, and so $z = \gamma(p)$. It follows that $\iota_X[X]\cap \iota_Y[Y]\subseteq \gamma[P]$ as claimed.  

\end{enumerate}
\end{proof}

Note that we could slightly relax the preservation properties of $i_X$ and $i_Y$ in the above proposition (and also later) to be the same as those in Corollary \ref{C:converses}(2), but this isn't necessary for what we want to do with it. The following fact will be useful.

\begin{prop}\label{P:GaloisExtension}
Let $P$ and $Q$ be posets, let $e_1: P\to J$ be a join-completion, and let $e_2:Q\to M$ be a meet-completion. Then any Galois connection $\alpha:P\leftrightarrow Q:\beta$ extends uniquely to a Galois connection $\alpha': J\leftrightarrow M:\beta'$.
\end{prop}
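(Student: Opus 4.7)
My plan is to define the extensions explicitly using the density of $e_1[P]$ in $J$ and of $e_2[Q]$ in $M$, verify the adjunction inequality by a chain of equivalences, and deduce uniqueness from the observation that a left adjoint between complete lattices is determined by its values on a join-dense subset. Since both $J$ and $M$ are complete lattices by definition of \emph{completion}, the formulas
\[
\alpha'(j) := \bv\{e_2(\alpha(p)) : p\in P,\; e_1(p)\leq j\}, \qquad \beta'(m) := \bw\{e_1(\beta(q)) : q\in Q,\; m\leq e_2(q)\}
\]
are well defined. Using monotonicity of $\alpha$ (resp.\ $\beta$) together with the fact that $e_1$ and $e_2$ are order embeddings, the indexing set for $\alpha'(e_1(p))$ has $e_2(\alpha(p))$ as a greatest element, so $\alpha'\circ e_1 = e_2\circ\alpha$, and dually $\beta'\circ e_2 = e_1\circ\beta$. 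Both maps are manifestly monotone, so it remains only to verify adjointness and uniqueness.

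To verify the adjunction I would establish $\alpha'(j)\leq m \iff j\leq \beta'(m)$ via a chain of equivalences: first unfold $\alpha'(j)$ as a join; then rewrite $m = \bw\{e_2(q) : m\leq e_2(q)\}$ using that $M$ is a meet-completion; then remove $e_2$ using that it is an embedding; then swap sides via the given Galois connection $\alpha(p)\leq q \iff p\leq \beta(q)$; then reintroduce $e_1$; then refold the universal quantifier over $q$ into the meet defining $\beta'(m)$; and finally fold the quantifier over $p$ using $j = \bv\{e_1(p) : e_1(p)\leq j\}$, which holds because $e_1$ is a join-completion.

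For uniqueness, suppose $(\alpha'',\beta'')$ is another Galois connection extending $(\alpha,\beta)$. Then $\alpha''$, being a left adjoint between complete lattices, preserves all joins that exist in $J$, which is all of them. Since $e_1[P]$ is join-dense in $J$ and $\alpha''(e_1(p)) = e_2(\alpha(p)) = \alpha'(e_1(p))$ is forced on this dense subset, we obtain $\alpha''(j) = \bv\{\alpha''(e_1(p)) : e_1(p)\leq j\} = \alpha'(j)$ for every $j\in J$, and dually $\beta'' = \beta'$. The main step is just the chain of equivalences; no substantive obstacle arises, but the symmetric roles of $J$ as join-completion and $M$ as meet-completion must both be invoked for the chain to close, and the same symmetry is what makes the uniqueness argument go through in both directions.
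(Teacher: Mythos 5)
Your proof is correct. The comparison here is a little unusual: the paper does not actually prove this proposition at all, it simply cites it as Corollary~2 of Schmidt's 1974 paper \emph{Each join-completion of a partially ordered set is the solution of a universal problem}. What you have written is a complete, self-contained derivation of that external result: the explicit formulas $\alpha'(j)=\bigvee\{e_2(\alpha(p)):e_1(p)\leq j\}$ and $\beta'(m)=\bigwedge\{e_1(\beta(q)):m\leq e_2(q)\}$ are well defined by completeness of $J$ and $M$, the check that they restrict correctly on $e_1[P]$ and $e_2[Q]$ is right (the relevant indexing set has a greatest element by monotonicity), and your chain of equivalences for $\alpha'(j)\leq m\iff j\leq\beta'(m)$ closes exactly as claimed, using meet-density of $e_2[Q]$ in $M$ on one end and join-density of $e_1[P]$ in $J$ on the other. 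The uniqueness argument via ``left adjoints preserve existing joins, and $J$ is complete, so $\alpha''$ is determined on the join-dense set $e_1[P]$'' is the standard one and is sound; note also that once $\alpha''=\alpha'$ is established, $\beta''=\beta'$ follows immediately since a left adjoint determines its right adjoint, so the dual argument you invoke is not even strictly necessary. In short: where the paper outsources the proposition to the literature, you supply the proof, and it is the right one.
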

\begin{proof}
This is \cite[Corollary 2]{Schm74}.
\end{proof}

\begin{defn}[$\mathsf{F},\mathsf{G}$]\label{D:GD}
Let $P$ be a poset, and let $e_X:P\to X$ and $e_Y:P\to Y$ be meet- and join-completions respectively. Define maps $\mathsf{F}:Y\to X$ and $\mathsf{G}:X\to Y$ as follows:
\[\mathsf{F}(y) = \bv e_X[e_Y^{-1}(y^\downarrow)].\]
\[\mathsf{G}(x) = \bw e_Y[e_X^{-1}(x^\uparrow)].\]
\end{defn}

$\mathsf{F}$ and $\mathsf{G}$ are well defined as $X$ and $Y$ are complete.

\begin{lemma}\label{L:Galois}
Let $P$ be a poset, and let $e_X:P\to X$ and $e_Y:P\to Y$ be meet- and join-completions respectively. Then there is a unique Galois connection 
\[\mathsf{F}:Y\leftrightarrow X:\mathsf{G}\] such that $e_X = \mathsf{F}\circ e_Y$ and $e_Y = \mathsf{G}\circ e_X$. Here $\mathsf{F}$ and $\mathsf{G}$ are as in Definition \ref{D:GD}. 
\end{lemma}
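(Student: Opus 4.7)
The plan is to derive this lemma from Proposition \ref{P:GaloisExtension} by extending the trivial identity Galois connection on $P$. The identity maps furnish a (trivial) Galois connection $\id_P:P\leftrightarrow P:\id_P$. Since $e_Y:P\to Y$ is a join-completion and $e_X:P\to X$ is a meet-completion, Proposition \ref{P:GaloisExtension} directly supplies a unique Galois connection $\mathsf{F}':Y\leftrightarrow X:\mathsf{G}'$ extending this data, meaning $\mathsf{F}'\circ e_Y = e_X$ and $\mathsf{G}'\circ e_X = e_Y$. This single invocation delivers both existence and uniqueness at the abstract level.

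It then remains to verify that these unique maps coincide with the explicit formulas given in Definition \ref{D:GD}. The key facts are that left adjoints preserve all existing joins and right adjoints preserve all existing meets. Since $e_Y$ is a join-completion, every $y\in Y$ satisfies $y=\bv e_Y[e_Y^{-1}(y^\downarrow)]$, so applying $\mathsf{F}'$ and using $\mathsf{F}'\circ e_Y = e_X$ yields
\[\mathsf{F}'(y)=\bv \mathsf{F}'\circ e_Y[e_Y^{-1}(y^\downarrow)]=\bv e_X[e_Y^{-1}(y^\downarrow)]=\mathsf{F}(y).\]
Dually, since $e_X$ is a meet-completion, every $x\in X$ satisfies $x=\bw e_X[e_X^{-1}(x^\uparrow)]$, and $\mathsf{G}'$ preserves this meet, yielding $\mathsf{G}'(x)=\bw e_Y[e_X^{-1}(x^\uparrow)]=\mathsf{G}(x)$. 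Hence the unique Galois connection produced by Proposition \ref{P:GaloisExtension} is precisely the one described in Definition \ref{D:GD}.

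There is no substantial obstacle in this proof; essentially all the work has been pushed into the cited proposition of Schmidt. The only point requiring a small amount of care is recognising that the required equations $e_X=\mathsf{F}\circ e_Y$ and $e_Y=\mathsf{G}\circ e_X$ are exactly the content of ``extends'' in Proposition \ref{P:GaloisExtension} when one takes both $\alpha$ and $\beta$ to be $\id_P$, after which the formulaic identification is a one-line consequence of adjoint preservation of joins and meets.
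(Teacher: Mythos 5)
Your proof is correct, but it distributes the work differently from the paper. The paper first establishes \emph{existence} by a direct chain of equivalences showing that the explicitly defined maps of Definition \ref{D:GD} satisfy $\mathsf{F}(y)\leq_X x\iff y\leq_Y\mathsf{G}(x)$ (using the meet- and join-completion properties to unwind both sides down to conditions on $P$), and only then invokes Proposition \ref{P:GaloisExtension} with $\alpha=\beta=\id_P$ for \emph{uniqueness}. You instead extract both existence and uniqueness from the single application of Proposition \ref{P:GaloisExtension}, and then recover the explicit formulas \emph{a posteriori} from the fact that a left adjoint preserves existing joins and a right adjoint preserves existing meets, applied to $y=\bv e_Y[e_Y^{-1}(y^\downarrow)]$ and $x=\bw e_X[e_X^{-1}(x^\uparrow)]$. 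Both routes are sound. Yours leans more heavily on Schmidt's result but gets the identities $e_X=\mathsf{F}\circ e_Y$ and $e_Y=\mathsf{G}\circ e_X$ for free from the meaning of ``extends,'' and it makes transparent \emph{why} the formulas of Definition \ref{D:GD} are forced; the paper's computation is more self-contained in that it exhibits the adjunction concretely without needing the preservation properties of adjoints. The only point worth flagging is that your argument tacitly uses that the relevant joins and meets exist in $X$ and $Y$ (they do, since these are completions and hence complete lattices), so nothing is actually missing.
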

\begin{proof}
 Using the fact that $e_X$ and $e_Y$ are, respectively, meet- and join-completions, we have, for all $x\in X$ and for all $y\in Y$,
\begin{align*}
\mathsf{F}(y) \leq_X x &\iff \bv e_X[e_Y^{-1}(y^\downarrow)] \leq_X x \\
&\iff (\forall p\in e_Y^{-1}(y^\downarrow))(\forall q\in e_X^{-1}(x^\uparrow))\Big( p\leq_P q\Big)\\
&\iff y \leq_Y \bw e_Y[e_X^{-1}(x^\uparrow)] \\
&\iff y\leq_Y \mathsf{G}(x). 
\end{align*}

To see that this is the only such Galois connection between $X$ and $Y$ we apply Proposition \ref{P:GaloisExtension} with $P=Q$ and the Galois connection produced by the identity function on $P$.
\end{proof}

\begin{defn}\label{D:Z''}
Let $E=\Pe$ be an extension polarity, let $i_X: X\to \MX$ be a completely meet-preserving meet-completion of $X$, and let $i_Y:Y\to \MY$ be a completely join-preserving join-completion of $Y$. Let $\mathsf{F}:\MY\to\MX$ and $\mathsf{G}:\MX\to\MY$ be as in Definition \ref{D:GD}, with respect to the maps $i_X\circ e_X$, and $i_Y\circ e_Y$. Define
\begin{align*} Z''_{YX}= &\{(y,x)\in Y\times X: \mathsf{F}(i_Y(y))\leq_{\MX} i_X(x)\} \\
= & \{(y,x)\in Y\times X: i_Y(y)\leq_{\MY} \mathsf{G}(i_X(x))\}.\end{align*}
\end{defn}
In the above definition, the maps $\mathsf{F}$ and $\mathsf{G}$ exist as $i_X\circ e_X:P\to \MX$ and $i_Y\circ e_Y:P\to \MY$ are meet- and join-completions respectively.

\begin{cor}\label{C:Galois}
Recall $Z'_{YX}$ from Definition \ref{D:alt}. With a setup as in Definition \ref{D:Z''} we have $Z'_{YX} = Z''_{YX}.$
\end{cor}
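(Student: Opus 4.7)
My plan is to prove the equality by unfolding the definitions and applying Lemma \ref{L:Galois} directly to the meet- and join-completions $i_X\circ e_X:P\to\MX$ and $i_Y\circ e_Y:P\to\MY$ of $P$. Note that these compositions are indeed meet- and join-completions because $e_X$ is a meet-extension, $e_Y$ is a join-extension (as part of $E$ being in the relevant setup), and $i_X$, $i_Y$ completely preserve meets and joins respectively; so the hypotheses of Lemma \ref{L:Galois} are satisfied, and the maps $\mathsf{F}$ and $\mathsf{G}$ from Definition \ref{D:Z''} coincide with the ones arising from Definition \ref{D:GD} applied to the composite completions.

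First I would fix $(y,x)\in Y\times X$ and apply Lemma \ref{L:Galois} with $y' = i_Y(y)$ and $x' = i_X(x)$ to get
\[\mathsf{F}(i_Y(y))\leq_{\MX} i_X(x) \iff (\forall p\in (i_Y\circ e_Y)^{-1}(i_Y(y)^\downarrow))(\forall q\in (i_X\circ e_X)^{-1}(i_X(x)^\uparrow))\big(p\leq_P q\big).\]
The next step is to identify the preimages. Since $i_Y$ is an order embedding we have $i_Y(e_Y(p))\leq_{\MY} i_Y(y)$ if and only if $e_Y(p)\leq_Y y$, so $(i_Y\circ e_Y)^{-1}(i_Y(y)^\downarrow) = e_Y^{-1}(y^\downarrow)$. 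Dually, since $i_X$ is an order embedding, $(i_X\circ e_X)^{-1}(i_X(x)^\uparrow) = e_X^{-1}(x^\uparrow)$. Substituting into the above equivalence gives exactly the defining condition of $Z'_{YX}$ from Definition \ref{D:alt}, so $(y,x)\in Z''_{YX}$ iff $(y,x)\in Z'_{YX}$.

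I do not anticipate a main obstacle here: the argument is essentially a bookkeeping exercise, with the only subtlety being the verification that $i_X\circ e_X$ and $i_Y\circ e_Y$ qualify as meet- and join-completions of $P$ respectively, so that Lemma \ref{L:Galois} is applicable. This verification reduces to combining the definitions (using that $i_X$ completely preserves meets to push the meet-extension property of $e_X$ through, and dually for $i_Y$), and I would either record it explicitly or point the reader to it as a routine check.
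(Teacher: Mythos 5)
Your proof is correct and follows essentially the same route as the paper's: the paper's own argument is exactly the chain of equivalences you describe (unfolding $\mathsf{F}(i_Y(y))\leq_{\MX} i_X(x)$ via the meet-/join-completion property of the composites and the fact that $i_X\circ e_X$, $i_Y\circ e_Y$ are order embeddings, then identifying $(i_Y\circ e_Y)^{-1}(i_Y(y)^\downarrow)$ with $e_Y^{-1}(y^\downarrow)$ and dually), merely written out directly rather than routed through Lemma \ref{L:Galois}. The one presentational quibble is that the equivalence you quote lives in the \emph{proof} of Lemma \ref{L:Galois} rather than its statement, but the content is identical.
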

\begin{proof}
This is an immediate consequence of the following equivalence:
\begin{align*}
&\phantom{\iff i}\mathsf{F}(i_Y(y))\leq_{\MX} i_X(x)\\ 
&\iff \bv i_X\circ e_X[(i_Y\circ e_Y)^{-1}(i_Y(y)^\downarrow)] \leq_{\MX} \bw i_X\circ e_X[(i_X\circ e_X)^{-1}(i_X(x)^\uparrow)] \\
&\iff (\forall p\in e_Y^{-1}(y^\downarrow))(\forall q\in e_X^{-1}(x^\uparrow))\Big(i_X\circ e_X(p) \leq_{\MX} i_X\circ e_X(q)\Big)\\
&\iff (\forall p\in e_Y^{-1}(y^\downarrow))(\forall q\in e_X^{-1}(x^\uparrow))\Big(p\leq q\Big).
\end{align*}
\end{proof}

Corollary \ref{C:Galois} justifies the terminology `Galois polarity', as the unique 3-preorder for any Galois polarity $\Pe$ is defined by $\R$, the orders on $X$ and $Y$, and the Galois connection given by $\mathsf{F}$ and $\mathsf{G}$ for any suitable choices of $i_X$ and $i_Y$.

\subsection{Polarity morphisms}\label{S:polHom}

Recall from Definition \ref{D:ext} that an extension polarity $\Pe$ is complete if $e_X$ and $e_Y$ are completions. Noting Proposition \ref{P:GalSimp}, we see that \cite[Theorem 3.4]{GJP13} establishes a one-to-one correspondence between what we call complete Galois polarities and $\Delta_1$-completions of a poset. Theorem \ref{T:delta} below expands on the proof of this result, and in Section \ref{S:cat} we reformulate it in terms of an adjunction between categories. First we need to define a concept of morphism between Galois polarities.

\begin{defn}\label{D:polHom}
Let $P$ and $P'$ be posets, let $G=\Pe$ and $G'=(e_{X'},e_{Y'},\R')$ be Galois polarities extending $P$ and $P'$ respectively. Then a \textbf{polarity morphism} $h:G\to G'$ is a triple of order preserving maps 
\[h=(h_X:X\to X', h_P:P\to P', h_Y:Y\to Y')\] 
such that:
\begin{description}
\item[(M1)\label{M1}] The diagram in Figure \ref{F:isom} commutes.
\item[(M2)\label{M2}] For all $x\in X$ and $y\in Y$ we have 
\[\iota_Y(y)\leq \iota_X(x)\ra \iota_{Y'}\circ h_Y(y)\leq \iota_{X'}\circ h_X(x).\]
\item[(M3)\label{M3}] For all $x'\in X'$ and for all $y'\in Y'$, if $(x', y')\notin \R'$, then there is $x\in X$ and $y\in Y$ such that: 
\begin{enumerate}[(i)]
\item $h^{-1}_X(x'^\uparrow)\subseteq x^\uparrow$. 
\item $h_Y^{-1}(y'^\downarrow)\subseteq y^\downarrow$. 
\item $h_X(a) \R' y' \ra a \R y$ for all $a\in X$.
\item $x' \R' h_Y(b) \ra x\R b$ for all $b\in Y$. 
\item $(x,y)\notin \R$.
\end{enumerate}
\end{description}
For a polarity morphism $h$, if $h_X$, $h_P$ and $h_Y$ are all order embeddings, and also $h_X(x)\R' h_Y(y)\ra x\R y$ for all $x\in X$ and $y\in Y$, then $h$ is a \textbf{polarity embedding}. If, in addition, all maps are actually order isomorphisms, then $h$ is a \textbf{polarity isomorphism}, and we say $G$ and $G'$ are isomorphic. 

Sometimes we want to fix a poset $P$ and deal exclusively with isomorphism classes of Galois polarities extending $P$. In this case we say Galois polarities $G$ and $G'$ are \textbf{isomorphic as Galois polarities extending $P$} if there is a polarity isomorphism $(h_X,h_P,h_Y):G\to G'$ where $h_P$ is the identity on $P$. 
\end{defn}

\begin{figure}[htbp]
\[\xymatrix{ 
X\ar[d]_{h_X} & \ar[l]_{e_X}P\ar[r]^{e_Y}\ar[d]^{h_P} & Y\ar[d]^{h_Y}\\
X' & \ar[l]^{e_{X'}}P'\ar[r]_{e_{Y'}} & Y'
}\] 
\caption{}
\label{F:isom}
\end{figure}

Note that if $h_X$ and $h_Y$ are order embeddings, then $h_P$ will be too, but it is not the case that $h_X$ and $h_Y$ being order isomorphisms implies that $h_P$ is too, as $h_P$ may not be surjective. Note also that Definition \ref{D:polHom}, while being similar in some respects, is largely distinct from the notion of a \emph{bounded morphism between polarity frames} from \cite{Suz12}. It is also completely different to the frame morphisms of \cite{DGP05, Geh06}, which are duals to complete lattice homomorphisms, rather than `decomposed' versions of certain maps $X\uplus Y\to X'\uplus Y'$. We will make this clear in Theorem \ref{T:polHom} later. 

\begin{lemma}\label{L:xRy}
If $h= (h_X:X\to X', h_P:P\to P', h_Y:Y\to Y')$ is a polarity morphism, then for all $x\in X$ and for all $y\in Y$ we have $x\R y\ra h_X(x) \R' h_Y(y)$.
\end{lemma}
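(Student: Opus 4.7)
The plan is to argue by contraposition, which matches the shape of condition \ref{M3}. Assume, for some $x\in X$ and $y\in Y$, that $x\R y$ but $h_X(x)\nR' h_Y(y)$. Then apply \ref{M3} with $x' = h_X(x)$ and $y' = h_Y(y)$ to obtain $x_0\in X$ and $y_0\in Y$ satisfying clauses (i)--(v) of that definition.

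From clause (i), since $h_X(x)\leq_{X'} h_X(x)$ we have $x\in h_X^{-1}(h_X(x)^\uparrow)\subseteq x_0^\uparrow$, so $x_0\leq_X x$. Dually, from clause (ii) we get $y\leq_Y y_0$. Now $G$ is Galois, hence in particular $0$-coherent, so condition \ref{C1} applied to $x_0\leq_X x$ and $x\R y$ yields $x_0\R y$, and then \ref{C2} applied to $y\leq_Y y_0$ and $x_0\R y$ yields $x_0\R y_0$. This directly contradicts clause (v), which forces $(x_0,y_0)\notin\R$.

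The only mild subtlety is that clauses (iii) and (iv) of \ref{M3} are never invoked; all one needs is the order information from (i) and (ii) together with $0$-coherence. I expect no real obstacle here, since the definition of polarity morphism was set up precisely so that a contrapositive argument through the auxiliary witnesses $x_0,y_0$ goes through.
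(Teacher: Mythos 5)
Your proof is correct and is essentially identical to the paper's: both argue by contraposition, apply \ref{M3} to $x'=h_X(x)$ and $y'=h_Y(y)$, extract $x_0\leq_X x$ and $y\leq_Y y_0$ from clauses (i) and (ii), and derive the contradiction $x_0\R y_0$ via \ref{C1} and \ref{C2} against clause (v). The paper likewise never invokes clauses (iii) and (iv), so your observation on that point matches its proof exactly.
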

\begin{proof}
Suppose $(h_X(x),h_Y(y))\notin \R'$. Then, by \ref{M3} there are $x_0\in X$ and $y_0\in Y$ with $h_X^{-1}(h_X(x)^\uparrow)\subseteq x_0^\uparrow$, with $h_Y^{-1}(h_Y(y)^\downarrow)\subseteq y_0^\downarrow$ and with $(x_0,y_0)\notin \R$. From $h_X^{-1}(h_X(x)^\uparrow)\subseteq x_0^\uparrow$ it follows that $x_0\leq_X x$, and similarly we have $y\leq_Y y_0$.  Thus $(x,y)\notin \R$, as otherwise \ref{C1} and \ref{C2} would force $x_0\R y_0$.
\end{proof}

The following definition is due to Ern\'e. This will be useful to us as it precisely characterizes those maps between posets that lift (uniquely) to complete homomorphisms between their MacNeille completions \cite[Theorem 3.1]{Ern91a}.

\begin{defn}\label{D:cut}
An order preserving map $f:P\to Q$ is \textbf{cut-stable} if whenever $q_1\not\leq q_2\in Q$, there are $p_1\not\leq p_2\in P$ such that $f^{-1}(q_1^\uparrow)\subseteq p_1^\uparrow$ and $f^{-1}(q_2^\downarrow)\subseteq p_2^\downarrow$.
\end{defn}

\ref{M3} is related to cut-stability, as we shall see in Theorems \ref{T:polHom} and \ref{T:GalStab}. We can think of this as an adaptation of ideas from \cite[Section 4]{GehPri08}. We extend from what, according to our terminology, is the special case of $\Pe$ where $e_X$ and $e_Y$ are, respectively, the free directed meet- and join-completions and $\R=\R_l$, to Galois polarities in general. We will need the following definition.

\begin{defn}[$\psi_h$]\label{D:psi}
Let $G=\Pe$ and $G'=(e_{X'},e_{Y'},\R')$ be Galois polarities extending posets $P$ and $P'$ respectively. Let $\gamma:P\to X\uplus Y$ and $\gamma':P'\to X'\uplus Y'$ be the maps from Definition \ref{D:gamma}. Let $h=(h_X,h_P,h_Y):G\to G'$ be a polarity morphism. Define $\psi_h: X\uplus Y \to X'\uplus Y'$ by
\[\psi_h(z) = \begin{cases}\iota_{X'}\circ h_X(z) \text{ when } z\in \iota_X[X] \\ \iota_{Y'}\circ h_Y(z) \text{ when } z\in \iota_Y[Y]\end{cases}\]
\end{defn}
We prove that $\psi_h$ is well defined as part of Theorem \ref{T:polHom}, below. We will also use the following definition.

\begin{defn}\label{D:GalStab}
Let $G=\Pe$ and $G'=(e_{X'},e_{Y'},\R')$ be Galois polarities, let $\gamma:P\to X\uplus Y$ and $\gamma':P'\to X'\uplus Y'$ be as in Definition \ref{D:gamma}, and let $\psi: X\uplus Y \to X'\uplus Y'$. We say $\psi$ is \textbf{Galois-stable} if it is order preserving, cut-stable, and satisfies:
\begin{description}
\item[(G1)\label{G1}] $\psi\circ\gamma[P]\subseteq \gamma'[P']$.
\item[(G2)\label{G2}] $\psi\circ\iota_X[X]\subseteq \iota_{X'}[X']$.
\item[(G3)\label{G3}] $\psi\circ\iota_Y[Y]\subseteq \iota_{Y'}[Y']$.
\end{description} 
\end{defn}

\begin{thm}\label{T:polHom}
Let $G$, $G'$ $\gamma$ and $\gamma'$ be as in Definition \ref{D:GalStab}. Then, given a polarity morphism $h=(h_X, h_P,h_Y):G\to G'$, we have:
\begin{enumerate} 
\item $\psi_h$ is the unique map such that the diagram in Figure \ref{F:polHom} commutes (if we replace $\psi$ with $\psi_h$). 
\item $\psi_h$ is Galois-stable.
\item $\psi_h$ is an order embedding if and only if $h$ is a polarity embedding.
\item If $h_X$ and $h_Y$ are both surjective, then $\psi_h$ is surjective.
\end{enumerate}
\end{thm}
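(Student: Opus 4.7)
The plan is to address the four parts in order, with most effort going into cut-stability in part 2 (the main obstacle) and the asymmetric direction of part 3.

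\smallskip

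\textbf{Well-definedness and part (1).} First I would verify $\psi_h$ is well defined. The only ambiguity in the case split occurs at $z \in \iota_X[X]\cap \iota_Y[Y]$, which by Proposition \ref{P:commute2}(3) equals $\gamma[P]$. For $z = \gamma(p) = \iota_X(e_X(p)) = \iota_Y(e_Y(p))$, both clauses of the definition give $\iota_{X'}\circ e_{X'}\circ h_P(p) = \iota_{Y'}\circ e_{Y'}\circ h_P(p) = \gamma'(h_P(p))$ by \ref{M1}. Commutativity of Figure \ref{F:polHom} is then exactly what the definition of $\psi_h$ encodes, and uniqueness follows because $X\uplus Y = \iota_X[X]\cup \iota_Y[Y]$ forces $\psi$ on the whole domain.

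\smallskip

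\textbf{Part (2), Galois-stability.} Properties \ref{G1}--\ref{G3} are immediate: \ref{G2} and \ref{G3} hold by definition, and \ref{G1} holds because $\psi_h\circ\gamma(p) = \iota_{X'}\circ h_X\circ e_X(p) = \iota_{X'}\circ e_{X'}\circ h_P(p) = \gamma'(h_P(p))$ using \ref{M1}. Order preservation is a four-case check: for $\iota_X/\iota_X$ and $\iota_Y/\iota_Y$ pairs, use that $\iota_X,\iota_Y,\iota_{X'},\iota_{Y'}$ are order embeddings (by Theorem \ref{T:unique}) and that $h_X,h_Y$ are order preserving; for $\iota_X(x)\leq \iota_Y(y)$, use Lemma \ref{L:xRy}; for $\iota_Y(y)\leq\iota_X(x)$, use \ref{M2}.

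The hard part is cut-stability. Given $q_1\not\leq q_2$ in $X'\uplus Y'$, I invoke Theorem \ref{T:unique}(3): write $q_1 = \bigvee \iota_{X'}[S]$ and $q_2 = \bigwedge \iota_{Y'}[T]$. Since $q_1\not\leq q_2$, some $x'\in S$ gives $\iota_{X'}(x')\leq q_1$ with $\iota_{X'}(x')\not\leq q_2$, and then some $y'\in T$ gives $q_2\leq \iota_{Y'}(y')$ with $\iota_{X'}(x')\not\leq \iota_{Y'}(y')$, i.e.\ $(x',y')\notin \R'$. Now apply \ref{M3} to produce $x\in X$ and $y\in Y$ satisfying (i)--(v), and take $p_1 = \iota_X(x)$, $p_2 = \iota_Y(y)$; condition (v) yields $p_1\not\leq p_2$. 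For $\psi_h^{-1}(q_1^\uparrow)\subseteq p_1^\uparrow$, split a candidate $z = \iota_X(a)$ or $z=\iota_Y(b)$: the $X$-case reduces to $h_X(a)\geq x'$ and is settled by (i); the $Y$-case reduces to $x' \R' h_Y(b)$ and is settled by (iv). The inclusion $\psi_h^{-1}(q_2^\downarrow)\subseteq p_2^\downarrow$ is handled dually via (iii) and (ii). This step is where \ref{M3} earns its somewhat intricate formulation.

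\smallskip

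\textbf{Parts (3) and (4).} For (3), if $\psi_h$ is an order embedding then $h_X$ and $h_Y$ are order embeddings by composition with the $\iota$-maps, and $h_P$ inherits this via \ref{M1}; the extra clause $h_X(x)\R' h_Y(y)\Rightarrow x\R y$ follows by reading off $\psi_h(\iota_X(x))\leq\psi_h(\iota_Y(y))$. Conversely, given that $h$ is a polarity embedding, a four-case check on $\psi_h(z_1)\leq\psi_h(z_2)$ handles three cases straightforwardly; the $(y,x)$ case is the delicate one, and I would handle it through Corollary \ref{C:alt}: given $p,q\in P$ with $e_Y(p)\leq y$ and $x\leq e_X(q)$, apply $h_Y, h_X$ and \ref{M1} to reduce to $e_{Y'}(h_P(p))\leq h_Y(y)$ and $h_X(x)\leq e_{X'}(h_P(q))$, use Corollary \ref{C:alt} in $G'$ to get $h_P(p)\leq h_P(q)$, and conclude $p\leq q$ because $h_P$ is an order embedding. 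For (4), since every element of $X'\uplus Y'$ lies in $\iota_{X'}[X']\cup \iota_{Y'}[Y']$, surjectivity of $h_X$ and $h_Y$ together with the formula for $\psi_h$ gives surjectivity of $\psi_h$ immediately.
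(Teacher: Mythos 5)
Your proposal is correct and follows essentially the same route as the paper's proof: density of $\iota_{X'}[X']$ and $\iota_{Y'}[Y']$ plus \ref{M3} for cut-stability, Lemma \ref{L:xRy} and \ref{M2} for order preservation, and Corollary \ref{C:alt} together with the fact that $h_P$ is an order embedding for the $(y,x)$ case of part (3). The only (harmless) variation is in well-definedness, where you invoke Proposition \ref{P:commute2}(3) to reduce the overlap to $\gamma[P]$, while the paper argues directly that $x\preceq y$ and $y\preceq x$ force $h_X(x)$ and $h_Y(y)$ to be identified in $X'\uplus Y'$.
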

\begin{proof}
Let $\preceq$ and $\preceq'$ be the unique 3-preorders for $G$ and $G'$ respectively.
\begin{enumerate}[]
\item[(1)] Given $(h_X, h_P,h_Y)$, the commutativity of the diagram in Figure \ref{F:polHom} demands that $\psi_h$ can only be defined as in Definition \ref{D:psi}, if it exists at all. This deals with uniqueness. Now, if $x\in X$ and $y\in Y$, then, using Lemma \ref{L:xRy}, we have 
\[x\preceq y \iff x\R y\implies h_X(x)\R' h_Y(y)\iff h_X(x)\preceq' h_Y(y).\]

If $y\preceq x$, then $\iota_Y(y)\leq \iota_X(x)$ by definition, and so  $h_Y(y)\preceq' h_X(x)$ by \ref{M2}. This shows $\psi_h$ is well defined, and combined with the fact that $h_X$ and $h_Y$ are both order preserving, also shows $\psi_h$ is order preserving.

\item[(2)] We have already proved that $\psi_h$ is order preserving. To see that it is cut-stable, let $z_1\not\leq z_2\in X'\uplus Y'$. Since by Theorem \ref{T:unique}(3) we know $\iota_{X'}[X']$ and $\iota_{Y'}[Y']$ are, respectively, join- and meet-dense in $X'\uplus Y'$, there are $x'\in X'$ and $y'\in Y'$ with $\iota_{X'}(x')\leq z_1$, with $z_2\leq \iota_{Y'}(y')$, and with $\iota_{X'}(x')\not\leq \iota_{Y'}(y')$ (i.e. $(x',y')\notin \R'$). Thus by \ref{M3} there are $x\in X$ and $y\in Y$ with the five properties described in that definition. We will satisfy the condition of Definition \ref{D:cut} using the pair $\iota_X(x)\not\leq \iota_Y(y)$. 

Let $z\in \psi_h^{-1}(z_1^\uparrow)$. We must show that $z\in \iota_X(x)^\uparrow$. We have $\psi_h(z)\geq z_1\geq\iota_{X'}(x')$.  There are two cases. If $z= \iota_X(a)$ for some $a\in X$, then $\psi_h(z)= \iota_{X'}\circ h_X(a)$, and so $h_X(a)\geq_{X'} x'$. Thus $a\in h_X^{-1}(x'^\uparrow)$, and so $a\in x^\uparrow$, by \ref{M3}(i). It follows that $z = \iota_X(a)\in \iota_X(x)^\uparrow$ as claimed. Alternatively, suppose $z=\iota_Y(b)$ for some $b\in Y$. Then $\psi_h(z)=\iota_{Y'}\circ h_Y(b)$, and so $\iota_{Y'}\circ h_Y(b)\geq \iota_{X'}(x')$, and consequently $x' \R' h_{Y}(b)$. It follows from \ref{M3}(iv) that $x\R b$, and thus that $\iota_X(x)\leq \iota_Y(b)=z$ as required. That $\psi_h^{-1}(z_2^\downarrow)\subseteq \iota_Y(y)^\downarrow$ follows by a dual argument, and so $\psi_h$ is cut-stable.

To see that \ref{G1} holds for $\psi_h$ note that
\begin{align*}
\psi_h\circ\gamma(p) &= \psi_h\circ\iota_Y\circ e_Y(p) \\
&= \iota_{Y'}\circ h_Y\circ e_Y(p) \\
&= \iota_{Y'}\circ e_{Y'}\circ h_P(p)\\
&= \gamma'(h_P(p)).
\end{align*} 
That \ref{G2} and \ref{G3} hold follows immediately from the definition of $\psi_h$.

\item[(3)] If $\psi_h$ is an order embedding, then that $h_X$ and $h_Y$, and thus also $h_P$, are order embeddings follows directly from the commutativity of the diagram in Figure \ref{F:polHom} (substituting $\psi_h$ for $\psi$). Moreover, that $h_X(x)\R' h_Y(y)\ra x\R y$ for all $x\in X$ and $y\in Y$ holds for the same reason. Thus $(h_X,h_P,h_Y)$ is a polarity embedding. 

Conversely, suppose $(h_X,h_P,h_Y)$ is a polarity embedding. Since we already know $\psi_h$ is order preserving, suppose $z,z'\in X\uplus Y$ and that $\psi_h(z)\leq \psi_h(z')$. There are four cases.
If either $z,z'\in \iota_X[X]$, or $z,z'\in\iota_Y[Y]$, then that $z\leq z'$ follows again from the commutativity of the diagram in Figure \ref{F:polHom} and the assumption that $h_X$ and $h_Y$ are order embeddings. In the case where $z = \iota_X(x)$ and $z'=\iota_Y(y)$ for some $x\in X$ and $y\in Y$, we have 
\[\psi_h(z)\leq \psi_h(z')\iff h_X(x) \R' h_Y(y) \iff x\R y\iff \iota_X(x)\leq \iota_Y(y)\iff z\leq z'.\]
In the final case we have $z=\iota_Y(y)$ and $z'=\iota_X(x)$ for some $x\in X$ and $y\in Y$. Then
\[\psi_h(z)\leq \psi_h(z')\iff \iota_{Y'}\circ h_Y(y) \leq \iota_{X'}\circ h_X(x).\]
If $p,q\in P$, and $e_Y(p)\leq_Y y$ and $x\leq_X e_X(q)$, then
\[\iota_{Y'}\circ h_Y\circ e_Y(p)\leq \iota_{Y'}\circ h_Y(y)\leq \iota_{X'}\circ h_X(x)\leq \iota_{X'}\circ h_X\circ e_X(q),\]
and thus $\iota_{Y'}\circ e_{Y'}\circ h_P(p)\leq \iota_{X'}\circ e_{X'}\circ h_P(q)$, by the commutativity of the diagram in Figure \ref{F:isom}. It follows that $e_{Y'}\circ h_P(p)\preceq' e_{X'}\circ h_P(q)$, and thus that $h_P(p)\leq_{P'} h_P(q)$, as $\preceq'$ is a 3-preorder for $G'$. Thus $p\leq_P q$, as $h_P$ is an order embedding. So by Corollary \ref{C:alt} we have $y\preceq x$ as required.

\item[(4)] If $h_X$ and $h_Y$ are onto, then, given $z'\in X'\uplus Y'$, we have either $z' = \iota_{X'}(h_X(x))$ for some $x\in X$, or $z' = \iota_{Y'}(h_Y(y))$ for some $y\in Y$. In either case it follows there is $z\in X\uplus Y$ with $\psi_h(z)= z$, and thus that $\psi_h$ is onto. 
\end{enumerate}
\end{proof}
Note that the converse to (4) in the above theorem may not hold, as illustrated by Example \ref{E:notSurj}.
\begin{figure}[htbp]
\[\xymatrix{ 
& Y\ar[d]^{\iota_Y}\ar[r]^{h_Y} & Y'\ar[d]_{\iota_{Y'}}\\
P\ar[ur]^{e_Y}\ar[dr]_{e_X}\ar[r]^\gamma\ar@/_/[rrr]_{h_P} & X\uplus Y\ar[r]^{\psi} & X'\uplus Y' & \ar[dl]^{e_{X'}}\ar[ul]_{e_{Y'}}\ar[l]_{\gamma'}P' \\
& X\ar[u]_{\iota_X}\ar[r]_{h_X} & X'\ar[u]^{\iota_{X'}}
}\] 
\caption{}
\label{F:polHom}
\end{figure}

\begin{ex}\label{E:notSurj}
Let $P= X$ be a two element antichain, and let $Y$ be this two element antichain extended by adding a join for the two base elements. Let $P'\cong X'\cong Y' \cong Y$. Then the inclusion maps (modulo isomorphism) and the relations $\R_l$ and $\R'_l$ define Galois polarities $G$ and $G'$, and $X\uplus Y\cong Y\cong X'\uplus Y'$. Let $h_X$, $h_P$ and $h_Y$ be the maps induced by the inclusions $P\subset P'$, $X\subset X'$ and $Y = Y'$ (modulo isomorphism). Then $\psi_h$ is an isomorphism, and so is surjective, but $h_X$ is obviously not surjective.   
\end{ex}

\begin{defn}\label{D:h}
Let $G=\Pe$ and $G'=(e_{X'},e_{Y'},\R')$ be Galois polarities. Let $\gamma:P\to X\uplus Y$ and $\gamma':P'\to X'\uplus Y'$ be as in Definition \ref{D:gamma}. Let $\psi: X\uplus Y\to X'\uplus Y'$ satisfy \ref{G1}--\ref{G3}. Define $h^\psi_X:X\to X'$, $h^\psi_Y:Y\to Y'$ and $h^\psi_P:P\to P'$ as follows:
\begin{align*}
h^\psi_X =& \iota_{X'}^{-1}\circ \psi\circ \iota_X \\
h^\psi_Y=& \iota_{Y'}^{-1}\circ \psi\circ \iota_Y \\
h^\psi_P =& \gamma'^{-1}\circ \psi\circ\gamma. 
\end{align*}
Here $\iota_{X'}^{-1}$, $\iota_{Y'}^{-1}$ and $\gamma'^{-1}$ are the partial inverse maps. Define $h_\psi = (h^\psi_X,h^\psi_P,h^\psi_Y)$.
\end{defn}

The maps $\iota_{X'}^{-1}$, $\iota_{Y'}^{-1}$ and $\gamma'^{-1}$ are well defined as partial functions because $\iota_{X'}$, $\iota_{Y'}$ and $\gamma'$ are all order embeddings. That $h^\psi_X$, $h^\psi_Y$ and $h^\psi_P$ are well defined will be proved as part of the following theorem.

\begin{thm}\label{T:GalStab}
Let $G$, $G'$, $\gamma$ and $\gamma'$ be as in Definition \ref{D:h}. Let $\psi: X\uplus Y\to X'\uplus Y'$  be a Galois-stable map, and let $h^\psi_X$, $h^\psi_Y$ and $h^\psi_P$ be as above. Then:
\begin{enumerate}
\item $h^\psi_X$, $h^\psi_Y$ and $h^\psi_P$ are the unique maps such that the diagram in Figure \ref{F:polHom} commutes (if we replace $h_X, h_Y$ and $h_P$ with $h^\psi_X$, $h^\psi_Y$ and $h^\psi_P$, respectively).
\item $h_\psi = (h^\psi_X, h^\psi_P,h^\psi_Y)$ is a polarity morphism.
\end{enumerate}
\end{thm}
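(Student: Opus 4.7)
The plan is to prove part (1) first by showing well-definedness, order-preservation, and commutativity of the diagram in Figure \ref{F:polHom} (uniqueness is automatic from the diagram). For well-definedness, observe that \ref{G2} ensures $\psi\circ\iota_X(x)\in \iota_{X'}[X']$ so that $\iota_{X'}^{-1}$ can be applied; similarly \ref{G3} and \ref{G1} handle $h^\psi_Y$ and $h^\psi_P$. Since $\iota_{X'}$, $\iota_{Y'}$, and $\gamma'$ are order embeddings by Theorem \ref{T:unique} and Proposition \ref{P:commute2}(2), their partial inverses on the relevant images are order preserving; composing with the order preserving $\psi$ yields order preserving maps. Commutativity of the diagram in Figure \ref{F:polHom} follows by unwinding the definitions: for example, $\iota_{X'}\circ h^\psi_X = \psi\circ\iota_X$ by definition of $h^\psi_X$, and $h^\psi_X\circ e_X = e_{X'}\circ h^\psi_P$ follows from $\iota_X\circ e_X=\gamma$ and $\iota_{X'}\circ e_{X'}=\gamma'$ (again by Proposition \ref{P:commute2}(2) and the definition of $\gamma$).

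For part (2), conditions \ref{M1} and \ref{M2} are almost immediate. \ref{M1} is precisely the commutativity of the outer square in Figure \ref{F:polHom}, which we just verified. For \ref{M2}, if $\iota_Y(y)\leq \iota_X(x)$, then applying the order preserving $\psi$ and using the definitions of $h^\psi_X$ and $h^\psi_Y$ gives $\iota_{Y'}\circ h^\psi_Y(y)=\psi(\iota_Y(y))\leq \psi(\iota_X(x))=\iota_{X'}\circ h^\psi_X(x)$.

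The main obstacle is verifying \ref{M3}, where I will combine cut-stability of $\psi$ with the density properties of $X'\uplus Y'$ (and of $X\uplus Y$) established in Theorem \ref{T:unique}(3). Given $(x',y')\notin \R'$, i.e.\ $\iota_{X'}(x')\not\leq \iota_{Y'}(y')$, cut-stability of $\psi$ produces $z_1\not\leq z_2$ in $X\uplus Y$ with $\psi^{-1}(\iota_{X'}(x')^\uparrow)\subseteq z_1^\uparrow$ and $\psi^{-1}(\iota_{Y'}(y')^\downarrow)\subseteq z_2^\downarrow$. Since $\iota_X[X]$ is join-dense and $\iota_Y[Y]$ is meet-dense in $X\uplus Y$, the inequality $z_1\not\leq z_2$ forces the existence of some $x\in X$ and $y\in Y$ with $\iota_X(x)\leq z_1$, $\iota_Y(y)\geq z_2$, and $\iota_X(x)\not\leq \iota_Y(y)$ (i.e.\ $(x,y)\notin \R$), giving us (v). Then $\psi^{-1}(\iota_{X'}(x')^\uparrow)\subseteq \iota_X(x)^\uparrow$ and $\psi^{-1}(\iota_{Y'}(y')^\downarrow)\subseteq \iota_Y(y)^\downarrow$, and each of conditions (i)--(iv) can be read off by pulling elements of $X$ or $Y$ through $\iota_X$ or $\iota_Y$ into the corresponding inclusion: for (i), if $h^\psi_X(a)\geq_{X'} x'$ then $\psi(\iota_X(a))=\iota_{X'}(h^\psi_X(a))\geq\iota_{X'}(x')$, so $\iota_X(a)\in \psi^{-1}(\iota_{X'}(x')^\uparrow)\subseteq \iota_X(x)^\uparrow$ and the order embedding property of $\iota_X$ yields $a\geq_X x$; for (iii), if $h^\psi_X(a)\R' y'$ then $\psi(\iota_X(a))\leq \iota_{Y'}(y')$, so $\iota_X(a)\leq \iota_Y(y)$ and $a\R y$. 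Conditions (ii) and (iv) follow dually.
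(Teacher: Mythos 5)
Your proposal is correct and follows essentially the same route as the paper's proof: well-definedness of the three component maps from \ref{G1}--\ref{G3}, commutativity forcing the definitions (hence uniqueness), \ref{M1} and \ref{M2} read off the diagram, and \ref{M3} obtained by combining cut-stability of $\psi$ with the join-/meet-density of $\iota_X[X]$ and $\iota_Y[Y]$ from Theorem \ref{T:unique}(3), then verifying (i)--(v) exactly as the paper does.
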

\begin{proof}\mbox{}
\begin{enumerate}[]
\item[(1)] If the diagram in Figure \ref{F:polHom} is to commute, then $h^\psi_X$ and $h^\psi_Y$ must be $\iota_{X'}^{-1}\circ \psi\circ \iota_X$ and $\iota_{Y'}^{-1}\circ \psi\circ \iota_Y$ respectively, if they are to exist at all. It follows from \ref{G2} that $\iota_{X'}^{-1}$ is total on $\psi\circ \iota_X[X]$, and thus that $h^\psi_X$ is well defined. A similar argument with \ref{G3} shows $h^\psi_Y$ is also well defined. The commutativity of this diagram also demands that, if $h^\psi_P$ exists, we have 
\[e_{X'}\circ h^\psi_P = h^\psi_X\circ e_X = \iota_{X'}^{-1}\circ \psi \circ \iota_X\circ e_X = \iota_{X'}^{-1}\circ \psi \circ\gamma,\] 
and thus 
\[h^\psi_P=e_{X'}^{-1}\circ \iota_{X'}^{-1}\circ \psi \circ\gamma=\gamma'^{-1}\circ \psi\circ\gamma,\] if this is well defined, which it is, by \ref{G1}. 

\item[(2)]To see that that $h_\psi$ satisfies \ref{M1}, observe that we have
\begin{align*}
h^\psi_X\circ e_X &= \iota_{X'}^{-1}\circ \psi\circ \iota_X \circ e_X\\
&= \iota_{X'}^{-1}\circ\psi\circ\gamma \\
&= \iota_{X'}^{-1}\circ \gamma'\circ h^\psi_P\\
& = \iota_{X'}^{-1}\circ \iota_{X'}\circ e_{X'}\circ h^\psi_P\\
& = e_{X'}\circ h^\psi_P,
\end{align*} 
and $h^\psi_Y\circ e_Y =  e_{Y'}\circ h^\psi_P$ by a similar argument. That $(h^\psi_X, h^\psi_P, h^\psi_Y)$ satisfies \ref{M2} also follows easily from the definitions of $h_X$ and $h_Y$ combined with the fact that $\psi$ is order preserving. 

It remains only to check \ref{M3}. As $\psi$ is Galois-stable it is necessarily cut-stable. So, let $x'\in X'$, let $y'\in Y'$, and suppose $(x', y')\notin \R'$. Then $\iota_{X'}(x')\not\leq \iota_{Y'}(y')$, and thus by cut-stability there are $z_1\not\leq z_2\in X\uplus Y$ with $\psi^{-1}(\iota_{X'}(x')^\uparrow)\subseteq z_1^\uparrow$, and $\psi^{-1}(\iota_{Y'}(y')^\downarrow)\subseteq z_2^\downarrow$. As $\iota_X[X]$ and $\iota_Y[Y]$ are, respectively, join- and meet-dense in $X\uplus Y$, there are $x\in X$ and $y\in Y$ with $\iota_X(x)\leq z_1$, with $z_2\leq \iota_Y(y)$, and with $(x,y)\not\in \R$. It follows that $\psi^{-1}(\iota_{X'}(x')^\uparrow)\subseteq \iota_X(x)^\uparrow$ and $\psi^{-1}(\iota_{Y'}(y')^\downarrow)\subseteq \iota_Y(y)^\downarrow$. We check the conditions required for \ref{M3} are satisfied by the pair $(x,y)$:
\begin{enumerate}[(i)]
\item Let $a\in X$ and suppose $a\in h_X^{\psi-1}(x'^\uparrow)$. Then $h^\psi_X(a)\geq_{X'} x'$ and so, by definition of $h^\psi_X$, we have $\iota_X(a)\in \psi^{-1}(\iota_{X'}(x')^\uparrow)\subseteq \iota_X(x)^\uparrow$, and so $a\in x^\uparrow$.
\item Dual to (i).
\item Let $a\in X$ and suppose $h^\psi_X(a)\R' y'$. Then $\iota_{X'}\circ h^\psi_X(a) \leq \iota_{Y'}(y')$, and thus $\psi \circ \iota_X(a) \leq \iota_{Y'}(y')$. It follows that $\iota_X(a)\in \psi^{-1}(\iota_{Y'}(y')^\downarrow)\subseteq \iota_Y(y)^\downarrow$, and so $a\R y$ as required.
\item Dual to (iii). 
\item By choice of $(x,y)$.
\end{enumerate}   
\end{enumerate}
\end{proof}

\begin{thm}\label{T:hom-stab}
Let $G=\Pe$ and $G'=(e_{X'},e_{Y'},\R')$ be Galois polarities. Let $h:G\to G'$ be a polarity morphism. Then, recalling Definitions \ref{D:psi} and \ref{D:h}, we have 
\[h = h_{\psi_h}.\] 
Similarly, if $\psi:X\uplus Y\to X'\uplus Y'$ is a Galois-stable map, then 
\[\psi = \psi_{h_\psi}.\]
\end{thm}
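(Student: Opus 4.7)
The proof is essentially a straightforward unwinding of Definitions \ref{D:psi} and \ref{D:h}, using the commutativity established in Theorems \ref{T:polHom}(1) and \ref{T:GalStab}(1). The plan is to verify the two equalities componentwise.

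For the first equality, write $h = (h_X, h_P, h_Y)$ and $h_{\psi_h} = (h_X^{\psi_h}, h_P^{\psi_h}, h_Y^{\psi_h})$. I would first treat $h_X^{\psi_h}$: for any $x \in X$, by Definitions \ref{D:h} and \ref{D:psi},
\[h_X^{\psi_h}(x) = \iota_{X'}^{-1}\circ\psi_h\circ\iota_X(x) = \iota_{X'}^{-1}\circ\iota_{X'}\circ h_X(x) = h_X(x),\]
where the final step uses that $\iota_{X'}$ is an order embedding (so $\iota_{X'}^{-1}$ is the identity on $\iota_{X'}[X']$). The argument for $h_Y^{\psi_h} = h_Y$ is identical. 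For the $P$-component, we use the fact, shown in the proof of Theorem \ref{T:polHom}(2), that $\psi_h\circ\gamma = \gamma'\circ h_P$, whence
\[h_P^{\psi_h} = \gamma'^{-1}\circ\psi_h\circ\gamma = \gamma'^{-1}\circ\gamma'\circ h_P = h_P,\]
since $\gamma'$ is an order embedding.

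For the second equality, let $z \in X\uplus Y$. By Theorem \ref{T:unique}(3), $\iota_X[X]$ is join-dense and $\iota_Y[Y]$ is meet-dense, and in particular every element of $X\uplus Y$ arises from some element of $X\cup Y$ under the canonical quotient, so $z \in \iota_X[X] \cup \iota_Y[Y]$. If $z = \iota_X(x)$ for some $x \in X$, then by Definitions \ref{D:psi} and \ref{D:h},
\[\psi_{h_\psi}(z) = \iota_{X'}\circ h_X^\psi(x) = \iota_{X'}\circ\iota_{X'}^{-1}\circ\psi\circ\iota_X(x) = \psi\circ\iota_X(x) = \psi(z),\]
where the penultimate equality uses condition \ref{G2}, which guarantees $\psi\circ\iota_X(x)$ lies in the domain of $\iota_{X'}^{-1}$. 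The case $z = \iota_Y(y)$ is dual, using \ref{G3}.

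There is no substantive obstacle here: once the well-definedness of the partial inverses $\iota_{X'}^{-1}$, $\iota_{Y'}^{-1}$, $\gamma'^{-1}$ on the relevant subsets is granted (which is precisely what the Galois-stability conditions \ref{G1}--\ref{G3} provide on one side, and what the commutativity established in Theorem \ref{T:polHom} provides on the other), both identities collapse to cancellations of a map with its partial inverse. The only point requiring minor care is ensuring the case analysis in the second part is exhaustive, which follows from the density statement in Theorem \ref{T:unique}(3) together with the fact that the canonical quotient $X\cup Y \twoheadrightarrow X\uplus Y$ is surjective.
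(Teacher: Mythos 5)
Your proof is correct and follows essentially the same route as the paper: the paper phrases the argument by appealing to the uniqueness clauses of Theorems \ref{T:polHom}(1) and \ref{T:GalStab}(1) (both $h$ and $h_{\psi_h}$ make the diagram in Figure \ref{F:polHom} commute with $\psi_h$, hence coincide), while you carry out the same cancellation explicitly componentwise. The explicit verification, including the use of \ref{G1}--\ref{G3} for the partial inverses and the surjectivity of the quotient onto $X\uplus Y$ for exhaustiveness of the case split, is sound.
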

\begin{proof}
Given $h$, by Theorem \ref{T:polHom} we know $\psi_h$ is the unique map such that the diagram in Figure \ref{F:polHom} commutes (replacing $\psi$ with $\psi_h$). We also know that $\psi_h$ is Galois-stable. But by Theorem \ref{T:GalStab} it follows that $h_{\psi_h}=(h^{\psi_h}_X,h^{\psi_h}_P,h^{\psi_h}_Y)$ is the unique triple such that the same diagram commutes (replacing $\psi$ with $\psi_h$, and replacing $h_X$, $h_P$ and $h_Y$ $h^{\psi_h}_X$, $h^{\psi_h}_P$ and $h^{\psi_h}_Y$ respectively). From this it follows immediately that $h = h_{\psi_h}$, as we now have two `unique' commuting diagrams involving $\psi_h$. The argument proving $\psi = \psi_{h_\psi}$ is essentially the same.
\end{proof}
We can now easily prove a result analogous to Theorem \ref{T:polHom}(3).

\begin{cor}
Let $G$, $G'$, $\psi$ and $h_\psi$ be as in Theorem \ref{T:GalStab}. Then $\psi$ is an order embedding if and only if $h_\psi$ is a polarity embedding. 
\end{cor}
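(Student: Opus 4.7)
The plan is to derive this corollary directly from Theorem \ref{T:polHom}(3) by using the bijective correspondence between polarity morphisms and Galois-stable maps established in Theorem \ref{T:hom-stab}. The key observation is that we have already proved the analogous statement in one direction, namely that $\psi_h$ is an order embedding if and only if $h$ is a polarity embedding, and the identities $h = h_{\psi_h}$ and $\psi = \psi_{h_\psi}$ let us transport this equivalence to the present setting essentially for free.

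More concretely, first I would invoke Theorem \ref{T:hom-stab} to write $\psi = \psi_{h_\psi}$. Suppose $\psi$ is an order embedding. Then $\psi_{h_\psi}$ is an order embedding, so by Theorem \ref{T:polHom}(3) applied to the polarity morphism $h_\psi$, we conclude that $h_\psi$ is a polarity embedding. Conversely, suppose $h_\psi$ is a polarity embedding. Then by Theorem \ref{T:polHom}(3) the associated Galois-stable map $\psi_{h_\psi}$ is an order embedding, and since $\psi = \psi_{h_\psi}$, we conclude that $\psi$ is an order embedding.

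There is essentially no obstacle here; the work was done in proving Theorems \ref{T:polHom}, \ref{T:GalStab}, and \ref{T:hom-stab}. The only thing to be a little careful about is making sure that the two-way identification from Theorem \ref{T:hom-stab} is used in the correct direction in each case, but both uses are immediate rewritings. Since the proof is one short paragraph, it can be presented in the text verbatim rather than broken into cases.
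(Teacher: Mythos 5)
Your argument is exactly the paper's own proof: write $\psi=\psi_{h_\psi}$ via Theorem \ref{T:hom-stab} and then apply Theorem \ref{T:polHom}(3) to $h_\psi$. It is correct and complete as stated.
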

\begin{proof}
We have $\psi = \psi_{h_\psi}$ from Theorem \ref{T:hom-stab}, and from Theorem \ref{T:polHom}(3) we have that $\psi_{h_\psi}$ is an order embedding if and only if $h_\psi$ is a polarity embedding. The result follows immediately.
\end{proof}

The next lemma is needed to show polarity morphisms compose as expected.

\begin{lemma}\label{L:compositions}
For each $i\in\{1,2,3\}$, let $G_i = (e_{X_i},e_{X_i},\R_i)$ be a Galois polarity, and let $\psi_1:X_1\uplus Y_1\to X_2\uplus Y_2$ and $\psi_2:X_2\uplus Y_2 \to X_3\uplus Y_3$ be Galois-stable maps. Then the composition $\psi_2\circ \psi_1$ is also Galois-stable, and $h_{\psi_2\circ \psi_1} = h_{\psi_2}\circ h_{\psi_1}$.

Similarly, let $h_1 = (h_{X_1},h_{P_1},h_{Y_1}):G_1\to G_2$ and $h_2=(h_{X_2},h_{P_2},h_{Y_2}):G_2\to G_3$ be polarity morphisms. If we define
\[h_2\circ h_1= (h_{X_2}\circ h_{X_1}, h_{P_2}\circ h_{P_1},h_{Y_2}\circ h_{Y_1}),\]
then $h_2\circ h_1$ is also a polarity morphism, and $\psi_{h_1\circ h_2} = \psi_{h_2}\circ \psi_{h_1}$.
\end{lemma}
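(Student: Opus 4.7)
The plan is to first verify everything for Galois-stable maps, then bootstrap the polarity-morphism statements via the bijective correspondence from Theorem \ref{T:hom-stab}. So I would begin by showing that $\psi_2 \circ \psi_1$ is Galois-stable whenever $\psi_1$ and $\psi_2$ are. Order-preservation is immediate, and \ref{G1}, \ref{G2}, \ref{G3} transfer routinely by chaining inclusions; for instance, $\psi_2 \circ \psi_1 \circ \gamma_1[P_1] \subseteq \psi_2 \circ \gamma_2[P_2] \subseteq \gamma_3[P_3]$ from two applications of \ref{G1}, and similarly for the other two conditions. The main technical point is cut-stability: given $r_1 \not\leq r_2$ in $X_3 \uplus Y_3$, apply cut-stability of $\psi_2$ to produce $q_1 \not\leq q_2$ in $X_2 \uplus Y_2$ with $\psi_2^{-1}(r_1^\uparrow) \subseteq q_1^\uparrow$ and $\psi_2^{-1}(r_2^\downarrow) \subseteq q_2^\downarrow$, then apply cut-stability of $\psi_1$ to this pair to obtain $p_1 \not\leq p_2$ in $X_1 \uplus Y_1$ with the analogous inclusions. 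Since the pre-image operation is monotone under inclusion,
\[(\psi_2 \circ \psi_1)^{-1}(r_1^\uparrow) = \psi_1^{-1}(\psi_2^{-1}(r_1^\uparrow)) \subseteq \psi_1^{-1}(q_1^\uparrow) \subseteq p_1^\uparrow,\]
and dually for $r_2$, completing the verification.

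Next I would check $h_{\psi_2 \circ \psi_1} = h_{\psi_2} \circ h_{\psi_1}$ componentwise using Definition \ref{D:h}. For the $X$-component,
\[h^{\psi_2}_{X_2} \circ h^{\psi_1}_{X_1} = \iota_{X_3}^{-1} \circ \psi_2 \circ \iota_{X_2} \circ \iota_{X_2}^{-1} \circ \psi_1 \circ \iota_{X_1},\]
and by \ref{G2} for $\psi_1$ every element of $\psi_1 \circ \iota_{X_1}[X_1]$ lies in $\iota_{X_2}[X_2]$, so $\iota_{X_2} \circ \iota_{X_2}^{-1}$ acts as the identity on this set; the expression collapses to $\iota_{X_3}^{-1} \circ (\psi_2 \circ \psi_1) \circ \iota_{X_1}$, which is $h^{\psi_2 \circ \psi_1}_{X_1}$. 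The $Y$- and $P$-components go through identically using \ref{G3} and \ref{G1} respectively.

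Finally, for the polarity-morphism half of the lemma, I would use Theorem \ref{T:hom-stab} to write $h_i = h_{\psi_{h_i}}$ for $i = 1, 2$. Combining this with the identity just proved gives
\[h_2 \circ h_1 = h_{\psi_{h_2}} \circ h_{\psi_{h_1}} = h_{\psi_{h_2} \circ \psi_{h_1}},\]
and since $\psi_{h_2} \circ \psi_{h_1}$ is Galois-stable by the first part of the lemma, Theorem \ref{T:GalStab} tells us the right-hand side is a polarity morphism; hence so is $h_2 \circ h_1$. Applying $\psi_{(-)}$ to both sides of this equation and invoking Theorem \ref{T:hom-stab} once more yields $\psi_{h_2 \circ h_1} = \psi_{h_{\psi_{h_2} \circ \psi_{h_1}}} = \psi_{h_2} \circ \psi_{h_1}$. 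The only real obstacle is the cut-stability computation; everything else is essentially diagram-chasing powered by the bijective correspondence established in Theorem \ref{T:hom-stab}.
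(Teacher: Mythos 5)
Your proposal is correct and follows essentially the same route as the paper: verify that Galois-stability is closed under composition, unpack Definition \ref{D:h} using \ref{G1}--\ref{G3} to get $h_{\psi_2\circ\psi_1}=h_{\psi_2}\circ h_{\psi_1}$, and then bootstrap the polarity-morphism half through the correspondence of Theorem \ref{T:hom-stab} and Theorem \ref{T:GalStab}. The only notable difference is that you prove closure of cut-stability under composition directly (correctly), where the paper simply cites Ern\'e, and you obtain $\psi_{h_2\circ h_1}=\psi_{h_2}\circ\psi_{h_1}$ from the bijection rather than by unpacking Definition \ref{D:psi}; both are sound.
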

\begin{proof}
It's straightforward to show that the composition of maps satisfying \ref{G1}--\ref{G3} also satisfies these conditions, and compositions of order preserving maps are obviously order preserving. Moreover, cut-stability is preserved by composition \cite[Corollary 2.10]{Ern91a}. So compositions of Galois-stable maps are also Galois-stable. Moreover, unpacking Definition \ref{D:h} reveals that $h_{\psi_2\circ \psi_1} = h_{\psi_2}\circ h_{\psi_1}$. For example, 
\begin{align*} h^{\psi_2\circ \psi_1}_{X_1}&=\iota_{X_3}^{-1}\circ \psi_2\circ\psi_1\circ\iota_{X_1}\\
& = \iota_{X_3}^{-1}\circ \psi_2\circ \iota_{X_2}\circ\iota_{X_2}^{-1}\circ\psi_1\circ\iota_{X_1}\\
&=h^{\psi_2}_{X_1}\circ h^{\psi_1}_{X_1}.\end{align*}

Similarly, unpacking Definition \ref{D:psi} reveals that $\psi_{h_2\circ h_1} = \psi_{h_2}\circ \psi_{h_1}$, and is consequently well defined and Galois-stable, and so it follows from Theorem \ref{T:GalStab} that  $h_{\psi_{h_2}\circ \psi_{h_1}}$ is a polarity morphism. So $h_2\circ h_1$ is a polarity morphism, as, using Theorem \ref{T:hom-stab}, we have 
\begin{align*}
h_2\circ h_1 &= h_{\psi_{h_2}}\circ h_{\psi_{h_1}} \\
&= h_{\psi_{h_2}\circ \psi_{h_1}}.
\end{align*}

\end{proof}

\begin{prop}
The class of Galois polarities and polarity morphisms forms a category, where polarity morphisms compose componentwise.
\end{prop}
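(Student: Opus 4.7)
The plan is to verify the three category axioms in turn: closure under composition, associativity, and existence of identity morphisms. Closure is essentially already handled by Lemma~\ref{L:compositions}, which shows that the componentwise composition of two polarity morphisms is again a polarity morphism. So only associativity and identities need to be addressed explicitly.

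For associativity, I would simply note that polarity morphisms are defined to compose componentwise (see Lemma~\ref{L:compositions}) and that ordinary function composition on each of the three coordinates is associative. Hence for polarity morphisms $h_1, h_2, h_3$ in composable configuration, one has $(h_3 \circ h_2) \circ h_1 = h_3 \circ (h_2 \circ h_1)$ coordinatewise, and therefore as polarity morphisms.

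For identities, given a Galois polarity $G = \Pe$, the natural candidate is the triple $\id_G := (\id_X, \id_P, \id_Y)$. I would verify the three conditions \ref{M1}--\ref{M3} of Definition~\ref{D:polHom}. Condition \ref{M1} is immediate as the relevant diagram reduces to the identity arrows on $e_X$ and $e_Y$. Condition \ref{M2} is equally immediate since it becomes $\iota_Y(y) \leq \iota_X(x) \ra \iota_Y(y)\leq \iota_X(x)$. For \ref{M3}, given $x'\in X$ and $y'\in Y$ with $(x',y')\notin\R$, take $x = x'$ and $y = y'$; then the containments (i) and (ii) are trivial equalities, the implications (iii) and (iv) become tautologies, and (v) holds by assumption. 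Finally, one checks that $\id_G$ is a left and right identity for composition with any polarity morphism $h=(h_X,h_P,h_Y)$; this is just the statement that identities are left and right units for function composition in each coordinate.

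No step is genuinely difficult here, because Lemma~\ref{L:compositions} absorbs essentially all of the conceptual content of closure under composition. The only mild subtlety is checking that the identity triple is a polarity morphism, since condition \ref{M3} looks ostensibly nontrivial; but the choice $x=x'$, $y=y'$ trivializes it. So the proof reduces to a sequence of routine verifications, which I would present concisely rather than in full detail.
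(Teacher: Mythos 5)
Your proof is correct and follows the same route as the paper: composition is handled by Lemma~\ref{L:compositions}, and the identity triple $(\id_X,\id_P,\id_Y)$ is checked to be a polarity morphism (the paper simply asserts that identities ``obviously exist,'' whereas you verify \ref{M1}--\ref{M3} explicitly, with the choice $x=x'$, $y=y'$ correctly trivializing \ref{M3}). No gaps.
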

\begin{proof}
Identity morphisms obviously exist, so we need only check composition behaves appropriately, and this follows immediately from Lemma \ref{L:compositions}.
\end{proof}

We will expand on this categorical viewpoint in Section \ref{S:cat}.

\subsection{Galois polarities and $\Delta_1$-completions}\label{S:PolAndDel}
Recall that if $P$ is a poset we write, for example, $e:P\to \DM(P)$ for the MacNeille completion of $P$ (see Definition \ref{D:DM}).

\begin{lemma}\label{L:polToDel}
If $\Pe$ is a Galois polarity, and if $e:X\uplus Y\to\DM(X\uplus Y)$ is the MacNeille completion of $X\uplus Y$, then $e\circ\gamma:P\to \DM(X\uplus Y)$ is a $\Delta_1$-completion (where $\gamma$ is as in Definition \ref{D:gamma}). 
\end{lemma}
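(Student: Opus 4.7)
The plan is to verify the two conditions of Definition \ref{D:Del}: that $e\circ\gamma$ is a completion, and that $e\circ\gamma[P]$ is dense in $\DM(X\uplus Y)$. The completion condition is essentially automatic, since $\DM(X\uplus Y)$ is by Definition \ref{D:DM} a complete lattice, and $e\circ\gamma$ is a composition of two order embeddings ($\gamma$ by Proposition \ref{P:commute2}(2) and $e$ by virtue of being a MacNeille completion). So the substantive task is to establish density.

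I would prove density in two stages. First, working entirely inside $X\uplus Y$, I would show that every element is both a join of meets and a meet of joins of elements of $\gamma[P]$. By Theorem \ref{T:unique}(3), $\iota_X[X]$ is join-dense in $X\uplus Y$, so it suffices to exhibit each $\iota_X(x)$ as a meet of elements of $\gamma[P]$. Since $G$ is Galois, $e_X$ is a meet-extension, so $x = \bw e_X[e_X^{-1}(x^\uparrow)]$; by Theorem \ref{T:unique}(1) the map $\iota_X$ preserves this meet, giving
\[\iota_X(x) = \bw \iota_X\circ e_X[e_X^{-1}(x^\uparrow)] = \bw \gamma[e_X^{-1}(x^\uparrow)].\]
Combining these yields the join-of-meets representation in $X\uplus Y$, and the dual argument using $\iota_Y$ and the join-extension $e_Y$ gives the meet-of-joins representation.

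Second, I would transfer this density across $e$, using that a MacNeille completion is both a meet- and a join-completion, hence completely meet- and join-preserving. Given $a \in \DM(X\uplus Y)$, the join-completion property provides some $S\subseteq X\uplus Y$ with $a = \bv e[S]$. Writing each $s\in S$ as $\bv_{x\in T_s}\bw\gamma[e_X^{-1}(x^\uparrow)]$ from the first stage and pulling $e$ through via complete preservation gives
\[a = \bv_{s\in S}\bv_{x\in T_s}\bw e\circ\gamma[e_X^{-1}(x^\uparrow)],\]
exhibiting $a$ as a join of meets of elements of $e\circ\gamma[P]$. The dual representation as a meet of joins is obtained by starting from $a = \bw e[S']$ and invoking the dual version of the first stage.

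I do not expect a genuine obstacle here; the only care required concerns bookkeeping, since the argument juggles joins and meets computed in $X$, in $X\uplus Y$, and in $\DM(X\uplus Y)$. The Galois hypothesis enters critically through Theorem \ref{T:unique}(1), which supplies the meet- and join-preservation properties of $\iota_X$ and $\iota_Y$ on subsets coming from $e_X[P]$ and $e_Y[P]$, and these are precisely the joins and meets that are passed through $e$ in the second stage.
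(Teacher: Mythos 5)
Your proof is correct and follows essentially the same route as the paper's: combine the density facts at each level (Theorem \ref{T:unique}(3) for $\iota_X[X]$ and $\iota_Y[Y]$ in $X\uplus Y$, the meet-/join-extension property of $e_X$ and $e_Y$, and the density of $e[X\uplus Y]$ in the MacNeille completion), using the preservation properties from Theorem \ref{T:unique}(1) and of $e$ to pass the joins and meets through. The paper's version is just a terser statement of the same argument.
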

\begin{proof}
$X\uplus Y$ is join-generated by $\iota_X[X]$, and meet-generated by $\iota_Y[Y]$ (by Theorem \ref{T:unique}(3)), and $X$ and $Y$ are meet- and join-generated by $e_X[P]$ and $e_Y[P]$ respectively. $\DM(X\uplus Y)$ is both join- and meet-generated by $e[X\uplus Y]$. Thus every element of $\DM(X\uplus Y)$ is both a join of meets, and a meet of joins, of elements of $e\circ\gamma[P]$ as required.
\end{proof}

\begin{defn}\label{D:polToDel}
If $G$ is a Galois polarity, define the $\Delta_1$-completion $d_G=e\circ\gamma$ constructed from $G$ as in Lemma \ref{L:polToDel} to be the \textbf{$\Delta_1$-completion generated by $G$}.
\end{defn}

Now, given a $\Delta_1$-completion $d:P\to D$, define $X_D$ and $Y_D$ to be (disjoint isomorphic copies of) the subsets of $D$ meet- and join-generated by $d[P]$ respectively. Define $e_{X_D}:P\to X_D$ and $e_{Y_D}:P\to Y_D$ by composing $d$ with the isomorphisms into $X_D$ and $Y_D$ respectively. Abusing notation by identifying $X_D$ and $Y_D$ with their images in $D$, define $\R_D$ on $X_D\times Y_D$ by $x \R_D y\iff x\leq y$ in $D$. Then $(e_{X_D}, e_{Y_D},\R_D)$ is a complete Galois polarity. To see this note that it is straightforward to show that the order on $D$ induces an order on $X_D\cup Y_D$ satisfying the necessary conditions. This leads to the following definition.

\begin{defn}\label{D:delToPol}
If $d:P\to D$ is a $\Delta_1$-completion, define the complete Galois polarity $G_d=(e_{X_D}, e_{Y_D},\R_D)$  constructed from $d$ as described in the preceding paragraph to be the \textbf{Galois polarity generated by $d$}.
\end{defn}

\begin{lemma}\label{L:GalEmb}
Let $G=\Pe$ be a Galois polarity. Let $d_G=e\circ\gamma:P\to \DM(X\uplus Y)$ be the $\Delta_1$-completion generated by $G$.  Denote the Galois polarity generated by $d_G$ by $G_{d_G}=(e_{X_{\DM}}, e_{Y_{\DM}}, \R_{\DM} )$. Then there is a polarity embedding $(h_X,h_P,h_Y)$ from $G$ to $G_{d_G}$ with $h_P = \id_P$. Moreover, if $G$ is complete, then this embedding is an isomorphism of polarities extending $P$, and is the only such isomorphism.
\end{lemma}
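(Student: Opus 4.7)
My plan is to define the embedding using the MacNeille map $e \colon X \uplus Y \to \DM(X \uplus Y)$ together with the inclusions $\iota_X, \iota_Y$. Specifically, set $h_P = \id_P$, $h_X = e \circ \iota_X$ and $h_Y = e \circ \iota_Y$, identifying $X_{\DM}$ and $Y_{\DM}$ with their images in $\DM(X \uplus Y)$ as in Definition \ref{D:delToPol}. First I would verify $h_X[X] \subseteq X_{\DM}$: for $x \in X$, writing $x = \bw e_X[S]$ with $S = e_X^{-1}(x^\uparrow)$ (using that $e_X$ is a meet-extension), Theorem \ref{T:unique}(1) gives $\iota_X(x) = \bw \gamma[S]$, and applying the completely meet-preserving $e$ yields $h_X(x) = \bw d_G[S] \in X_{\DM}$; the containment $h_Y[Y] \subseteq Y_{\DM}$ follows dually. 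Commutativity of the diagram in Figure \ref{F:isom} is then immediate: $h_X \circ e_X = e \circ \gamma = d_G = e_{X_{\DM}} \circ h_P$, and similarly for $Y$.

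Next I would check the polarity morphism conditions, along with the additional polarity embedding condition. The order on $X_{\DM} \uplus Y_{\DM}$ agrees with the restriction of the order on $\DM(X \uplus Y)$ to $X_{\DM} \cup Y_{\DM}$, as observed in Definition \ref{D:delToPol} and forced by the uniqueness of the $3$-preorder on a Galois polarity (Theorem \ref{T:unique}(2)). Hence \ref{M2} follows by applying the order embedding $e$ to $\iota_Y(y) \leq \iota_X(x)$. The additional embedding condition $h_X(x) \R_{\DM} h_Y(y) \ra x \R y$ holds because $h_X(x) \leq h_Y(y)$ in $\DM(X \uplus Y)$ pulls back through $e$ to $\iota_X(x) \leq \iota_Y(y)$, i.e.\ $x \R y$. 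That $h_X, h_P, h_Y$ are order embeddings is clear from their construction.

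The main obstacle is verifying \ref{M3}. The plan is to exploit that $h_X[X]$ is join-dense and $h_Y[Y]$ is meet-dense in $\DM(X \uplus Y)$, which follows from the join-density of $\iota_X[X]$ and meet-density of $\iota_Y[Y]$ in $X \uplus Y$ (Theorem \ref{T:unique}(3)) together with $e$ being a MacNeille completion (so joins and meets are preserved and $e[X\uplus Y]$ is both join- and meet-dense). Given $(x', y') \notin \R_{\DM}$, i.e.\ $x' \not\leq y'$ in $\DM(X \uplus Y)$, if for every $x \in X$ with $h_X(x) \leq x'$ and every $y \in Y$ with $y' \leq h_Y(y)$ we had $h_X(x) \leq h_Y(y)$, then taking joins on the left and meets on the right and invoking density, we would obtain $x' \leq y'$, a contradiction. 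So we may pick $x \in X$, $y \in Y$ with $h_X(x) \leq x'$, $y' \leq h_Y(y)$ and $h_X(x) \not\leq h_Y(y)$. Condition (v) then holds since $h_X(x) \not\leq h_Y(y)$ pulls back to $\iota_X(x) \not\leq \iota_Y(y)$, i.e.\ $(x,y) \notin \R$. Condition (i) holds because $h_X(a) \geq x' \geq h_X(x)$ forces $a \geq x$ by $h_X$ being an order embedding; condition (ii) is dual. Conditions (iii) and (iv) follow from transitivity and the correspondence between $\leq$ and $\R$: e.g.\ $h_X(a) \leq y' \leq h_Y(y)$ yields $\iota_X(a) \leq \iota_Y(y)$, hence $a \R y$.

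For the moreover part, assume $G$ is complete, so $X$ and $Y$ are complete lattices. For any $x' \in X_{\DM}$, we have $x' = \bw d_G[R]$ for some $R \subseteq P$; letting $x = \bw e_X[R] \in X$ and using that $h_X = e \circ \iota_X$ is meet-preserving, we get $h_X(x) = \bw d_G[R] = x'$, so $h_X$ is surjective. Similarly $h_Y$ is surjective, and so $(h_X, h_P, h_Y)$ is a polarity isomorphism. For uniqueness, let $(h_X', h_P', h_Y')$ be any polarity isomorphism $G \to G_{d_G}$ with $h_P' = \id_P$. Axiom \ref{M1} forces $h_X' \circ e_X = d_G = h_X \circ e_X$, so $h_X'$ and $h_X$ agree on $e_X[P]$. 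Since $h_X'$ is an order isomorphism between the complete lattices $X$ and $X_{\DM}$, it preserves meets, and $e_X[P]$ meet-generates $X$; hence $h_X' = h_X$. Dually $h_Y' = h_Y$, and $h_P' = \id_P = h_P$ by hypothesis, proving uniqueness.
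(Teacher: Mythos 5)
Your proposal is correct and follows essentially the same route as the paper: defining $h_X = e\circ\iota_X$ and $h_Y = e\circ\iota_Y$ (the paper just carries explicit isomorphisms $\mu_X,\mu_Y$ onto $X_\DM,Y_\DM$ rather than identifying them with their images), verifying \ref{M1}--\ref{M3} via the join-/meet-density of $e\circ\iota_X[X]$ and $e\circ\iota_Y[Y]$ in $\DM(X\uplus Y)$, and obtaining surjectivity and uniqueness in the complete case from meet-/join-preservation and the fact that $e_X[P]$ and $e_Y[P]$ generate $X$ and $Y$. The only cosmetic difference is that you phrase the density step in \ref{M3} as a proof by contradiction where the paper selects the witnesses $x,y$ directly.
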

\begin{proof}
Recall that $\gamma = \iota_X\circ e_X = \iota_Y\circ e_Y$ by definition. Define 
$h_P=\id_P$. Note that we are using e.g. $X_{\DM}$ to denote $X_{\DM(X\uplus Y)}$. Let $\mu_X$ be the isomorphism used to define $X_{\DM}$ (recall that $X_{\DM}$ is an isomorphic copy of a subset of $\DM(X\uplus Y)$). Define the map $h_X:X\to X_{\DM}$ by $h_X = \mu_X\circ e\circ\iota_X$. This is clearly a meet-completion. Similarly define a meet-completion $h_Y:Y\to Y_{\DM}$ by $h_Y=\mu_Y\circ e\circ\iota_Y$. Note that $e_{X_{\DM}}$ is just $\mu_X\circ e\circ\gamma = \mu_X\circ e\circ\iota_X\circ e_X$, and similarly $e_{Y_{\DM}}=\mu_Y\circ e\circ\iota_Y\circ e_Y$. Thus we trivially have the commutativity required by \ref{M1}. 

To show that \ref{M2} is also satisfied, let $x\in X$, let $y\in Y$, and suppose $\iota_Y(y)\leq \iota_X(x)$. Then $e\circ \iota_Y(y)\leq e\circ \iota_X(x)$. The unique 3-preorder $\preceq = \hRg$ for $G_{d_G}$ can only be the order inherited from $\DM(X\uplus Y)$, so $\mu_X\circ e\circ\iota_X(x)\preceq \mu_Y\circ e\circ\iota_Y(y)$, and thus $\iota_{Y_{\DM}}\circ h_Y(y) \leq \iota_{X_{\DM}}\circ h_X(x)$ as required.

For \ref{M3}, let $x'\in X_{\DM}$, let $y'\in Y_{\DM}$, and suppose $(x', y')\notin \R_{\DM}$. Then $\mu^{-1}_X(x')\not\leq \mu^{-1}_Y(y')$. As $e\circ\iota_X[X]$ and $e\circ\iota_Y[Y]$ are, respectively, join- and meet-dense in $\DM(X\uplus Y)$, there is $x\in X$ and $y\in Y$ with $e\circ\iota_X(x)\not\leq e\circ\iota_Y(y)$, with  $e\circ\iota_X(x)\leq \mu^{-1}_X(x')$, and with $\mu^{-1}_Y(y')\leq e\circ\iota_Y(y)$. We check the necessary conditions are satisfied for this choice of $x$ and $y$:
\begin{enumerate}[(i)]
\item Let $a\in X$. Then 
\begin{align*} h_X(a)\geq_{X_\DM} x' &\iff \mu^{-1}_X\circ\mu_X\circ e\circ\iota_X(a)\geq \mu^{-1}_X(x')\\
&\implies e\circ\iota_X(a)\geq e\circ\iota_X(x)\\
&\iff a\geq_X x,
\end{align*}
and so $h_X^{-1}(x'^\uparrow)\subseteq x^\uparrow$ as required.
\item Dual to (i).
\item Let $a\in X$, let $y'\in Y_{\DM}$, and suppose $h_X(a)\R_{\DM} y'$. Then 
\[\mu^{-1}_X\circ\mu_X\circ e\circ\iota_X(a)\leq \mu_Y^{-1}(y')\leq e\circ\iota_Y(y),\]
and so $\iota_X(a)\leq \iota_Y(y)$, and thus $a\R y$ as required.
\item Dual to (iii).
\item Since $e\circ\iota_X(x)\not\leq e\circ\iota_Y(y)$ we must have $(x, y)\notin \R$.
\end{enumerate}
Now let $x\in X$, let $y\in Y$, and suppose $h_X(x)\R_{\DM} h_Y(y)$. Then 
\[\mu_X^{-1}\circ\mu_X\circ e\circ\iota_X(x)\leq \mu_Y^{-1}\circ\mu_Y\circ e\circ\iota_Y(y),\] 
and so $\iota_X(x)\leq \iota_Y(y)$, and thus $x\R y$, and we conclude that $(h_X, \id_P, h_Y)$ is a polarity embedding as claimed. Finally, suppose $e_X$ and $e_Y$ are completions. Then, given $x'\in X_\DM$, there is $S\subseteq P$ with
\begin{align*} x' &= \mu_X(\bw e\circ\gamma[S]) \\
&= \mu_X(\bw e\circ\iota_X\circ e_X[S])\\
&= \mu_X\circ e\circ\iota_X(\bw e_X[S])\\
&= h_X(\bw e_X[S]),
\end{align*} 
so $h_X$ is surjective, and thus an isomorphism. The same is true for $h_Y$ by duality, and, as $h_P$ is the identity on $P$, we have a polarity isomorphism extending $P$ as claimed. Finally, suppose $g = (g_X, \id_P, g_Y)$ is another such polarity isomorphism. Then we must have $g_X\circ e_X = e_{X_\DM} = h_X\circ e_X$. Since $e_X$ is a meet-extension it follows that $g_X = h_X$. A dual argument gives $g_Y = h_Y$, so we are done. 
\end{proof}

\begin{lemma}\label{L:delIsom}
Let $d:P\to D$ be a $\Delta_1$-completion. Let $G_d=(e_{X_D}, e_{Y_D},\R_D)$ be the complete Galois polarity generated by $d$.  Then $d$ is isomorphic to $d_{G_d}$, the $\Delta_1$-completion generated by $G_d$.  
\end{lemma}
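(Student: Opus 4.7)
Write $X_D$ and $Y_D$ as disjoint copies of the subposets $X_D^*, Y_D^* \subseteq D$ meet- and join-generated by $d[P]$, via isomorphisms $\mu_X:X_D\to X_D^*$ and $\mu_Y:Y_D\to Y_D^*$, so that $e_{X_D}=\mu_X^{-1}\circ d$ and $e_{Y_D}=\mu_Y^{-1}\circ d$. The plan is to construct a canonical order embedding $\phi:X_D\uplus Y_D\to D$ whose image is dense in $D$, and then use the uniqueness of the MacNeille completion to extend $\phi$ to the desired isomorphism of $\Delta_1$-completions of $P$.

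First I will define $\phi':X_D\cup Y_D\to D$ by $\phi'(z)=\mu_X(z)$ if $z\in X_D$ and $\phi'(z)=\mu_Y(z)$ if $z\in Y_D$, and verify that $\phi'$ respects $\hRg$. On $X_D$ and $Y_D$ separately this is immediate from the definitions of $\mu_X,\mu_Y$. On $X_D\times Y_D$ we have $x\R_D y \iff \mu_X(x)\leq \mu_Y(y)$ in $D$ by the very definition of $\R_D$. The crux is the $Y_D\times X_D$ case: by Corollary~\ref{C:alt}, $y\hRg x$ iff for all $p\in e_{Y_D}^{-1}(y^\downarrow)$ and $q\in e_{X_D}^{-1}(x^\uparrow)$ we have $p\leq_P q$. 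Because $d$ is a $\Delta_1$-completion, $d[P]$ is join-dense below $\mu_Y(y)\in Y_D^*$ and meet-dense above $\mu_X(x)\in X_D^*$; so this condition is equivalent to $\mu_Y(y)\leq \mu_X(x)$ in $D$. Thus $\phi'$ is monotone and reflects order, and hence descends to a well-defined order embedding $\phi:X_D\uplus Y_D\to D$ (well-definedness on the overlap $\iota_{X_D}[X_D]\cap\iota_{Y_D}[Y_D]=\gamma[P]$, from Proposition~\ref{P:commute2}(3), reduces to $\mu_X\circ e_{X_D}(p)=d(p)=\mu_Y\circ e_{Y_D}(p)$).

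Next, I will observe that $\phi[X_D\uplus Y_D]=X_D^*\cup Y_D^*\supseteq d[P]$, and since $d[P]$ is both join- and meet-dense in $D$ by the $\Delta_1$-completion property, so is $\phi[X_D\uplus Y_D]$. Thus $D$ is a complete lattice in which $\phi[X_D\uplus Y_D]$ is dense, so the inclusion of $\phi[X_D\uplus Y_D]$ into $D$ is (up to isomorphism) the MacNeille completion of $X_D\uplus Y_D$. By uniqueness of the MacNeille completion (Definition~\ref{D:DM} and the remark following), $\phi$ extends to a unique order isomorphism $\tilde\phi:\mathcal{N}(X_D\uplus Y_D)\to D$ with $\tilde\phi\circ e=\phi$, where $e:X_D\uplus Y_D\to\mathcal{N}(X_D\uplus Y_D)$ is the MacNeille completion used to build $d_{G_d}$.

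Finally, I will check that $\tilde\phi$ realises the desired isomorphism of extensions of $P$. Unpacking Definition~\ref{D:polToDel} and Definition~\ref{D:gamma}, $d_{G_d}=e\circ\gamma=e\circ\iota_{X_D}\circ e_{X_D}$, so for every $p\in P$,
\[\tilde\phi\circ d_{G_d}(p)=\tilde\phi\circ e\circ\iota_{X_D}\circ e_{X_D}(p)=\phi\circ\iota_{X_D}\circ e_{X_D}(p)=\mu_X\circ e_{X_D}(p)=d(p),\]
so $\tilde\phi\circ d_{G_d}=d$, giving an isomorphism of extensions of $P$ (in the sense of Definition~\ref{D:mapCat}) with $g_P=\id_P$ and $g_Q=\tilde\phi$. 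The hard step is the verification that $\phi$ is an order embedding on the $Y_D\times X_D$ block, since this is where Corollary~\ref{C:alt} must be combined with the density clause of the $\Delta_1$-completion definition; everything else is bookkeeping.
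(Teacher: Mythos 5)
Your proposal is correct and is essentially the paper's own argument written out in full: the paper's proof simply identifies $X_D\uplus Y_D$ with the subposet $\mu_X^{-1}(X)\cup\mu_Y^{-1}(Y)$ of $D$ under the inherited order and invokes uniqueness of the MacNeille completion (its Figure~\ref{F:delIsom} is exactly your $\phi$ and $\tilde\phi$), and your careful check of the $Y_D\times X_D$ block via Corollary~\ref{C:alt} is the content the paper leaves implicit. One tiny attribution quibble: the equivalence in that block uses that $\mu_Y(y)=\bv d[e_{Y_D}^{-1}(y^\downarrow)]$ and $\mu_X(x)=\bw d[e_{X_D}^{-1}(x^\uparrow)]$ in $D$, which comes from $X_D$ and $Y_D$ being meet- and join-generated by $d[P]$ (Definition~\ref{D:delToPol}) rather than from the $\Delta_1$-density of $d[P]$ in all of $D$; the latter is what you need for the density of $\phi[X_D\uplus Y_D]$.
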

\begin{proof}
Let $\mu_X$ and $\mu_Y$ be the isomorphisms onto $X_D$ and $Y_D$, respectively, let $\gamma_d:P\to X_D\uplus Y_D$, and let $e_d: X_D\uplus Y_D\to \DM( X_D\uplus Y_D)$ be the MacNeille completion of $ X_D\uplus Y_D$. Then $\mu_X^{-1}(X)\cup \mu_Y^{-1}(Y)$ has an order inherited from $D$, and we have the situation described in Figure \ref{F:delIsom}. This gives the result, as $e_d\circ\gamma_d$ is the $\Delta_1$-completion generated by $G_d$. 
\end{proof}

\begin{figure}[htbp]
\[\xymatrix{ 
P\ar[r]^{\gamma_d}\ar[dr]_d& X_D\uplus Y_D\ar[r]^{e_d}\ar@{<->}[d]^\cong & \DM(X_D\uplus Y_D)\ar@{<->}[d]^\cong\\
& \mu_X^{-1}(X)\cup \mu_Y^{-1}(Y)\ar[r]_{\phantom{xxxx}\subseteq} & D
}\] 
\caption{}
\label{F:delIsom}
\end{figure}

\begin{thm}\label{T:delta}
There is a 1-1 correspondence between (isomorphism classes of) $\Delta_1$-completions and (isomorphism classes of) complete Galois polarities. Moreover, for a fixed poset $P$ this correspondence restricts to a 1-1 correspondence between (isomorphism classes of) $\Delta_1$-completions of $P$ and (isomorphism classes of) complete Galois polarities extending $P$. 
\end{thm}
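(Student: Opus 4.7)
The plan is to exhibit the bijection via the two constructions already introduced: send a complete Galois polarity $G$ to the $\Delta_1$-completion $d_G$ of Definition \ref{D:polToDel}, and send a $\Delta_1$-completion $d$ to the complete Galois polarity $G_d$ of Definition \ref{D:delToPol}. Lemmas \ref{L:GalEmb} and \ref{L:delIsom} already supply the two round-trip identifications (modulo isomorphism), so the main remaining tasks are to verify that both assignments descend to maps between isomorphism classes, and, for the second statement, that they preserve the ``extends $P$'' refinement.

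First I would check well-definedness of $G\mapsto d_G$. Suppose $g=(g_X,g_P,g_Y):G\to G'$ is a polarity isomorphism between complete Galois polarities. By Theorem \ref{T:polHom}, $\psi_g:X\uplus Y\to X'\uplus Y'$ is an order isomorphism (Theorem \ref{T:polHom}(3) applied to $g$ and $g^{-1}$, combined with Theorem \ref{T:hom-stab}). Any order isomorphism between posets induces, up to canonical identification, an isomorphism of their MacNeille completions, and by construction $\psi_g\circ\gamma=\gamma'\circ g_P$, so the resulting map between MacNeille completions intertwines $d_G$ and $d_{G'}$ via the poset isomorphism $g_P$. Hence $d_G\cong d_{G'}$ as $\Delta_1$-completions. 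In the fixed-$P$ case, $g_P=\id_P$, so the constructed isomorphism between $d_G$ and $d_{G'}$ has its base map equal to $\id_P$, yielding an isomorphism of $\Delta_1$-completions of $P$.

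Next I would check well-definedness of $d\mapsto G_d$. Let $f:D\to D'$ witness an isomorphism $d\cong d'$ (so $f\circ d=d'\circ f_P$ for some order isomorphism $f_P$). Because $X_D$ and $Y_D$ are defined as the meet- and join-closures of $d[P]$ inside $D$, the isomorphism $f$ restricts to isomorphisms $X_D\to X_{D'}$ and $Y_D\to Y_{D'}$ that intertwine $e_{X_D},e_{Y_D}$ with $e_{X_{D'}},e_{Y_{D'}}$ via $f_P$, and preserve $\R_D$ since $\R_D$ is just the restriction of $\leq_D$. This triple is a polarity isomorphism $G_d\to G_{d'}$. Again, if $f_P=\id_P$, the induced polarity isomorphism also has identity base map, so $G_d\cong G_{d'}$ as polarities extending $P$.

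Finally, to conclude that the two maps are mutually inverse on isomorphism classes, I would invoke Lemma \ref{L:GalEmb} (giving $G\cong G_{d_G}$ when $G$ is complete, and as polarities extending $P$ when $G$ extends $P$) and Lemma \ref{L:delIsom} (giving $d\cong d_{G_d}$, and inspection of the isomorphism constructed in its proof shows the base map is the identity on $P$, so the isomorphism holds in the sense of $\Delta_1$-completions of $P$ as well). The main subtlety, which is where I would focus attention, is tracking that the two assignments respect the finer equivalence ``isomorphic as extensions of $P$'' (Definition \ref{D:mapCat}) and not merely ``isomorphic in the unrooted sense''; this amounts in each direction to noting that when the relevant base map is forced to be $\id_P$, the induced map on the other side of the correspondence automatically has the corresponding identity property, which is immediate from the explicit formulas in Definitions \ref{D:polToDel} and \ref{D:delToPol}. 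With these checks in hand, the two statements of the theorem follow at once.
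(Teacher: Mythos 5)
Your proposal is correct and follows essentially the same route as the paper: the paper likewise builds the correspondence from the assignments $d\mapsto G_d$ and $G\mapsto d_G$, verifies that each respects isomorphism (using that a polarity isomorphism induces an order isomorphism $\psi_h$ which lifts to the MacNeille completions, and that an isomorphism of completions restricts to the meet-/join-generated subposets), and closes the loop with Lemmas \ref{L:GalEmb} and \ref{L:delIsom}, handling the fixed-$P$ case by tracking that the base maps are identities. The only cosmetic difference is that the paper packages the argument as a single map $\Theta$ shown to be well defined, injective and surjective, rather than as two explicitly mutually inverse maps.
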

\begin{proof}
We define a map $\Theta$ from the class of all isomorphism classes of $\Delta_1$-completions to the class of all isomorphism classes of complete Galois polarities. Given a $\Delta_1$-completion $d$, let $[d]$ be the associated isomorphism class, and define $\Theta([d])$ by $G\in \Theta([d])$ if and only if $G$ is isomorphic to $G_d$, the complete Galois polarity generated by $d$.

Now, if $G\in \Theta([d])$ and  $G'\cong G$, then $G'\in \Theta([d])$ by definition of $\Theta$. Conversely, if $G$ and $G'$ are both in $\Theta([d])$, then they are both isomorphic to $G_d$, and thus each other.  So $\Theta([d])$ is indeed an isomorphism class, provided it is well defined. Suppose then that $d\cong d'$, let $G\in \Theta([d])$ and let $G'\in \Theta([d'])$. Since $d$ and $d'$ are isomorphic, it's easy to construct an isomorphism between the complete Galois polarities they generate (since these polarities are based on subsets of the corresponding complete lattices). If we denote these generated Galois polarities by $G_d$ and $G_{d'}$,  we have $G\cong G_d\cong G_{d'}\cong G'$, and thus $G\in \Theta([d'])$ and $G'\in \Theta([d])$. It follows that $\Theta([d])=\Theta([d'])$, and so $\Theta$ is well defined.

By Lemma \ref{L:GalEmb}, if $G$ is a complete Galois polarity, then $G\cong G_{d_G}$, so $G \in \Theta([d_G])$, and so $\Theta$ is surjective. Now suppose that $\Theta([d])= G = \Theta([d'])$. Then $G_d\cong G\cong G_{d'}$. Now, it's easy to see that isomorphic Galois polarities generate isomorphic $\Delta_1$-completions (the map $\psi_h$ will be an isomorphism, by Theorem \ref{T:phi}, and this will lift to the required isomorphism), so 
\[d_{G_d}\cong d_G\cong d_{G_{d'}}.\] By Lemma \ref{L:delIsom} we have $d_{G_d}\cong d$ and $d_{G_{d'}}\cong d'$, and so $d\cong d'$. This shows $\Theta$ is injective.

Finally, appealing to Proposition \ref{P:GalSimp}, the second claim is essentially \cite[Theorem 3.4]{GJP13}, and we can also obtain this result with the proof above by working modulo $\Delta_1$-completions of a fixed poset $P$,  and complete Galois polarities extending $P$.
\end{proof}

Before moving on we pause to consider a technical question regarding the polarity extensions discussed in Section \ref{S:ExtRes}. Given a Galois polarity $\Pe$ which is not complete, by Lemma \ref{L:GalEmb} there is a polarity embedding from $\Pe$ to $(e_{X_D}, e_{Y_D}, \R_D)$, where $e_{X_D}$ and $e_{Y_D}$ are meet- and join-completions respectively. By the definition of polarity embeddings, there are order embeddings $h_X:X\to X_D$ and $h_Y:Y\to Y_D$, and its easy to see these will be meet- and join-completions respectively. 

Thus Theorem \ref{T:Galois} applies and produces a Galois polarity $(e_{X_D}, e_{Y_D}, \bR)$. We certainly have $\bR\subseteq \R_D$, by Theorem \ref{T:Galois}(5), but does the other inclusion also hold? The answer, in general, is no. To see this we borrow \cite[Example 2.2]{GJP13}, and lean heavily on the discussion at the start of Section 4 in that paper. The MacNeille completion of a poset $P$ can be constructed from the Galois polarity $(e_{\cF_p}, e_{\cI_p}, \R_l)$, where $e_{\cF_p}:P\to \cF_p$ and $e_{\cI_p}:P\to \cI_p$ are the natural embeddings into the sets of principal upsets and downsets of $P$ respectively. 

Consider the poset $P= \omega \cup \omega^\partial$. I.e. $P$ is made up of a copy of $\omega$ below a disjoint copy of the dual $\omega^\partial$. Then $\cN(P)=\omega\cup\{z\}\cup \omega^\partial$. I.e. $P$ with an additional element above $\omega$ and below $\omega^\partial$. Let $X = \cF_p$ and let $Y=\cI_p$, so $\cN(P)$ is generated by $(e_X, e_Y, \R_l)$. Let the complete Galois polarity arising from Theorem \ref{T:delta} be $(e_{X_D}, e_{Y_D}, \R_D)$, where $X_D \cong X\cup \{z_X\}$ (a copy of $\omega\cup\omega^\partial$ with $z_X$ in between), and $Y_D\cong Y\cup\{z_Y\}$ (a copy of $\omega\cup \omega^\partial$ $z_Y$ in between). Now, to produce $\cN(P)$ it is necessary that $z_X\R_D z_Y$, but $(z_X, z_Y)\notin \bR_l$, and thus $\R_D\neq \bR$ in this case.

It also follows from this that the $\Delta_1$-completion generated by $\Pe$ may not be isomorphic to the one generated by $(i_{X}\circ e_X, i_{Y}\circ e_Y, \bR)$ from Theorem \ref{T:Galois}, even when $i_{X}$ and $i_{Y}$ are meet- and join-completions respectively (just set $i_X = h_X$ and $i_Y = h_Y$ and use the example above).        

\subsection{A categorical perspective}\label{S:cat}

Here we assume some familiarity with the basic concepts of category theory. The standard reference is \cite{MacL98}, and an accessible introduction can be found in \cite{Lein14}.

\begin{defn}We define a pair of categories, $\pol$ and $\del$ as follows:
\begin{itemize}
\item[$\pol$:] Let $\pol$ be the category of Galois polarities and polarity morphisms (from Definition \ref{D:polHom}).
\item[$\del$:] Let $\del$ be the category whose objects are $\Delta_1$-completions, and whose maps are commuting squares as described in Definition \ref{D:mapCat}, with the additional property that $g_Q$ in that definition is a complete lattice homomorphism.
\end{itemize}
\end{defn}

\begin{defn}\label{D:FG}
Let $\Gamma:\pol \to \del$ and $\Delta :\del\to \pol$ be defined as follows:
\begin{itemize}
\item[$\Gamma$:] Let $\Gamma:\pol\to\del$ be the map that takes a Galois polarity $G$ to the $\Delta_1$-completion it generates (recall Definition \ref{D:polToDel}), and takes a polarity morphism $h:G\to G'$ to the map between $\Delta_1$-completions described in Figure \ref{F:Fmap}, where $\DM(\psi_h)$ is the unique complete lattice homomorphism lift of the map $\psi_h$ from Definition \ref{D:psi} to the respective MacNeille completions, as described in \cite[Theorem 3.1]{Ern91a}.  
\item[$\Delta $:] Let $\Delta :\del\to \pol$ be the map that takes a $\Delta_1$-completion $d:P\to D$ to the Galois polarity it generates (recall Definition \ref{D:delToPol}), and takes a $\Delta_1$-completion morphism as in Figure \ref{F:Dmap} to the triple $(g|_{X_{D_1}}, f, g|_{Y_{D_1}})$, where, for example, $g|_{X_{D_1}}$ is (modulo isomorphism) the restriction of $g$ to $X_{D_1}$. 
\end{itemize}
\end{defn}
That $\Gamma$ and $\Delta $ are well defined is proved as part of the next theorem.
\begin{thm}\label{T:adj} 
$\Gamma$ and $\Delta $ are functors and form an adjunction $\Gamma\dashv \Delta $.
\end{thm}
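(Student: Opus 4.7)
The plan is to establish functoriality of $\Gamma$ and $\Delta$ separately, then exhibit the adjunction via a unit–counit pair built from Lemmas \ref{L:GalEmb} and \ref{L:delIsom}.

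First I would handle $\Gamma$. Given a polarity morphism $h:G\to G'$, Theorem \ref{T:polHom}(2) says $\psi_h$ is Galois-stable, hence in particular cut-stable, so the cited Ern\'e theorem lifts it uniquely to a complete lattice homomorphism $\DM(\psi_h):\DM(X\uplus Y)\to \DM(X'\uplus Y')$. Commutativity of the square in Figure \ref{F:Fmap} reduces to the identity $\psi_h\circ\gamma=\gamma'\circ h_P$ (property \ref{G1}) composed with the MacNeille embeddings. Uniqueness of MacNeille lifts plus Lemma \ref{L:compositions} gives $\DM(\psi_{h_2\circ h_1})=\DM(\psi_{h_2})\circ \DM(\psi_{h_1})$ and preservation of identities, so $\Gamma$ is a functor.

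Next, for $\Delta$: given a morphism $(f,g):d_1\to d_2$ in $\del$, I first check that $g$ restricts to order-preserving maps $X_{D_1}\to X_{D_2}$ and $Y_{D_1}\to Y_{D_2}$. This uses that $g$ is a complete lattice homomorphism, that $X_{D_i}$ is meet-generated by $d_i[P_i]$, and that $g\circ d_1=d_2\circ f$: any $x=\bw d_1[S]\in X_{D_1}$ is sent to $\bw d_2[f(S)]\in X_{D_2}$. To see the triple $(g|_{X_{D_1}},f,g|_{Y_{D_1}})$ is a polarity morphism I would appeal to Theorem \ref{T:GalStab}, by verifying the restriction $g|_{X_{D_1}\uplus Y_{D_1}}$ is Galois-stable. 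Properties \ref{G1}--\ref{G3} follow from the arguments just given; cut-stability is inherited because, using Lemma \ref{L:delIsom} to identify $D_i\cong\DM(X_{D_i}\uplus Y_{D_i})$, the restriction is the unique cut-stable map whose MacNeille lift is $g$. Functoriality is then componentwise.

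For the adjunction $\Gamma\dashv \Delta$, I would define the unit $\eta_G:G\to \Delta\Gamma G=G_{d_G}$ to be the polarity embedding supplied by Lemma \ref{L:GalEmb} (taking $h_P=\id_P$), and the counit $\epsilon_d:\Gamma\Delta d=d_{G_d}\to d$ to be the isomorphism supplied by Lemma \ref{L:delIsom}. Naturality of $\eta$ amounts to the square
\[
\xymatrix{G\ar[r]^{h}\ar[d]_{\eta_G} & G'\ar[d]^{\eta_{G'}} \\ \Delta\Gamma G\ar[r]_{\Delta\Gamma h} & \Delta\Gamma G'}
\]
commuting, which unpacks to $g|_{X_{\DM}}\circ h_X=h_X'\circ g|_{X_{\DM}}$ on appropriate copies and follows by chasing the definitions of $\psi_h$ and its MacNeille lift restricted back via $\Delta$; naturality of $\epsilon$ is checked similarly, noting that $\Gamma\Delta$ sends $(f,g)$ essentially back to the same underlying complete homomorphism after identifying $D\cong \DM(X_D\uplus Y_D)$. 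For the triangle identities, $\Delta\epsilon_d\circ \eta_{\Delta d}=\id_{\Delta d}$ reduces to saying that the embedding of $G_d$ into $G_{d_{G_d}}$ composed with the restriction of the canonical isomorphism $d_{G_d}\cong d$ is the identity, which holds by construction in Lemma \ref{L:delIsom}. The second triangle, $\epsilon_{\Gamma G}\circ \Gamma\eta_G=\id_{\Gamma G}$, follows because $\Gamma\eta_G$ is the MacNeille lift of the Galois-stable map realising the embedding $X\uplus Y\hookrightarrow X_{\DM}\uplus Y_{\DM}$, and composing with $\epsilon_{\Gamma G}$ (the MacNeille isomorphism collapsing these completions) yields the identity.

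The main obstacle will be verifying condition \ref{M3} for the restriction triple in the functoriality of $\Delta$; routing this through Galois-stability and Theorem \ref{T:GalStab} sidesteps a direct combinatorial check. A secondary difficulty is bookkeeping the canonical identifications $D\cong\DM(X_D\uplus Y_D)$ in the triangle identities, since $\epsilon_d$ hides these inside an isomorphism rather than an equality.
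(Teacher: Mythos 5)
Your proposal is correct in outline and follows the paper's skeleton for $\Gamma$ (cut-stability of $\psi_h$ plus Ern\'e's lifting theorem and Lemma \ref{L:compositions}) and for the unit (the embeddings of Lemma \ref{L:GalEmb}, with the same naturality check). It diverges from the paper in two places. First, for functoriality of $\Delta$ you verify that the restriction of $g$ to $X_{D_1}\uplus Y_{D_1}$ is Galois-stable and then invoke Theorem \ref{T:GalStab}, whereas the paper checks \ref{M1}--\ref{M3} for the triple $(g|_{X_{D_1}},f,g|_{Y_{D_1}})$ directly. Your route is legitimate and arguably cleaner, but your justification of cut-stability is circular as stated: ``the restriction is the unique cut-stable map whose MacNeille lift is $g$'' presupposes that a cut-stable map lifting to $g$ exists, which is exactly what is to be shown. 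The fact you need --- that a complete lattice homomorphism between MacNeille completions restricts to a cut-stable map on join-and-meet-dense subposets --- is true, but proving it requires the same computation the paper performs for \ref{M3}: set $z_1=\bw g^{-1}(x'^\uparrow)$ and $z_2=\bv g^{-1}(y'^\downarrow)$, deduce $z_1\not\leq z_2$ from complete meet/join preservation, and use join-density of $X_{D_1}$ and meet-density of $Y_{D_1}$ to find the witnessing pair; so no work is actually saved, only repackaged. (You should also confirm, via the uniqueness clause of Theorem \ref{T:GalStab}(1), that the morphism $h_\psi$ it returns really is the restriction triple.) Second, you establish the adjunction by exhibiting a unit--counit pair and checking both triangle identities, while the paper uses the universal property of the unit; these are equivalent formulations, and the paper's choice spares it the naturality check for the counit and the second triangle identity at the price of the uniqueness argument for the induced map $\tau\circ\Gamma h$, which it settles by the uniqueness of complete-homomorphism extensions of cut-stable maps --- the same ingredient your second triangle identity implicitly relies on.
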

\begin{proof}For easier reading we will break the proof down into discrete statements which obviously add up to a proof of the claimed result.
\begin{enumerate}[$\bullet$]
\item ``\emph{$\Gamma$ is well defined}". $\Gamma$ is certainly well defined on objects. For maps, as Theorem \ref{T:polHom} says $\psi_h$ will be cut-stable, \cite[Theorem 3.1]{Ern91a} says that $\DM(\psi_h)$ will be a complete lattice homomorphism. Moreover, the conjunction of these theorems also implies that the diagram in Figure \ref{F:Fmap} commutes. 
\item ``\emph{$\Gamma$ is a functor}". To see that $\Gamma$ lifts identity maps to identity maps note that the identity morphism on $\Pe$ clearly lifts via Definition \ref{D:psi} to the identity on $X\uplus Y$, and since taking MacNeille completions is functorial for cut-stable maps (see \cite[Corollary 3.3]{Ern91a}), that $\Gamma$ maps identity morphisms appropriately follows immediately.  Similarly, given polarity morphisms $h_1$ and $h_2$, if $h_2\circ h_1$ exists we have $\psi_{h_2\circ h_1} = \psi_{h_2}\circ \psi_{h_1}$, by Lemma \ref{L:compositions}, so $\Gamma$ respects composition as the MacNeille completion functor does. 

\item ``\emph{$\Delta $ is well defined}". $\Delta $ is also clearly well defined on objects. Consider now a $\Delta_1$-completion map as in Figure \ref{F:Dmap}. Abusing notation by, for example, identifying $X_{D_1}$ with its isomorphic copy inside $D_1$, we must show that $g|_{X_{D_1}}:X_{D_1}\to X_{D_2}$, that $g|_{Y_{D1}}:Y_{D_1}\to Y_{D_2}$, and that $(g|_{X_{D_1}}, f, g|_{Y_{D1}})$ satisfies the conditions of Definition \ref{D:polHom}. First, to lighten the notation define $g_X = g|_{X_{D_1}}$, and $g_Y = g|_{Y_{D_1}}$. Now, let $x\in X_{D_1}$. Then there is $S\subseteq P_1$ with $x = \bw d_1[S]$. So, using the commutativity of the diagram in Figure \ref{F:Dmap} and the fact that $g$ is a complete lattice homomorphism, we have 
\[g(x) = g(\bw d_1[S]) = \bw g\circ d_1[S] = \bw d_2\circ f[S].\]
This proves that $g_{X_{D_1}}$ does indeed have codomain $X_{D_2}$, and the analogous result for $g_Y$ holds by duality.  Since $g_X$, $f$, and $g_Y$ are all obviously order preserving, it remains only to check the conditions of Definition \ref{D:polHom}:
\begin{enumerate}[]
\item \ref{M1}: This follows immediately from the definitions of $g_X$ and $g_Y$ and the commutativity of the diagram in Figure \ref{F:Dmap}.
\item \ref{M2}: Without loss of generality, we can think of the $\iota$ maps as inclusion functions, and so the claim is just the statement that $y\leq_{D_1} x\ra g(y)\leq_{D_2} g(x)$, and thus is true as $g$ is order preserving. 
\item \ref{M3}: Abusing notation in the same way as before, let $x'\in X_{D_2}$, let $y'\in Y_{D_2}$, and suppose $x'\not\leq y' \in D_2$. Using the completeness of $D_1$, let $z_1 = \bw g^{-1}(x'^\uparrow)$, and let $z_2 = \bv g^{-1}(y'^\downarrow)$. It follows easily from the fact that $g$ is a complete lattice homomorphism that $x'\leq_{D_2} g(z_1)$ and $g(z_2)\leq_{D_2} y'$, so if $z_1\leq_{D_1} z_2$, then $x'\leq_{D_2} g(z_1)\leq_{D_2} g(z_2) \leq_{D_2} y'$, as $g$ is order preserving. Thus to avoid contradiction we must have $z_1\not\leq_{D_1} z_2$. 

As $X_{D_1}$ and $Y_{D_1}$ are, respectively, join- and meet-dense in $D_1$, there is $x\in X_{D_1}\cap z_1^\downarrow$ and $ y\in Y_{D_1}\cap z_2^\uparrow$ with $x\not\leq_{D_1} y$. Now, as $x\leq_{D_1} z_1=\bw g^{-1}(x'^\uparrow)$ we have $g_X^{-1}(x'^\uparrow)\subseteq z_1^\uparrow\subseteq x^\uparrow$. Thus (i) holds for this choice of $x$, and (ii) holds for $y$ by a dual argument.

Moreover, suppose $a\in X_{D_1}$, and that $g(a)\leq_{D_2} y'$. Then $a\in g^{-1}(y'^\downarrow)$, and so $a\leq_{D_1} z_2$, by definition of $z_2$, and consequently $a\leq_{D_1} y$. Thus (iii) holds, and (iv) is dual. That (v) holds is automatic from the choice of $x$ and $y$.
\end{enumerate}
\item ``\emph{$\Delta $ is a functor}".  $\Delta $ obviously sends identity maps to identity maps, and almost as obviously respects composition.
\item ``$\Gamma\dashv \Delta $". The unit $\eta$ is defined so that its components are the embeddings of $G= \Pe$ into $\Delta \Gamma(G) = G_{d_D}=(e_{X_\DM},e_{Y_\DM},\R_\DM)$ described in Lemma \ref{L:GalEmb}. From this lemma, if we again abuse notation by treating $X_\DM$ and $Y_\DM$ as if they are subsets of $\DM(X\uplus Y)$, we have 
\[\eta_G = (e\circ\iota_X, \id_P,e\circ\iota_Y).\] We first show that $\eta$ is indeed a natural transformation. Let $G_1 = (e_{X_1}, e_{Y_1}, \R_1)$ and $G_2 = (e_{X_2}, e_{Y_2}, \R_2)$ be Galois polarities, and let $g =(g_X, g_P, g_Y)$ be a polarity morphism from $G_1$ to $G_2$. We aim to show that the diagram in Figure \ref{F:etaSquare} commutes. 

Define $h_{X_1} = e_1\circ \iota_{X_1}$ to be the standard embedding $X_1$ of into $X_{\DM_1}$, and similarly define $h_{Y_1} = e_1\circ \iota_{Y_1}$. Make analogous definitions for $h_{X_2}$ and $h_{Y_2}$. Define $g_X^+:X_{\DM_1}\to X_{\DM_2}$ to be the $X$ component of $\Delta \Gamma g$, which is the restriction of $\DM(\psi_g)$ to $X_{\DM_1}$. Define $g_Y^+$ analogously. Consider the diagram in Figure \ref{F:etaNat}.  The inner squares commute by definition of $g$, and that the outer squares commute can be deduced from the commutativity of the diagram in Figure \ref{F:Xcom}, the commutativity of whose right square follows from the commutativity of the diagram in Figure \ref{F:polHom}. 

Now, $\Delta \Gamma g\circ \eta_{G_1}$ is the polarity morphism $(g^+_X\circ h_{X_1},g_P,g^+_Y\circ h_{Y_1})$, and $\eta_{G_2}\circ g$ is the polarity morphism $(h_{X_2}\circ g_X, g_P, h_{Y_2}\circ g_Y)$, and these are equal by the commutativity of the diagram in Figure \ref{F:etaNat}. Thus $\eta$ is a natural transformation. 

Let $G=\Pe$ be a Galois polarity extending $P$. We will show that $\eta_G$ has the appropriate universal property (see e.g. \cite[Theorem 2.3.6]{Lein14}). Let $d:Q\to D$ be a $\Delta_1$-completion, and let $h=(h_X, h_P, h_Y): G \to \Delta (d)$ be a polarity morphism. We must find a map $g:\Gamma(G)\to d$ such that $\Delta g\circ\eta_G = h$, and show that $g$ is unique with this property. 

By Lemma \ref{L:GalEmb}, the map $\eta_{\Delta (d)}$ is an isomorphism, as $\Delta (d)$ is a complete Galois polarity, and is the unique such polarity isomorphism that is the identity on $Q$. 
Consider the diagram in Figure \ref{F:hCom}. The upper triangle commutes as $\eta$ is a natural transformation (see Figure \ref{F:hCom2}).  By Lemma \ref{L:delIsom} there is an isomorphism between $\Gamma\Delta (d)$ and $d$ (as extensions of $Q$), which we define to be $\tau = (\id_Q,t)$, and illustrate in Figure \ref{F:tau}. As functors preserve isomorphisms, $\Delta\tau$ is an isomorphism. Since $\Delta\tau$ is also the identity on $Q$, it must be the inverse of $\eta_{\Delta(d)}$. Thus, noting Figure \ref{F:hCom}, $\tau\circ \Gamma h:\Gamma(G)\to d$ has the property that 
\[\Delta (\tau\circ \Gamma h)\circ \eta_G = \Delta \tau\circ \Delta \Gamma h\circ \eta_G = \eta_{\Delta (d)}^{-1}\circ \Delta \Gamma h\circ\eta_G = h.\]
We must show that $\tau\circ \Gamma h$ is unique with this property, so let $f:\Gamma(G)\to d$ be a $\del$ morphism, and suppose $f$ is defined by $(f_1,f_2)$ for some $f_1:P\to Q$ and $f_2: \DM(X\uplus Y)\to D$. Suppose also that $\Delta f\circ\eta_G = h$. Then $\Delta  f = (f_2|_{X_\DM},f_1,f_2|_{Y_\DM})$, and, continuing to identify $X_\DM$ and $Y_\DM$ with the corresponding subsets of $\DM(X\uplus Y)$, we have 
\[\Delta  f\circ \eta_G = (f_2|_{X_\DM}\circ e\circ\iota_X, f_1\circ \id_P, f_2|_{Y_\DM}\circ e\circ \iota_Y).\]

Suppose $\tau\circ \Gamma h$ is defined by the maps $t_1:P\to Q$ and $t_2:\DM(X\uplus Y)\to D$. Then 
\[\Delta (\tau\circ \Gamma h)\circ \eta_G = (t_2|_{X_\DM}\circ e\circ\iota_X, t_1\circ \id_P, t_2|_{Y_\DM}\circ e\circ \iota_Y).\]
Since we have assumed that both $\Delta  f\circ \eta_G$ and $\Delta (\tau\circ \Gamma h)\circ \eta_G $ are equal to $h$, we must have $f_1= t_1 = h_P$, and, moreover, $f_2$ must agree with $t_2$ on $e[X\uplus Y]$. But $f_2|_{e[X\uplus Y]}$ is cut-stable, as it is the restriction of a complete lattice homomorphism between MacNeille completions, and so it extends uniquely to a complete lattice homomorphism (by \cite[Theorem 3.1]{Ern91a}). Thus $f_2 = t_2$, and so $f = \tau\circ \Gamma h$ as required. It follows that $\Gamma\dashv \Delta $ as claimed.            
\end{enumerate}
\end{proof}

\begin{figure}[htbp]
\begin{minipage}[b]{0.49\textwidth}
\[\xymatrix{ 
P_1\ar[d]_{h_P}\ar[r]^{\gamma_1} & X_1\uplus Y_1\ar[r]^{e_1}\ar[d]^{\psi_h} & \DM(X_1\uplus Y_1)\ar[d]^{\DM(\psi_h)} \\
P_2\ar[r]\ar[r]_{\gamma_2} & X_2\uplus Y_2\ar[r]_{e_2} & \DM(X_2\uplus Y_2)
}\]  
\caption{}
\label{F:Fmap}
  \end{minipage}
	 \hfill
  \begin{minipage}[b]{0.49\textwidth}
  \[\xymatrix{ 
P_1\ar[r]^{d_1}\ar[d]_{f} & D_1\ar[d]^g \\
P_2\ar[r]_{d_2} & D_2
}\]  
\caption{}
\label{F:Dmap}
  \end{minipage}
\end{figure}

\begin{figure}[htbp]
\begin{minipage}[b]{0.30\textwidth}
\[\xymatrix{ 
G_1\ar[r]^{\eta_{G_1}}\ar[d]_g & \Delta \Gamma(G_1)\ar[d]^{\Delta \Gamma g} \\
G_2\ar[r]_{\eta_{G_2}} & \Delta \Gamma(G_2)
}\] 
\caption{}
\label{F:etaSquare}
\end{minipage}
\hfill
\begin{minipage}[b]{0.69\textwidth}
\[\xymatrix{ 
X_{\DM_1}\ar[d]_{g^+_X} & X_1\ar[d]\ar[l]_{h_{X_1}}\ar[d]^{g_X} & \ar[l]_{e_{X_1}}P_1\ar[r]^{e_{Y_1}}\ar[d]^{g_P} & Y_1\ar[r]^{h_{Y_1}}\ar[d]^{g_Y} & Y_{\DM_1}\ar[d]^{g^+_Y}\\
X_{\DM_2} & X_2\ar[l]^{h_{X_2}}  & \ar[l]^{e_{X_2}}P_2\ar[r]_{e_{Y_2}} & Y_2\ar[r]_{h_{Y_2}}  & Y_{\DM_2}
}\] 
\caption{}
\label{F:etaNat}
\end{minipage}
\end{figure}

\begin{figure}[htbp]
\begin{minipage}[b]{0.49\textwidth}
\[\xymatrix{ 
\DM(X_1\uplus Y_1)\ar[d]_{\DM(\psi_g)} & \ar[l]_{e_1}X_1\uplus Y_1\ar[d]^{\psi_g} & \ar[l]_{\iota_{X_1}}X_1\ar[d]^{g_X} \\
\DM(X_2\uplus Y_2) & \ar[l]^{e_2}X_2\uplus Y_2 & \ar[l]^{\iota_{X_2}}X_2 
}\] 
\caption{}
\label{F:Xcom}
\end{minipage}
\hfill
\begin{minipage}[b]{0.49\textwidth}
\[\xymatrix{ 
G\ar[d]_h\ar[rr]^{\eta_G} & & \Delta \Gamma(G)\ar[d]^{\Delta \Gamma h} \\
\Delta (d)\ar[rr]_{\eta_{\Delta (d)}} & & \Delta \Gamma\Delta (d)
}\] 
\caption{}
\label{F:hCom2}
\end{minipage}
\end{figure}

\begin{figure}[htbp]
\begin{minipage}[b]{0.49\textwidth}
\[\xymatrix{ 
G\ar[ddrr]_h\ar[drr]^{\eta_{\Delta (d)}\circ h}\ar[rr]^{\eta_G} & & \Delta \Gamma(G)\ar[d]^{\Delta \Gamma h} \\
& & \Delta \Gamma\Delta (d) \\
& & \Delta (d)\ar[u]_{\eta_{\Delta(d)}(\cong)}
}\] 
\caption{}
\label{F:hCom}
\end{minipage}
\hfill
\begin{minipage}[b]{0.49\textwidth}
\[\xymatrix{ 
Q\ar[rrr]^{\Gamma\Delta (d) \phantom{c}= \phantom{c}e_d\circ\gamma_d\phantom{ccc}}\ar[drrr]_d & & & \DM(X_D\uplus Y_D)\ar[d]^{t(\cong)} \\
& & & D
}\] 
\caption{}
\label{F:tau}
\end{minipage}
\end{figure}

The components of counit of the adjunction between $\Gamma$ and $\Delta $ are provided by the isomorphisms produced in Lemma \ref{L:delIsom}. Thus the subcategory, $\textrm{Fix}(\Gamma\Delta )$,  of $\del$ is just $\del$ itself. The canonical categorical equivalence between $\textrm{Fix}(\Delta \Gamma)$ and $\textrm{Fix}(\Gamma\Delta )$ produces a categorical version of the correspondence in Theorem \ref{T:delta}.

We end with a universal property for Galois polarities whose relation is the minimal $\R_l$.

\begin{prop}\label{P:univ}
Let $G=(e_X,e_Y,\R_l)$ be a Galois polarity extending $P$, let $Q$ be a poset, and let $f:X\to Q$ and $g:Y\to Q$ be order preserving maps such that $f\circ e_X = g\circ e_Y$. Let $\preceq$ be the unique 3-preorder for $G$.  Then the following are equivalent:
\begin{enumerate}
\item $y\preceq x \ra g(y)\leq f(x) $ for all $x\in X$ and for all $y\in Y$.
\item There is a unique order preserving map $u:X\uplus Y\to Q$ such that the diagram in Figure \ref{F:univ} commutes.
\end{enumerate}
\end{prop}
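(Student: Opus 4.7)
The plan is to handle the direction $(2) \Rightarrow (1)$ by an immediate diagram chase: given $y \preceq x$, we have $\iota_Y(y) \leq \iota_X(x)$ in $X \uplus Y$ by construction of the canonical partial order, so applying the order preserving map $u$ and using the commutativity relations $u \circ \iota_X = f$ and $u \circ \iota_Y = g$ yields $g(y) = u(\iota_Y(y)) \leq u(\iota_X(x)) = f(x)$.

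For the substantive direction $(1) \Rightarrow (2)$, I will define $u$ piecewise by $u(\iota_X(x)) = f(x)$ and $u(\iota_Y(y)) = g(y)$. Uniqueness comes essentially for free: since $X \uplus Y$ is the quotient of $X \cup Y$ by the symmetric part of $\preceq$, every element of $X \uplus Y$ lies in $\iota_X[X] \cup \iota_Y[Y]$, and any order preserving $u$ making the diagram commute is forced to agree with this prescription. The main obstacle is well-definedness, which requires that $\iota_X(x) = \iota_Y(y)$ implies $f(x) = g(y)$. For this I will invoke Proposition \ref{P:commute2}(3), which says $\iota_X[X] \cap \iota_Y[Y] = \gamma[P]$, so $\iota_X(x) = \gamma(p) = \iota_X(e_X(p))$ for some $p \in P$. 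Injectivity of $\iota_X$ (Galois polarities are 2-coherent, so $\iota_X$ is an order embedding) forces $x = e_X(p)$, and dually $y = e_Y(p)$; the assumed identity $f \circ e_X = g \circ e_Y$ then yields $f(x) = g(y)$.

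Order preservation will follow from a four-case analysis on which of $\iota_X[X]$, $\iota_Y[Y]$ contain the endpoints of an inequality $z_1 \leq z_2$ in $X \uplus Y$. The two pure cases reduce, using that $\iota_X$ and $\iota_Y$ are order embeddings, to order preservation of $f$ and $g$. The mixed case $\iota_Y(y) \leq \iota_X(x)$ is exactly hypothesis (1). The remaining case $\iota_X(x) \leq \iota_Y(y)$ is where the hypothesis $\R = \R_l$ earns its keep: by Definition \ref{D:Rl} it supplies $p \in P$ with $x \leq_X e_X(p)$ and $e_Y(p) \leq_Y y$, whence
\[f(x) \leq f(e_X(p)) = g(e_Y(p)) \leq g(y),\]
using $f \circ e_X = g \circ e_Y$ together with order preservation of $f$ and $g$. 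This completes the plan; the only genuinely delicate step is the well-definedness argument, and it is precisely Proposition \ref{P:commute2}(3) that makes it routine.
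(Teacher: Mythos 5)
Your proposal is correct and follows essentially the same route as the paper: the $(2)\Rightarrow(1)$ direction is the same diagram chase, and the $(1)\Rightarrow(2)$ direction rests on the same four-case monotonicity analysis, with the case $\iota_X(x)\leq\iota_Y(y)$ handled identically via the definition of $\R_l$ and the case $\iota_Y(y)\leq\iota_X(x)$ by hypothesis (1). The only (cosmetic) difference is that the paper first shows the piecewise map is monotone on the preordered set $X\cup_\preceq Y$ and lets well-definedness on the quotient follow from antisymmetry in $Q$, whereas you establish well-definedness directly via Proposition \ref{P:commute2}(3); both are fine.
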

\begin{proof}
Suppose (1) holds. We define $u':X\cup_\preceq Y \to Q$ by 
\[u'(z) = \begin{cases} f(z) \text{ if } z\in X \\ 
g(z) \text{ if } z\in Y\end{cases}\]
We show that $u'$ is order preserving with respect to the preorder $\preceq$ and the order on $Q$. Let $z_1\preceq z_2\in X\cup_{\preceq} Y$. If $z_1$ and $z_2$ are both in $X$, or both in $Y$, then that $u'(z_1)\leq u'(z_2)$ follows immediately from the definition of $u$ and the fact that both $f$ and $g$ are order preserving. If $z_1= x\in X$, and $z_2=y\in Y$, then there is $p\in e_X^{-1}(x^\uparrow)\cap e_Y^{-1}(y^\downarrow)$, and so  $f(x)\leq g(y)$ by the assumption that $f\circ e_X = g\circ e_Y$. If $z_1= y\in Y$ and $z_2 = x\in X$, then that $g(y)\leq f(x)$ is true by (1), and so $u'(y)\leq u'(x)$ as required. Define $u$ by $u(\iota_X(x)) = f(x)$ and $u(\iota_Y(y))= g(y)$. Then $u$ is well defined and order preserving by the monotonicity of $u'$, and that $u$ is unique with these properties is automatic from the required commutativity of the diagram.

Conversely, suppose (2) holds. Then 
\[y\preceq x \implies \iota_Y(y)\leq \iota_X(x)\implies u\circ \iota_Y(y)\leq u\circ \iota_X(x)\implies g(y)\leq f(x)\]
as required.  
\end{proof}

Proposition \ref{P:univ} says that, if $e_X$ and $e_Y$ are fixed meet- and join-extensions respectively, the pair of maps $(\iota'_X, \iota'_Y)$ arising from $(e_X,e_Y,\R_l)$ is initial in the category whose objects are pairs of order preserving maps $(f:X\to Q, g:Y\to Q)$ such that $f\circ e_X = g\circ e_Y$ and $y\preceq x \ra g(y)\leq f(x)$, and whose maps are commuting triangles as in Figure \ref{F:univMorph} (here $h$ is order preserving, and commutativity means $f_2 = h\circ f_1$ and $g_2 = h\circ g_1$). In particular this category contains all $(\iota_X,\iota_Y)$ arising from Galois polarities $\Pe$ based on $e_X$ and $e_Y$.

\begin{figure}[!tbbp]
  \begin{minipage}[b]{0.49\textwidth}
  \[\xymatrix{ 
P\ar[r]^{e_Y}\ar[d]_{e_X} & Y\ar[d]^{\iota_Y}\ar@/^/[ddr]^g\\
X\ar[r]_{\iota_X}\ar@/_/[drr]_f & X\uplus Y\ar[dr]^u \\
& & Q
}\] 
\caption{}
\label{F:univ}
  \end{minipage}
  \hfill
  \begin{minipage}[b]{0.49\textwidth}
  \[\xymatrix{ 
(X,Y)\ar[r]^{(f_1,g_1)}\ar[dr]_{(f_2,g_2)} & Q_1\ar[d]^h\\
& Q_2
}\] 
\caption{}
\label{F:univMorph}
  \end{minipage}
\end{figure}

\section*{Acknowledgement}
The author would like to thank the anonymous referees for their thorough reports, and for suggestions which considerably improved the clarity of the paper.

\bibliographystyle{abbrv}

\end{document}